\documentclass[letter,11pt]{article}
\usepackage[margin=1in]{geometry}
\usepackage{amsmath,amsfonts,amssymb,amsthm,enumitem}
\usepackage{mathtools,thmtools,thm-restate}
\usepackage{color}
\usepackage{hyperref}
\usepackage{xspace}
\usepackage{subcaption}
\usepackage{framed}
\usepackage{tikz}
\usetikzlibrary{shapes.geometric,arrows,decorations.markings,positioning}

\theoremstyle{plain}
\newtheorem{open-question}{Open Question}
\newtheorem{main-theorem}{Theorem}[section]
\newtheorem{main-conjecture}[main-theorem]{Conjecture}
\newtheorem{main-lemma}[main-theorem]{Lemma}
\newtheorem{main-corollary}[main-theorem]{Corollary}
\newtheorem{main-claim}[main-theorem]{Claim}
\theoremstyle{definition}
\newtheorem{main-definition}[main-theorem]{Definition}

\newcommand{\eofasym}{\ensuremath\triangleleft}
\makeatletter
\newenvironment{algorithm-description}[2][Algorithm]
{\def\@currentlabel{#2}\gdef\eofa{\hfill\eofasym}\paragraph{#1 #2.}}
{\eofa\bigskip}
\makeatother
\newcommand\eofahere{\ifmmode{\tag*{\eofasym}}\else\eofa\fi\gdef\eofa{}}

\theoremstyle{plain}
\newtheorem{theorem}{Theorem}[subsection]
\newtheorem{thm}{Theorem}[subsection]

\newtheorem{lemma}[theorem]{Lemma}
\newtheorem{claim}[theorem]{Claim}

\newtheorem*{theorem*}{Theorem}
\newtheorem*{corollary*}{Corollary}
\newtheorem*{lemma*}{Lemma}

\newtheorem{corollary}[theorem]{Corollary}
\theoremstyle{definition}
\newtheorem{definition}[theorem]{Definition}
\newtheorem{dfn}[theorem]{Definition}
\newtheorem*{definition*}{Definition}

%\SetKwProg{Function}{Function}{}{}

\providecommand{\unknown}{secret\xspace}
\providecommand{\anunknown}{a secret\xspace}

\DeclareMathOperator*{\E}{\mathbb{E}}

\providecommand{\Opt}{\mathrm{Opt}}
% superscripts in order to accommodate subscripts in the section on high min-entropy
\providecommand{\rent}[1][]{r^H_{#1}}
\providecommand{\rhuf}[1][]{r^\Opt_{#1}}
\providecommand{\uent}[1][]{u^H_{#1}}
\providecommand{\uhuf}[1][]{u^\Opt_{#1}}
\providecommand{\cF}{\mathcal{F}}
\providecommand{\cQ}{\mathcal{Q}}
\providecommand{\qcomp}{\cQ_{\prec}}
\providecommand{\qeq}{\cQ_{=}}
%{[0,1]^{\rotatebox[origin=c]{90}{$\circlearrowleft$}}} % this notation can probably be improved; the usual one is \mathbb{T} (was: \widetilde{[0,1]})
\newcommand{\supp}[1]{\ensuremath{\mathrm{supp}(#1)}\xspace}
\providecommand{\sym}[1][X_n]{\mathrm{Sym}(#1)}

\providecommand{\Mid}{\mathrm{mid}}

\providecommand{\xmid}{{x_{\Mid}}}
\providecommand{\pimax}{\pi_{\max}}
\providecommand{\xmax}{x_{\max}}
\providecommand{\eps}{\epsilon}

\providecommand{\cone}[1]{\mathfrak{C}(#1)}
\providecommand{\epsparam}{\beta} % \epsilon as half the fraction of heavy elements in the section on \uhuf(n,0)
\newcommand{\maxdensof}[1]{\ensuremath{\rho(#1)}\xspace}
\newcommand{\reldensof}[2]{\ensuremath{\rho_{#2}(#1)}\xspace}
\newcommand{\smallestdens}[1]{\ensuremath{\rho_{\min}(#1)}\xspace}
\newcommand{\smallestdenstitle}[1]{\texorpdfstring{\smallestdens{#1}}{rho\_min(#1)}}
\newcommand{\perm}{\ensuremath{\sigma}\xspace}

\newcommand{\halfcircle}{window\xspace}

\newif\ifdraft
\draftfalse

\ifdraft
\newcommand{\ariel}[1]{{\color{blue}{\textit{#1 --- ariel}}}}
\newcommand{\yuvalf}[1]{{\color{green}{\textit{#1 --- yuval filmus}}}}

\newcommand{\new}[1]{{\color{red}{\textit{#1}}}}
\newcommand{\yuvald}[1]{{\color{purple}{\textit{#1 --- yuval dagan}}}}

\else
\newcommand{\ariel}[1]{}
\newcommand{\yuvalf}[1]{}
\newcommand{\new}[1]{}
\newcommand{\yuvald}[1]{}

\fi

%ariel macros
\newcommand{\defeq}{\ensuremath{:=}\xspace}

\newcommand{\set}[1]{\ensuremath{\{#1\}}\xspace}
\newcommand{\dist}{\ensuremath{\mu}\xspace}
\newcommand{\range}{\ensuremath{X_n}\xspace}
\newcommand{\groundset}{\ensuremath{X_n}\xspace}
\newcommand{\dyadof}[1]{\ensuremath{\mathrm{Spl}(#1)}\xspace}
\newcommand{\dyad}{\ensuremath{D}\xspace}
\newcommand{\hitter}{\ensuremath{{\cal Q}}\xspace}
\newcommand{\dyadhitter}{dyadic hitter\xspace}

\newcommand{\opt}[1]{\ensuremath{\mathrm{Opt}(#1)}\xspace} %the optimal avg time for distribution #1

\newcommand{\costt}[2]{\ensuremath{c(#1,#2)}\xspace}%the cost of best algorithm using questions from #1 on distribution #2
%the cost of algorithm #1 on distribution #2

\title{Twenty (simple) questions}
\author{Yuval Dagan, Yuval Filmus, Ariel Gabizon, and Shay Moran}

\begin{document}

%%%For the figure
\tikzset{
itria/.style={
  draw,dashed,shape border uses incircle,
  isosceles triangle, isosceles triangle stretches=true, shape border rotate=90, inner sep=0,font=\small, yshift=-3cm}
}
%%%

\maketitle

\begin{abstract}
A basic combinatorial interpretation of Shannon's entropy function is via the ``20 questions'' game. This cooperative game is played by two players, Alice and Bob: Alice picks a distribution $\pi$ over the numbers $\{1,\ldots,n\}$, and announces it to Bob. She then chooses a number $x$ according to $\pi$, and Bob attempts to identify $x$ using as few Yes/No queries as possible, on average.

An optimal strategy for the ``20 questions'' game is given by a Huffman code for $\pi$: Bob's questions reveal the codeword for $x$ bit by bit. This strategy finds $x$ using fewer than $H(\pi)+1$ questions on average. However, the questions asked by Bob could be arbitrary. In this paper, we investigate the following question:
\emph{Are there restricted sets of questions that match the performance of Huffman codes, either exactly or approximately?}

Our first main result shows that for every distribution $\pi$, Bob has a strategy that uses only questions of the form ``$x < c$?'' and ``$x = c$?'', and uncovers $x$ using at most $H(\pi)+1$ questions on average, matching the performance of Huffman codes in this sense.
We also give a natural set of $O(rn^{1/r})$ questions that achieve a performance of at most $H(\pi)+r$, and show that $\Omega(rn^{1/r})$ questions are required to achieve such a guarantee.

Our second main result gives a set $\cQ$ of $1.25^{n+o(n)}$ questions such that for every distribution $\pi$, Bob can implement an \emph{optimal} strategy for $\pi$ using only questions from $\cQ$. We also show that $1.25^{n-o(n)}$ questions are needed, for infinitely many $n$.
If we allow a small slack of $r$ over the optimal strategy, then roughly $(rn)^{\Theta(1/r)}$ questions are necessary and sufficient.
\end{abstract}

\section{Introduction} \label{sec:introduction}

A basic combinatorial and operational interpretation of Shannon's entropy function, which is often taught in introductory courses on information theory, is via the ``20 questions'' game (see for example the well-known textbook~\cite{DBLP:books/daglib/0016881}).
This game is played between two players, Alice and Bob: Alice picks a distribution $\pi$ over a (finite) set of objects $X$, and announces it to Bob. Alice then chooses an object $x$ according to $\pi$, and Bob attempts to identify the object using as few Yes/No queries as possible, on average.%\footnote{Another basic interpretation of Shannon's entropy function is via the transmission problem, in which Alice wishes to transmit to Bob a message $x$ drawn from a distribution $\pi$ over $X$, over a noiseless binary channel, using as few bits as possible on average.
What questions should Bob ask?
An optimal strategy for Bob is to compute a Huffman code for~$\pi$, and then
follow the corresponding decision tree: his first query, for example, asks whether $x$ lies in the left subtree of the root. 
While this strategy minimizes the expected number of queries, the queries themselves could be arbitrarily complex;
already for the first question, Huffman's algorithm draws from an exponentially large reservoir of potential queries (see Theorem~\ref{thm:minimum-redundancy-lb} for more details).

Therefore, it is natural to consider variants of this game in which
the set of queries used is restricted;
for example, it is plausible that Alice and Bob would prefer to use queries
that (i) can be communicated efficiently (using as few bits as possible), 
and (ii) can be tested efficiently (i.e.\ there is an efficient encoding scheme for elements of $X$ and a fast algorithm that given $x\in X$ and a query $q$ as input, determines whether $x$ satisfies the query $q$).

We summarize this with the following meta-question, which guides this work: \emph{Are there ``nice''  sets of queries $\cQ$ such that for any distribution, there is a ``high quality'' strategy that uses only queries from $\cQ$?}

Formalizing this question depends on how ``nice'' and ``high quality'' are quantified.
We consider two different benchmarks for sets of queries:

\begin{enumerate}
	\item \textbf{An information-theoretical benchmark}: A set of queries $\cQ$ has \emph{redundancy $r$} if for every distribution $\pi$ there is a strategy using only queries from $\cQ$ that finds $x$ with at most $H(\pi)+r$ queries on average when $x$ is drawn according to $\pi$.
	\item \textbf{A combinatorial benchmark}: A set of queries $\cQ$ is \emph{$r$-optimal} (or has \emph{prolixity $r$}) if for every distribution $\pi$ there is a strategy using queries from $\cQ$ that finds $x$ with at most $\opt{\pi}+r$ queries on average when $x$ is drawn according to $\pi$, where $\opt{\pi}$ is the expected number of queries asked by an optimal strategy for $\pi$ (e.g.\ a Huffman tree).
\end{enumerate}

Given a certain redundancy or prolixity, we will be interested in sets of questions achieving that performance that (i) are as small as possible, and (ii) allow efficient construction of high quality strategies which achieve the target performance.
In some cases we will settle for only one of these properties, and leave the other as an open question.

\paragraph{Information-theoretical benchmark.}

Let $\pi$ be a distribution over $X$.
A basic result in information theory is that every algorithm that reveals an unknown element $x$ drawn according to $\pi$ (in short, $x \sim \pi$)
using Yes/No questions must make at least $H(\pi)$ queries on average.
Moreover, there are algorithms that make at most $H(\pi) + 1$ 
queries on average, such as Huffman coding and Shannon--Fano coding.
However, these algorithms may potentially use arbitrary queries.

Are there restricted sets of queries that match the performance of $H(\pi)+1$ queries on average, for {\bf every} distribution $\pi$?
Consider the setting in which $X$ is linearly ordered (say $X=[n]$, with its natural ordering: $1<\cdots<n$).
Gilbert and Moore~\cite{GilbertMoore}, in a result that paved the way to arithmetic coding, showed that two-way comparison queries (``$x<c$?'') almost fit the bill: they achieve a performance of at most $H(\pi)+2$ queries on average.
Our first main result shows that the optimal performance of $H(\pi)+1$ can be achieved by allowing also equality queries (``$x=c?$''):

\begin{theorem*}[restatement of Theorem~\ref{thm:redundancy-1}]
For every distribution $\pi$ there is a strategy that uses only comparison and equality queries which finds $x$ drawn from $\pi$ with at most $H(\pi)+1$ queries on average.
Moreover, this strategy can be computed in time $O(n\log n)$.
\end{theorem*}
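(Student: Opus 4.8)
The plan is to build, for each $\pi$, a binary decision tree $T_\pi$ whose internal nodes are labelled either by a comparison query ``$x<c$?'' or an equality query ``$x=c$?'', whose leaves are the singletons $\{1\},\dots,\{n\}$, and whose expected root-to-leaf depth $\sum_i \pi(i)\,d_i$ is at most $H(\pi)+1$. It is convenient to phrase the target in ``local'' terms: for an internal node $v$ let $W_v$ be the total $\pi$-mass of the set of elements that reach $v$, and let $p_v$ be the fraction of that mass flowing to the left child. Since $T_\pi$ is a full binary tree, $\sum_i \pi(i)d_i=\sum_{v\ \mathrm{internal}}W_v$, while the standard telescoping identity gives $H(\pi)=\sum_{v\ \mathrm{internal}}W_v\, H_2(p_v)$, where $H_2$ denotes binary entropy. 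Subtracting, the redundancy of $T_\pi$ equals $\sum_{v}W_v\bigl(1-H_2(p_v)\bigr)$, so the whole problem reduces to constructing a comparison/equality tree for which
\[
\sum_{v\ \mathrm{internal}} W_v\bigl(1-H_2(p_v)\bigr)\ \le\ 1 .
\]
A perfectly balanced split ($p_v=1/2$) costs nothing, so the task is to make the splits balanced ``on average'', charging only the unavoidably skewed ones.

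\textbf{The construction.} We refine the Gilbert--Moore / Shannon--Fano--Elias scheme by inserting equality queries exactly where a comparison would be forced to be badly skewed. Recursively, given a contiguous block $R=\{a,\dots,b\}$ of current mass $W$: (i) if some element $m\in R$ is \emph{dominant}, i.e.\ $\pi(m)>W/2$, we first ask ``$x=m$?'': the Yes-branch is the leaf $\{m\}$, and on the No-branch the surviving mass is $W-\pi(m)<W/2$ and forms at most two contiguous blocks, which we separate by one comparison and recurse on; (ii) otherwise every element of $R$ has mass at most $W/2$, and we split $R$ by a comparison at the index where the running prefix mass first crosses $W/2$, choosing between placing the ``straddling'' element on the left or on the right so as to make $p_v$ as close to $1/2$ as possible (and, when that straddler is itself a constant fraction of $W$, peeling it off first with an equality query so that the two flanking blocks are each of mass at most $W/2$). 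Every equality query we issue is therefore applied to an element carrying a large share of the block's mass --- precisely the situation in which a comparison alone cannot avoid a skewed split --- and every comparison we keep is balanced to within a controlled factor. The tree $T_\pi$ is the result at $R=[n]$.

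\textbf{The analysis (the crux).} The heart of the proof is establishing the displayed bound $\sum_v W_v(1-H_2(p_v))\le 1$. I would prove it by induction on $|R|$, maintaining an invariant of the form ``the redundancy of the subtree built for $R$ is at most $\mathrm{mass}(R)$'' (so that the root, with mass $1$, yields redundancy at most $1$), possibly strengthened to account for how heavy the heaviest element of $R$ is. For the dominant case this goes through cleanly: after ``$x=m$?'' (split fraction $q=\pi(m)/W>1/2$) and the subsequent comparison, applying the induction hypothesis to the flanking blocks reduces the required inequality to the elementary fact that $H_2(q)\ge 2(1-q)$ for $q\in[1/2,1]$ (for a dominant element at a block endpoint, the weaker $H_2(q)\ge 1-q$ suffices). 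The genuinely delicate part --- and the main obstacle --- is the non-dominant case: a single comparison split is nearly free when the straddler is light (the loss is quadratic in its relative mass) but must be handled by a carefully chosen equality query when the straddler is heavy, and the thresholds separating these regimes, together with the exact form of the inductive invariant, have to be calibrated so that all the per-node charges $W_v(1-H_2(p_v))$ sum to at most $1$ rather than to a larger constant. This is exactly the point at which Gilbert and Moore's ``$+2$'' must be pushed down to ``$+1$'', and it is where the equality-versus-comparison decision cannot be made wastefully. (An alternative route to the same end is to realize the construction in the interval picture of Shannon--Fano--Elias --- assign element $i$ a dyadic codeword of length $\lceil\log_2(1/\pi(i))\rceil$ and use one equality query to resolve each pair of elements whose codeword intervals collide --- but the accounting there is, if anything, more delicate.)

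\textbf{Running time.} The claimed $O(n\log n)$ bound follows easily once the construction is fixed: sorting is free here since $X=[n]$ is already ordered, prefix sums are computed in $O(n)$ time, and a range-maximum structure for locating dominant elements is built in $O(n\log n)$ time with $O(1)$ queries. The recursion produces a tree with $n-1$ internal nodes, hence $O(n)$ recursive calls, each of which only needs one range-maximum query and one binary search over prefix sums (to find its split point), i.e.\ $O(\log n)$ time; summing gives $O(n\log n)$ overall.
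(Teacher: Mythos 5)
Your algorithm is essentially the paper's Algorithm~$A_t$ (equality query on the heaviest element when it exceeds a threshold, most-balanced comparison otherwise), and your local decomposition of the redundancy as $\sum_v W_v(1-H_2(p_v))$ is a correct and useful reformulation. But the proposal has a genuine gap precisely where you flag ``the crux'': the analysis of the non-dominant case is not carried out, and the invariant you propose --- ``the redundancy of the subtree built for $R$ is at most $\mathrm{mass}(R)$'' --- is false as stated. At a comparison node of mass $W$ the charge is $W(1-H_2(p_v))$, and the two children already consume allowances summing to $W$, so the invariant fails whenever $p_v\neq 1/2$; it must be strengthened to a function of the heaviest element's relative mass $p=\pimax/W$ (in the paper this is the game value $r_t(p)$), and proving that this function is nonpositive is the entire difficulty. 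The paper's argument sums the charges over the whole chain of comparison nodes between consecutive equality queries, uses the fact that $p$ grows geometrically along that chain (Lemma~\ref{l414}) to show the resulting series $\sum_i S(p_i)$ converges, and offsets it against the strictly negative terminal payoff $f(p)=1-h(p)-p$ at the first equality query (Lemma~\ref{l417}, using the monotonicity of $S$ and convexity of $F$ from Lemma~\ref{la01}). None of this is replaced by anything in your sketch.

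Relatedly, your dominance threshold of $W/2$ is almost certainly the wrong one. The paper's analysis requires the threshold $t$ to satisfy two competing constraints: $t\geq 0.23$ so that $f(p)\leq 0$ on $[t,1]$, and $t\leq 1/3$ (in fact a numerical condition satisfied near $t=0.3$) so that the accumulated comparison charges before the first equality query stay below $|f|$. With threshold $1/2$ a non-dominant block can force a comparison as skewed as $(1/4,3/4)$, and it is exactly this regime that your ``peel off the heavy straddler'' heuristic would need to handle with a quantitative charging argument you have not supplied. So while the construction is in the right family and the elementary inequality $H_2(q)\geq 2(1-q)$ you invoke for the dominant case is correct, the proof of the $H(\pi)+1$ bound is missing its essential quantitative core.
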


In a sense, this result gives an affirmative answer to our main question above.
The set of comparison and equality queries (first suggested by Spuler~\cite{Spuler}) arguably qualifies as ``nice'':
linearly ordered universes appear in many natural and practical settings (numbers, dates, names, IDs)
in which comparison and equality queries can be implemented efficiently.
Moreover, from a communication complexity perspective, Bob can communicate a comparison/equality query
using just $\log_2 n + 1$ bits (since there are just $2n$ such queries). This is an exponential improvement over the $\Omega(n)$ bits he would need to communicate had he used Huffman coding.

We extend this result to the case where $X$ is a set of vectors of length $r$, $\vec x = (x_1,\ldots,x_r)$, by showing that there is a strategy using entry-wise comparisons (``$x_i<c?$'') and entry-wise equalities  (``$x_i=c?$'') that achieves redundancy $r$: 

\begin{theorem*}[restatement of Theorem~\ref{thm:redundancy-r-bounds}]
Let $X$ be the set of vectors of length $r$ over a linearly ordered universe.
For every distribution $\pi$ there is a strategy that uses only entry-wise comparison queries and entry-wise equality queries and finds $\vec x\sim\pi$ with at most $H(\pi) + r$ queries. %Moreover, this strategy can be computed in time $O(|X|\log |X|)$. 
\end{theorem*}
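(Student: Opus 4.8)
The plan is to reduce to the one-dimensional case established in Theorem~\ref{thm:redundancy-1}, revealing the coordinates of $\vec x = (x_1,\ldots,x_r)$ one at a time, from left to right, paying at most one extra query per coordinate; the chain rule for entropy then packages these $r$ unit losses neatly on top of $H(\pi)$.

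In more detail, let $\vec X=(X_1,\ldots,X_r)\sim\pi$, and let $\pi_1$ be the marginal distribution of $X_1$. Because the underlying universe is linearly ordered, Theorem~\ref{thm:redundancy-1} supplies a strategy identifying $x_1$ using only comparison queries ``$x_1<c$?'' and equality queries ``$x_1=c$?'' --- exactly the entry-wise queries on coordinate $1$ --- with at most $H(\pi_1)+1$ queries on average. Having learned $x_1$, Bob now knows the conditional distribution $\pi_2(\cdot\mid x_1)$ of $X_2$, which is again a distribution over a linearly ordered universe, and applies Theorem~\ref{thm:redundancy-1} to it, spending at most $H\bigl(\pi_2(\cdot\mid x_1)\bigr)+1$ entry-wise queries on coordinate $2$ on average over $X_2\sim\pi_2(\cdot\mid x_1)$. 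Iterating, once the prefix $x_{<i}=(x_1,\ldots,x_{i-1})$ has been revealed he uncovers $x_i$ with at most $H\bigl(\pi_i(\cdot\mid x_{<i})\bigr)+1$ entry-wise queries on coordinate $i$ on average; after $r$ rounds the full vector $\vec x$ is determined.

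Averaging over $\vec x\sim\pi$, linearity of expectation and the law of total expectation give that the expected number of queries is at most
\[
\sum_{i=1}^{r}\E_{x_{<i}\sim\pi}\!\left[H\bigl(\pi_i(\cdot\mid x_{<i})\bigr)+1\right]
=\sum_{i=1}^{r}\bigl(H(X_i\mid X_1,\ldots,X_{i-1})+1\bigr)
=H(\pi)+r,
\]
where the final equality is the chain rule for Shannon entropy. This establishes the bound.

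Since the one-dimensional building block is already in hand, there is no substantial obstacle; the main things to get right are bookkeeping points. One must check that (i) each invocation of Theorem~\ref{thm:redundancy-1} uses only queries of the permitted entry-wise form (it does, since a comparison or equality query on the $i$-th coordinate of $\vec x$ is by definition an entry-wise query), and that (ii) the per-coordinate losses of $+1$ are incurred in expectation over the appropriate conditional distribution $\pi_i(\cdot\mid x_{<i})$, so that the chain rule applies only after averaging over the prefix. A minor edge case is degenerate conditionals: if $\pi_i(\cdot\mid x_{<i})$ is concentrated on a single value, no query is needed for coordinate $i$, which only helps and remains consistent with the $H+1$ bound, so the estimate above is unaffected.
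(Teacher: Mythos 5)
Your proposal is correct and is essentially identical to the paper's proof: reveal the coordinates in order, apply the one-dimensional algorithm of Theorem~\ref{thm:redundancy-1} to each conditional distribution $Z_i \mid Z_1\ldots Z_{i-1}$, and sum the $H(\cdot)+1$ bounds via the chain rule to get $H(\pi)+r$. No gaps.
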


\providecommand{\find}{\mathit{find}}

The theorem is proved by applying the algorithm of the preceding theorem to uncover the vector $\vec{x}$ entry by entry. As a corollary, we are able to determine almost exactly the minimum size of a set of queries that achieves redundancy $r\geq 1$.
In more detail, let $\uent(n,r)$ denote the minimum size of a set of queries $\cQ$ such that for every distribution $\pi$ on $[n]$
there is a strategy using only queries from $\cQ$ that finds $x$ with at most $H(\pi)+r$ queries on average, when $x$ is drawn according to $\pi$.

\begin{corollary*}[Theorem~\ref{thm:redundancy-r-bounds}]
For every $n,r\in\mathbb{N}$,
\[
 \frac{1}{e} r  n^{1/ r } \leq \uent(n,r) \leq 2  r  n^{1/ r }.
\]
\end{corollary*}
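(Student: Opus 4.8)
The plan is to derive both bounds essentially from the vector theorem (Theorem~\ref{thm:redundancy-r-bounds}) together with a matching lower bound argument. For the upper bound, I would identify $[n]$ with a set of vectors of length $r$ over a universe of size $m \defeq \lceil n^{1/r} \rceil$, by choosing any injection $[n] \hookrightarrow [m]^r$ (for instance, writing $x-1$ in base $m$). Applying Theorem~\ref{thm:redundancy-r-bounds} to the pushforward of $\pi$ under this injection yields a strategy using only entry-wise comparison and equality queries achieving redundancy $r$. The number of such queries is at most $r \cdot (2m - 1) < 2rm \le 2r(n^{1/r} + 1)$; a small amount of care (or the observation that the bound is only interesting when $n^{1/r} \ge $ some constant, with the remaining finitely many cases handled by the trivial bound $\uent(n,r) \le n$ for $r \ge 1$) brings this to $2 r n^{1/r}$. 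Thus $\uent(n,r) \le 2 r n^{1/r}$.

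For the lower bound, I would count how many distinct distributions — or rather, distinct ``hard instances'' — a set of $q$ queries can possibly serve with redundancy $r$. The cleanest route is a counting/volume argument: consider the family of distributions that are (nearly) uniform on a subset $S \subseteq [n]$ of size $k$. For such a distribution $H(\pi) \approx \log_2 k$, so a strategy with redundancy $r$ must find $x$ in at most $\log_2 k + r$ queries on average; by a Markov/Kraft-type argument most elements of $S$ must be identifiable by a \emph{decision path} of length at most roughly $\log_2 k + r + O(1)$, i.e.\ by the conjunction of that many queries from $\cQ$. Counting: the number of subsets of $[n]$ reachable as ``the set of elements consistent with a given path of length $\ell$'' is at most $\binom{q}{\ell} 2^\ell$ — I choose which $\le \ell$ queries appear and the yes/no answer to each. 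To cover all $\binom{n}{k}$ choices of $S$ (each of which forces a genuinely different behavior), we need $\sum_{\ell \le \log_2 k + O(r)} \binom{q}{\ell}2^{\ell} \gtrsim \binom{n}{k}$, which after choosing $k \approx n^{1/r}$ and simplifying the binomial estimates forces $q \ge \frac{1}{e} r n^{1/r}$.

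The constant $1/e$ will come out of optimizing the choice of $k$ (or of the block length in a product construction) against the estimate $\binom{q}{\ell} \le (eq/\ell)^\ell$: setting $\ell \approx \log_2 k$ and $k^{1/\log_2 k}$-type manipulations, the exponent balances at $q = \Theta(r n^{1/r})$ with leading constant $1/e$. An alternative and perhaps more transparent derivation of the lower bound is the ``product'' one dual to the upper bound: a set of $q$ queries on $[n]$, restricted to behave coordinate-wise, can only distinguish points of $[m]^r$ if $q$ is large in each coordinate; but I suspect the intended proof is the direct counting argument above, since the corollary is attributed to Theorem~\ref{thm:redundancy-r-bounds} and a self-contained counting lemma.

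The main obstacle I anticipate is the lower bound, specifically making the step ``redundancy $r$ on near-uniform distributions forces short decision paths for most points'' fully rigorous while keeping the constant as good as $1/e$: one must handle the averaging carefully (a strategy may spend many queries on a few low-probability points and few on the rest), and one must ensure the family of hard distributions is large enough — $\binom{n}{k}$ of them — that no single set of $q = o(r n^{1/r})$ queries can serve them all. Pinning down the exact value of $k$ (equivalently, the coordinate size $m$) that makes the two sides of the counting inequality meet at leading constant $1/e$, rather than some worse constant, is the delicate part; everything else is bookkeeping with binomial coefficients and the entropy estimate $H(\text{uniform on }k) = \log_2 k$.
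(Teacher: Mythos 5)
Your upper bound is essentially the paper's: identify $[n]$ with $[m]^r$ for $m=\lceil n^{1/r}\rceil$ and apply the vector theorem; the paper gets the clean constant $2$ by discarding three redundant queries per coordinate (``$\vec x_i\prec 1$?'', ``$\vec x_i\prec 2$?'', and the last comparison), leaving $r(2\lceil n^{1/r}\rceil-3)\le 2rn^{1/r}$, which is exactly the ``small amount of care'' you flagged.

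The lower bound, however, has a genuine gap, and the approach as set up cannot work. Your necessary condition $\sum_{\ell\le\log_2 k+O(r)}\binom{q}{\ell}2^{\ell}\gtrsim\binom{n}{k}$ with $k\approx n^{1/r}$ is already refuted by the upper bound: for $q=2rn^{1/r}$ the left side is only quasi-polynomial in $n$ (for fixed $r$), while $\binom{n}{k}\ge (n^{1-1/r})^{n^{1/r}}$ is exponential in $n^{1/r}$, so a valid proof of that inequality would show $\uent(n,r)$ is exponentially large, contradicting the first half of the corollary. The underlying problem is that near-uniform distributions on size-$k$ sets are not the right hard instances, and the claim that each $S$ ``forces a genuinely different behavior'' witnessed by a distinct path-consistent set is not justified: a leaf for $x$ in a tree for $S$ only needs its consistent set $C$ to satisfy $C\cap S=\{x\}$, and a single $C$ serves exponentially many pairs $(S,x)$. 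The paper's hard instances are instead the near-point-masses $\pi(x)=1-\epsilon$, $\pi(y)=\epsilon/(n-1)$: since $H(\pi)=\tilde O(\epsilon)$, redundancy $r$ forces the specific element $x$ to be reached by a path of length at most $\lfloor r\rfloor$, hence \emph{every} $x\in[n]$ must be identifiable (non-deterministically) by at most $r$ questions from $\cQ$. Encoding each $x$ by the identifying questions and their answers gives $n\le\binom{q}{\le r}2^{r}$, which only yields the constant $\tfrac{1}{2e}$; the paper then removes the $2^r$ factor via the polynomial method (the delta function of each signature vector $u_x\in\{0,1\}^{q}$ is a multilinear polynomial of degree at most $r$ over $\mathbb{F}_2$, so $n\le\binom{q}{\le r}\le(eq/r)^r$), which is where the constant $\tfrac{1}{e}$ actually comes from --- not from optimizing $k$ as you guessed.
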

Obtaining this tight estimate of $\uent(n,r) = \Theta(r  n^{1/ r })$ hinges on adding equality queries;
had we used only entry-wise comparison queries and the Gilbert--Moore algorithm instead, the resulting upper bound would have been $\uent(n,r) = O\bigl(rn^{2/r}\bigr)$, which is quadratically worse than the truth.

\medskip

\paragraph{Combinatorial benchmark.}

The analytical properties of the entropy function make $H(\pi)$ a standard benchmark for the average number of bits needed to describe an element $x$ drawn from a known distribution $\pi$, and so it is natural to use it as a benchmark for the average number of queries needed to find $x$ when it is drawn according to $\pi$.
However, there is a conceptually simpler, and arguably more natural, benchmark:
$\opt{\pi}$ --- the average number of queries that are used by a best strategy for $\pi$ (several might exist), such as one generated by Huffman's algorithm.

Can the optimal performance of Huffman codes be matched \emph{exactly}? 
Can it be achieved without using all possible queries? Our second main result answers this in the affirmative:
\begin{theorem*}[restatement of Theorem~\ref{thm:minimum-redundancy-ub} and Theorem~\ref{thm:minimum-redundancy-lb}]
For every $n$ there is a set $\cQ$ of $1.25^{n+o(n)}$ queries such that for every distribution over $[n]$, there is a strategy using only queries from $\cQ$ which matches the performance of the optimal (unrestricted) strategy \emph{exactly}.
Furthermore, for infinitely many $n$, at least $1.25^{n-o(n)}$ queries are required to achieve this feat.
\end{theorem*}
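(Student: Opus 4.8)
The plan is to reduce both directions to a single combinatorial quantity: the exponential size of the family $\cF_n$ of subsets of $[n]$ that can appear as a node of an \emph{optimal} decision tree for some distribution. I identify a question with its ``Yes''-set $S\subseteq[n]$ and a strategy with an optimal binary tree $T$ in which each internal node $v$, whose possible set $A_v$ is the set of leaf-labels below $v$, is split by some $S$ with $S\cap A_v$ equal to the left child's leaf-set. Two questions that agree on $A_v$ are interchangeable there, so a set $\cQ$ implements $T$ iff every split of $T$ is the restriction to the relevant $A_v$ of some member of $\cQ$; at the root, where $A_v=[n]$, this forces exact membership ($S\in\cQ$, up to complement). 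Thus $\cQ$ must contain, for every distribution whose optimal tree is unique, the actual subsets occurring in that tree, while for the remaining distributions there is freedom both to choose which optimal tree to use and to extend each split off its domain.

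First I would prove a structural dichotomy for a \emph{generic} distribution $\pi$ (distinct point masses, and all weights produced by Huffman's algorithm distinct), whose optimal tree $T_\pi$ is then unique: in $T_\pi$ every internal node $v$ either has a leaf among its two children, or else both children carry weight at least $\tfrac13 w(v)$. This comes from the Huffman exchange/sibling property applied to the last merge that builds $v$ from its children $a,b$ (say $w(a)\le w(b)$): since the heavier child is built earlier, at the moment $b$ is created the lighter child $a$ is already a node, so $b$'s two halves --- being the two lightest active nodes then --- each weigh at most $w(a)$, giving $w(b)\le 2w(a)$; the threshold $\tfrac13$ is exactly where this bites. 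Since $T_\pi$ restricted below any node is again a generic Huffman tree, iterating gives a recursive ``weight-window'' description of the nodes of $T_\pi$ --- which is what the cone/window formalism is meant to capture --- and in particular limits, for each $k$, which $k$-subsets of $[n]$ are admissible, by constraining both the admissible subtree shapes on $k$ leaves and the way the $k$ chosen elements may be interleaved by weight with the other $n-k$. A careful count of these configurations, grouped by $k$, should show the number of admissible size-$k$ subsets is $\binom{n}{k}$ times a factor decaying like $c^k$ with $1+c=\tfrac54$, so $|\cF_n|=2^{o(n)}\sum_k\binom{n}{k}c^k = 2^{o(n)}(1+c)^n = 1.25^{\,n+o(n)}$.

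For the upper bound I take $\cQ$ to be a canonical extension of $\cF_n$, so $|\cQ|=1.25^{\,n+o(n)}$. Every generic $\pi$ is handled by its unique $T_\pi$. For a non-generic $\pi$, pick generic $\pi_\eps\to\pi$; by pigeonhole infinitely many $\pi_\eps$ share a single optimal tree $T^*$, and since the cost of a fixed tree is linear (hence continuous) in the distribution and the set of distributions for which $T^*$ is optimal is closed, $T^*$ is optimal for $\pi$ as well; all its splits are restrictions of admissible subsets, hence implementable by $\cQ$. The lower bound --- and the part I expect to be hardest --- is to show these $1.25^{\,n-o(n)}$ admissible subsets genuinely cannot be avoided, for infinitely many $n$. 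I would realize a sub-family of admissible subsets $S$ of size $1.25^{\,n-o(n)}$, each by an explicit \emph{generic} distribution $\pi_S$ whose unique optimal tree places $S$ at a node with possible set $[n]$, so that $S$ or its complement must literally lie in $\cQ$; a more flexible version, needed if the cleanest construction covers too few subsets, replaces this by an intersection-closure argument showing that a single question can serve the forced splits of only few of these distributions. The two delicate obstacles are (i) making the weight-window characterization exactly tight, with correct bookkeeping of the ``leaf-child'' exceptions, so that $\cF_n$ is neither too small (breaking the lower bound) nor too large (breaking the upper bound); and (ii) verifying that the interleaving constraints can actually be satisfied for a $1.25^{\,n-o(n)}$-sized family of target subsets while keeping each $\pi_S$ generic, so that the forced questions are genuinely distinct and the off-domain extension freedom cannot be used to collapse them onto a smaller question set.
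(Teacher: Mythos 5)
Your plan has two load-bearing claims that are both false, and together they sink both directions of the argument.

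First, the ``unique optimal tree'' premise. Optimality of a tree $T$ for $\pi$ depends only on its depth profile, i.e.\ on the associated dyadic distribution $\tau_i = 2^{-T(x_i)}$, not on which subsets label the internal nodes. So genericity of $\pi$ (distinct masses, distinct Huffman weights) does not make the optimal tree unique: for a near-uniform distribution on $4$ elements with distinct masses, all three pairings of the elements into a first question of size $2$ yield optimal trees. In general, what a distribution $\pi$ forces on $\cQ$ is only that $\cQ$ intersect the family $\dyadof{\mu}$ of sets of measure exactly $1/2$ under the Huffman rounding $\mu$ of $\pi$ --- a family that is typically exponentially large. Hence no single distribution ever pins down a specific question, and a lower bound cannot proceed by exhibiting distributions each of which forces one subset into $\cQ$. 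The correct mechanism (the paper's Theorem~\ref{thm:hittersizeisdensity}) is a density/symmetrization argument: one exhibits a single hard dyadic distribution whose splitter family $\dyadof{\mu}$ occupies only a $\rho \approx 1.25^{-n}$ fraction of each cardinality level, permutes it in all $n!$ ways, and observes that any fixed question hits at most a $\rho$ fraction of the permuted copies, so $|\cQ| \geq 1/\rho$. Your fallback sentence about ``a single question can serve the forced splits of only few of these distributions'' gestures at this, but it is the entire content of the lower bound, not a patch.

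Second, the counting of $\cF_n$. The family of subsets that appear as a node of \emph{some} optimal tree for \emph{some} distribution is essentially all of $2^{[n]}$: for any nonempty proper $S$ one can build a dyadic distribution giving $S$ total mass exactly $1/2$ (e.g.\ a geometric ``tail'' $1/4, 1/8, \dots$ with the last weight doubled inside $S$, and likewise on the complement), whence $S$ is a legitimate root question of an optimal tree. So $|\cF_n| = 2^n - 2$, and taking $\cQ$ to be (an extension of) $\cF_n$ gives the trivial bound. The $1.25^{n+o(n)}$ upper bound is not an enumeration of admissible subsets but a probabilistic hitting-set construction: the paper proves that \emph{every} dyadic splitter family $\dyadof{\mu}$ has relative density at least $1.25^{-n-o(n)}$ at some cardinality level (by partitioning the support into blocks of equal probability plus small correction sets and counting balanced selections within each block), and then $1.25^{n+o(n)}$ random sets per level hit all $n^{O(n)}$ such families simultaneously. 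Relatedly, the constant $1.25$ does not come from any $1/3$-balance property of Huffman subtrees; it is $\max_{\beta} 2^{h(\beta)-2\beta} = 2^{h(1/5)-2/5}$, arising from the ratio $\binom{2\beta n}{\beta n}/\binom{n}{\beta n}$ for the hard distribution ``uniform on $2\beta n$ elements plus a dyadic tail.'' Your route, as written, cannot produce this constant on either side.
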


One drawback of our construction is that it is randomized.
Thus, we do not consider it particularly ``efficient''  nor ``natural''. It is interesting to find an explicit set $\cQ$ that achieves this bound. Our best explicit construction is:
\begin{theorem*}[restatement of Theorem~\ref{thm:cone}]
For every $n$ there is an explicit set $\cQ$ of $O(2^{n/2})$ queries such that for every distribution over $[n]$, there is a strategy using only queries from $\cQ$ which matches the performance of the optimal (unrestricted) strategy \emph{exactly}.
Moreover, we can compute this strategy in time $O(n^2)$.
\end{theorem*}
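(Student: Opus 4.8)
The plan is to name an explicit family $\cQ$ of $O(2^{n/2})$ subsets of $[n]$ and to show that any optimal strategy can be \emph{normalized} so that every query it uses lies in $\cQ$. Fix the balanced label‑partition $[n]=H_1\sqcup H_2$ with $H_1=\{1,\dots,\lceil n/2\rceil\}$, and let $\cQ$ be the collection of all subsets of $H_1$ together with all subsets of $H_2$ (and, if it is convenient to have $\cQ$ closed under complement, their complements as well); then $|\cQ|\le 2^{\lceil n/2\rceil+1}+2^{\lfloor n/2\rfloor+1}=O(2^{n/2})$. First I would recall the standard correspondence between strategies and decision trees: a strategy is a rooted binary tree with leaves labelled by $[n]$, and at an internal node $v$ with ``live set'' $A(v)$ (the set of leaves in $v$'s subtree) the query $S$ asked at $v$ must satisfy $S\cap A(v)\in\{\ell(v),r(v)\}$, where $\ell(v),r(v)$ are the leaf sets of the two subtrees of $v$. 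A one‑line check shows that if one of the two subtrees of $v$ is \emph{monochromatic} — its leaf set is contained in $H_1$ or in $H_2$ — then a valid $S$ can be taken inside $\cQ$ (namely $S$ equal to that leaf set itself). So the whole theorem reduces to the following statement, which I will call the Structural Lemma: \emph{for every distribution $\pi$ over a finite set $S$ and every partition $S=H_1\sqcup H_2$, there is an optimal tree for $\pi$ in which every internal node has a monochromatic child.}

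The second step is to reduce the Structural Lemma, by induction on $|S|$, to the special case that \emph{the root} is resolved — i.e.\ one of the root's two subtrees is monochromatic. Indeed, once the root has a monochromatic child, that entire child subtree has all its leaves in one class, so all of its internal nodes are trivially resolved; the other child subtree is itself an optimal tree for the conditional distribution on a strictly smaller ground set (a standard fact), to which the inductive hypothesis applies with the induced partition, and we may replace it by the tree the hypothesis provides without changing the cost. Note the induced partition can be very unbalanced — but unbalanced partitions help rather than hurt (if, say, $|H_1|=1$ every subtree's intersection with $H_1$ is automatically $\emptyset$ or $H_1$), which is a useful sanity check on the statement.

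For the ``root is resolved'' step I would start from a Huffman tree produced by the greedy merging rule and try to push resolvedness down from the recursion. Let $a,b$ be the two minimum‑probability elements, which form a sibling cherry in the Huffman tree. If $a$ and $b$ lie on the same side of the partition, the argument is clean: contract $\{a,b\}$ to a super‑element placed on that same side, apply the inductive hypothesis to the contracted instance, and expand the super‑element back into the cherry — monochromaticity of every subtree is preserved because the cherry sits entirely on one side.

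The hard part — and where I expect to spend most of the effort — is the case where $a$ and $b$ \emph{straddle} the partition, so their forced sibling cherry is not monochromatic. Here the naive induction fails: the optimal tree for the contracted instance may route the super‑element into the root's monochromatic subtree, and re‑expanding it pollutes that subtree with an element of the other class (a cherry always has two singleton — hence monochromatic — children, so the cherry node itself is fine, but its ancestors need not be). Overcoming this seems to need either a strengthened inductive hypothesis that controls where the merged element can be placed, or a direct cost‑preserving restructuring of the tree in the spirit of the usual ``swap subtrees of equal weight'' moves for Huffman trees, that drags same‑class leaves together near the top; small cases (already an $8$‑leaf dyadic distribution, where the entropy‑forced depth classes leave only a choice of \emph{tree shape}, not of labelling) show that relabelling within a fixed shape does not always suffice, so genuine shape modifications are unavoidable. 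Once the Structural Lemma is proved in this constructive way, the $O(n^2)$ running time follows: compute a Huffman tree in $O(n\log n)$ time and then apply the normalization moves, each of which touches $O(n)$ nodes, $O(n)$ times.
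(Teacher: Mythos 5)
Your reduction is sound up to and including the Structural Lemma: the family you name is (up to complementation, which is free for Yes/No queries) the cone $\cone{X_{\lceil n/2\rceil}}$ used in the paper, and it is indeed enough to exhibit, for every distribution, an optimal tree in which each internal node has a monochromatic child. But your proof of the Structural Lemma has a genuine gap, and it is exactly the one you flag yourself: when the two lightest elements $a,b$ straddle the partition, the Huffman-merge induction breaks, and you offer only the hope of ``a strengthened inductive hypothesis'' or ``genuine shape modifications'' without supplying either. Since for many distributions every optimal tree is forced to pair $a$ with $b$ as siblings, this case cannot be sidestepped, so the argument as written does not close.

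The paper closes it by changing the object of the induction from tree shapes to distributions. By Lemma~\ref{lem:optimal-dyadic} one may assume the target distribution $\mu$ is dyadic, and then a tree is optimal if and only if every question splits the current conditional distribution into two parts of $\mu$-measure exactly $1/2$ (this is the content of the proof of Lemma~\ref{lem:hitterIsOptimal}); the conditional distribution after such a split is again dyadic, so all that is needed is a \emph{local} existence statement: every non-constant dyadic $\mu$ admits a set of measure exactly $1/2$ that is a subset or a superset of $X_{\lceil n/2\rceil}$. This follows from Lemma~\ref{lem:neat-sum}: if $\mu(X_{\lceil n/2\rceil})\ge 1/2$, sort its elements by probability and take the prefix summing to exactly $1/2$; otherwise do the same inside the complement and complement the result (Lemmas~\ref{lem:dyadic-full-support} and~\ref{lem:small-elements} handle distributions without full support). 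No restructuring of a given tree is ever performed --- the right splitter is simply chosen greedily at each node --- and this per-node $O(n)$ computation is also what yields the $O(n^2)$ construction time. If you want to rescue your approach, the fix is to prove your Structural Lemma top-down by choosing the root question from $\dyadof{\mu}\cap\cQ$ for a Huffman distribution $\mu$, rather than bottom-up from Huffman merges.
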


It is natural to ask in this setting how small can a set of queries be if it is \emph{$r$-optimal}; that is, 
if it uses at most $\opt{\pi}+r$ questions on average when the secret element is drawn according to $\pi$, for small $r > 0$.
Let $\uhuf(n,r)$ denote the minimum size of a set of queries $\cQ$ such that for every distribution $\pi$ on $[n]$
there is a strategy using only queries from $\cQ$ that finds $x$ with at most $\opt{\pi}+r$ queries on average when $x$ is drawn from $\pi$.
We show that for any fixed $r>0$, significant savings can be achieved:
\begin{theorem*}[restatement of Theorem~\ref{thm:min-redundancy-approx}]
For all $r\in(0,1)$:
\[(r\cdot n)^{\frac{1}{4r}}\lesssim \uhuf(n,r) \lesssim (r\cdot n)^{\frac{16}{r}}. \]
\end{theorem*}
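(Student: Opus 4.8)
The plan is to prove the two inequalities separately, interpolating in each case between the two regimes that are already understood: the exact regime $r=0$, where $2^{\Theta(n)}$ queries are necessary and sufficient (Theorems~\ref{thm:minimum-redundancy-ub} and~\ref{thm:minimum-redundancy-lb}), and the regime $r\ge 1$, where the $2n$ comparison and equality queries of Theorem~\ref{thm:redundancy-1} already suffice. The parameter governing the interpolation is $t=\Theta(1/r)$: intuitively, a slack of $r$ lets Bob behave wastefully everywhere except on a ``core'' of about $1/r$ elements, and the combinatorial content of such a core is worth a multiplicative factor of $n^{\Theta(1/r)}$.

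For the upper bound I would take $\cQ$ to be the family of queries ``$x\in S$?'' where $S\subseteq[n]$ is a union of at most $8/r$ intervals of the linear order on $[n]$. Such an $S$ has at most $16/r$ breakpoints, so $|\cQ|\le\binom{n+1}{\le 16/r}\le (rn)^{16/r}$, and $\cQ$ contains every comparison and every equality query as a degenerate case. The problem then reduces to the structural statement: \emph{for every distribution $\pi$ on $[n]$ there is a decision tree of expected cost at most $\opt{\pi}+r$ in which the leaf set of every internal node is a union of at most $8/r$ intervals of $[n]$.} The way I would attack this is to quantify the classical fact underlying Gilbert--Moore and alphabetic coding --- that fully alphabetic ($1$-interval) trees already come within an additive constant of $\Opt$ --- into the sharper statement that $k$-interval trees come within $O(1/k)$ of $\Opt$, and then set $k=\Theta(1/r)$. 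Concretely, one runs an arithmetic-coding / Gilbert--Moore-style construction, but at any node where the alphabetic split would be badly unbalanced relative to the optimal split there, one is allowed to instead regroup that node's current set into up to $k$ contiguous blocks and recurse optimally among the blocks; an amortized (potential-function) accounting should show that the total excess over $\Opt$ telescopes to $O(1/k)$. The main obstacle is exactly this accounting: one must show that each regrouping buys the potential down by the right amount while not increasing the number of intervals used at ancestor nodes, and the slack in such an argument is what one pays for in the constant $8$ --- so the exponent $16/r$ is surely not tight.

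For the lower bound I would build a large family of hard distributions. For each subset $A\subseteq[n]$ with $|A|=t=\Theta(1/r)$, let $\pi_A$ put weight $\approx 1/(2t)$ (perturbed generically to break ties) on each element of $A$ and spread the remaining $1/2$ essentially uniformly over $[n]\setminus A$, so that every element of $A$ carries weight $\Theta(r)$ while every other element carries weight $O(1/n)$. A Huffman tree for $\pi_A$ keeps the elements of $A$ near the top, at depths $\approx\log t$, whereas optimality forces the light elements down to depth $\approx\log n$; consequently any tree of expected cost at most $\opt{\pi_A}+r$ must keep the $A$-elements at average depth $\le\log t+O(1)$, since dragging one of them down toward depth $\log n$ costs $\Omega(r\log(rn))\gg r$. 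By a Markov-type argument this means a shallow, bounded part of the tree --- an internal node of size $\mathrm{poly}(1/r)$ --- must capture a constant fraction of $A$; hence for every $t$-subset $A$ there is a query $q\in\cQ$ of size $\mathrm{poly}(1/r)$ with $|q\cap A|\ge t/2$. Since a fixed such $q$ can be ``blamed'' for at most $\binom{\mathrm{poly}(1/r)}{t/2}\binom{n}{t/2}$ of the sets $A$, a covering / double-counting bound gives $|\cQ|\ge \binom{n}{t}/(\binom{\mathrm{poly}(1/r)}{t/2}\binom{n}{t/2})=(rn)^{\Omega(1/r)}$. The main obstacle --- and the subtler of the two parts --- is making this rigidity claim precise: one must control \emph{all} near-optimal trees for $\pi_A$ simultaneously, tracking how much cost the weight gap between $A$ and its complement forces in an arbitrary tree, and it is the looseness there (together with the optimal choice of $t$) that produces the exponent $\frac{1}{4r}$ rather than something closer to $\frac{1}{2r}$.
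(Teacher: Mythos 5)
Your high-level architecture matches the paper's --- interval-based queries for the upper bound, a family of hard distributions plus double counting for the lower bound --- but in both halves the load-bearing step is left unproved, and in each case your sketch points at a mechanism different from (and, I believe, weaker than) the one the paper actually uses. On the lower bound, the concrete problem is your choice of $\pi_A$: with total mass $1/2$ spread over $[n]\setminus A$, nothing forces a near-optimal tree to use a \emph{small} query. A first question could be a huge set containing light elements on both sides while still splitting $A$ evenly, so the step ``hence there is a query $q$ of size $\mathrm{poly}(1/r)$ with $|q\cap A|\ge t/2$'' does not follow from depth/averaging considerations alone; and without a bound on $|q|$ the double count collapses, since a set of unrestricted size can be blamed for far more than $\binom{\mathrm{poly}(1/r)}{t/2}\binom{n}{t/2}$ of the $A$'s. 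The paper's fix is to give the light elements total mass $\delta<r^2$, so small that Lemma~\ref{l324} forces them all into a single subtree of any $r$-optimal tree; then the side of the first question \emph{not} containing the light elements consists of heavy elements only and has size exactly $2^{k-1}$ or $2^{k-1}-1$, which is precisely the small identifying set the counting argument needs. You would have to import something like this degeneracy to make your rigidity claim true.

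On the upper bound, your query set (unions of $O(1/r)$ intervals) is fine and essentially contains the paper's (cyclic intervals with up to $4/r$ elements added or removed), but the entire content is the structural claim that trees using such questions come within $O(r)$ of $\opt{\pi}$, and the amortized Gilbert--Moore regrouping you propose is not carried out: it is not clear what potential function makes the excess telescope to $O(1/k)$, nor that regrouping at a node does not inflate the interval count at its descendants. The paper proves the claim by a genuinely different, randomized route: fix a Huffman (dyadic) distribution $\mu$ for $\pi$, maintain a dyadic sub-distribution that one tries to double at every step, ask a uniformly random cyclic arc of mass $1/2$ among the light elements, and charge the loss to the at most two ``boundary'' elements whose mass fails to double --- an event of probability $O(q)$ for an element of current mass $q$, which sums to $r+r^2$ over the run and then derandomizes by averaging. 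As written, both of your key lemmas remain conjectures, so the argument is incomplete.
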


Instead of the exponential number of questions needed to match Huffman's algorithm exactly, for fixed $r>0$ an $r$-optimal set of questions has polynomial size.
In this case the upper bound is achieved by an explicit set of queries $\cQ_r$.
We also present an efficient randomized algorithm for computing an $r$-optimal strategy that uses queries from $\cQ_r$.

\paragraph{Related work}
The ``20 questions'' game is the starting point of combinatorial search theory~\cite{Katona,AhlswedeWegener,ACD}. Combinatorial search theory considers many different variants of the game, such as several unknown elements, non-adaptive queries, non-binary queries, and a non-truthful Alice~\cite{RMKWS,SW,AD,DGW}. Both average-case and worst-case complexity measures are of interest.
An important topic in combinatorial search theory is combinatorial group testing~\cite{Dorfman,DuHwang}.

We are unaware of any prior work which has considered the quantities $\uent(n,r)$ or $\uhuf(n,r)$. However, several particular sets of questions have been analyzed in the literature from the perspective of redundancy and prolixity: Huffman codes~\cite{Gallager,Johnsen,CGT,MontgomeryAbrahams,Manstetten,MPK}, binary search trees~\cite{GilbertMoore,Nakatsu,Rissanen,Horibe}, and comparison-based sorting algorithms~\cite{F76,MY13}.

\section{Preliminaries} \label{sec:definitions}

\paragraph{Notation} We use $\log n$ for the base~$2$ logarithm of $n$ and $[n]$ to denote the set $\{1,\ldots,n\}$.

Throughout the document, we will consider probability distributions over the set $X_n = \{x_1,\ldots,x_n\}$ of size $n$. In some cases, we will think of this set as ordered: $x_1 \prec \cdots \prec x_n$.

If $\pi$ is a probability distribution over $X_n$, we will denote the probability of $x_i$ by $\pi_i$, and the probability of a set $S \subseteq X_n$ by $\pi(S)$.

\paragraph{Information theory} We use $H(\pi)$ to denote the base~$2$ entropy of a distribution $\pi$ and $D(\pi\|\mu)$ to denote the Kullback--Leibler divergence. The \emph{binary entropy} function $h(p)$ is the entropy of a Bernoulli random variable with success probability $p$. When $Y$ is a Bernoulli random variable, the chain rule takes the following form:
\[
 H(X,Y) = h(\Pr[Y=1]) + \Pr[Y=0] H(X|Y=0) + \Pr[Y=1] H(X|Y=1).
\]
We call this the \emph{Bernoulli chain rule}.

\paragraph{Decision trees} In this paper we consider the task of revealing \anunknown element $x$ from $X_n$ by using Yes/No questions. Such a strategy will be called a \emph{decision tree} or an \emph{algorithm}.

A decision tree is a binary tree in which the internal nodes are labeled by subsets of $X_n$ (which we call \emph{questions} or \emph{queries}), each internal node has two outgoing edges labeled \emph{Yes} (belongs to the question set) and \emph{No} (doesn't belong to the question set), and the leaves are labeled by distinct elements of $X_n$. The depth of an element $x_i$ in a decision tree $T$, denoted $T(x_i)$, is the number of edges in the unique path from the root to the unique leaf labeled $x_i$ (if any).

Decision trees can be thought of as annotated prefix codes: the code of an element $x_i$ is the concatenation of the labels of the edges leading to it. The mapping can also be used in the other direction: each binary prefix code of cardinality~$n$ corresponds to a unique decision tree over $X_n$.

Given a set $\cQ \subseteq 2^{X_n}$ (a set of \emph{allowed questions}), a \emph{decision tree using $\cQ$} is one in which all questions belong to $\cQ$. A decision tree is \emph{valid} for a distribution $\mu$ if its leaves are labeled by elements of the support of $\mu$, and each element in the support appears as the label of some leaf.

Given a distribution $\mu$ and a decision tree $T$ valid for $\mu$, the \emph{cost} (or \emph{query complexity}) of $T$ on $\mu$, labeled $T(\mu)$, is the average number of questions asked on a random element, $T(\mu) = \sum_{i=1}^n \mu_i T(x_i)$.
Given a set $\cQ$ of allowed questions and a distribution $\mu$, the \emph{optimal cost} of $\mu$ with respect to $\cQ$, denoted $\costt{\cQ}{\mu}$, is the minimal cost of a valid decision tree for $\mu$ using~$\cQ$.

\paragraph{Dyadic distributions and Huffman's algorithm}
Huffman's algorithm~\cite{Huffman} finds the optimal cost of an unrestricted decision tree for a given distribution:
\[
 \opt{\mu} = \costt{2^{X_n}}{\mu}.
\]
We call a decision tree with this cost a \emph{Huffman tree} or an \emph{optimal decision tree} for $\mu$. More generally, a decision tree is \emph{$r$-optimal} for $\mu$ if its cost is at most $\opt{\mu} + r$.

It will be useful to consider this definition from a different point of view. Say that a distribution is \emph{dyadic} if the probability of every element is either $0$ or of the form $2^{-d}$ for some integer~$d$. We can associate with each decision tree $T$ a distribution $\tau$ on the leaves of $T$ given by $\tau_i = 2^{-T(x_i)}$. This gives a correspondence between decision trees and dyadic distributions.

In the language of dyadic distributions, Huffman's algorithm solves the following optimization problem:
\[
 \opt{\mu} = \min_{\substack{\tau \text{ dyadic} \\ \supp \tau = \supp \mu}} \sum_{i=1}^n \mu_i \log \frac{1}{\tau_i} = \min_{\substack{\tau \text{ dyadic} \\ \supp \tau = \supp \mu}} \left[ H(\mu) + D(\mu \| \tau) \right].
\]
In other words, computing $\opt{\mu}$ amounts to minimizing $D(\mu\|\tau)$, and thus to ``rounding'' $\mu$ to a dyadic distribution. We call $\tau$ a \emph{Huffman distribution} for $\mu$.

The following classical inequality shows that $\opt{\mu}$ is very close to the entropy of $\mu$:
\[
 H(\mu) \leq \opt{\mu} < H(\mu) + 1.	
\]
The lower bound follows from the non-negativity of the Kullback--Leibler divergence; it is tight exactly when $\mu$ is dyadic.
The upper bound from the Shannon--Fano code, which corresponds to the dyadic sub-distribution $\tau_i = 2^{-\lceil \log \mu_i \rceil}$ (in which the probabilities could sum to less than~$1$).

\paragraph{Redundancy and prolixity}
We measure the quality of sets of questions by comparing the cost of decision trees using them to the entropy (the difference is known as \emph{redundancy}) and to the cost of optimal decision trees (for the difference we coin the term \emph{prolixity}).
In more detail, the \emph{redundancy} of a decision tree $T$ for a distribution $\mu$ is $T(\mu) - H(\mu)$, and its \emph{prolixity} is $T(\mu) - \opt{\mu}$.

Given a set of questions $\cQ$, the redundancy $\rent(\cQ,\mu)$ and prolixity $\rhuf(\cQ,\mu)$ of a distribution $\mu$ are the best redundancy and prolixity achievable using questions from $Q$:
\begin{align*}
 \rent(\cQ,\mu) &= \costt{\cQ}{\mu} - H(\mu), \\
 \rhuf(\cQ,\mu) &= \costt{\cQ}{\mu} - \opt{\mu}.
\end{align*}
The redundancy of a set of questions $\cQ$, denoted $\rent(\cQ)$, is  the supremum of $\rent(\cQ,\mu)$ over all distributions $\mu$ over $X_n$. The prolixity $\rhuf(\cQ)$ of $\cQ$ is defined similarly.
These quantities are closely related, as the inequality $H(\mu) \leq \opt{\mu} < H(\mu)+1$ implies:
\[
 \rhuf(\cQ) \leq \rent(\cQ) \leq \rhuf(\cQ) + 1.
\]

A set of questions $\cQ$ is \emph{optimal} if $\rhuf(\cQ) = 0$, and $r$-optimal if $\rhuf(\cQ) \leq r$. \yuvalf{Not sure we use the latter.}

\paragraph{The parameters $\uent(n,r)$ and $\uhuf(n,r)$}
Our main object of study in this paper are the parameters $\uent(n,r)$ and $\uhuf(n,r)$.
The parameter $\uent(n,r)$ is the cardinality of the minimal set of questions $\cQ \subseteq 2^{X_n}$ such that $\rent(\cQ) \leq r$. Similarly, the parameter $\uhuf(n,r)$ is the cardinality of the minimal set of questions $\cQ$ such that $\rhuf(\cQ) \leq r$.
These quantities are closely related:
\[
 \uhuf(n,r) \leq \uent(n,r) \leq \uhuf(n,r-1).
\]

\paragraph{A useful lemma}
The following simple lemma will be used several times in the rest of the paper.

\begin{main-lemma} \label{lem:neat-sum}
 Let $p_1 \geq \ldots \geq p_n$ be a non-increasing list of numbers of the form $p_i = 2^{-a_i}$ (for integer $a_i$), and let $a \leq a_1$ be an integer. If $\sum_{i=1}^n p_i \geq 2^{-a}$ then for some $m$ we have $\sum_{i=1}^m p_i = 2^{-a}$.
 
 If furthermore $\sum_{i=1}^n p_i$ is a multiple of $2^{-a}$ then for some $\ell$ we have $\sum_{i=\ell}^n p_i = 2^{-a}$.
\end{main-lemma}
\begin{proof}
 Let $m$ be the maximal index such that $\sum_{i=1}^m p_i \leq 2^{-a}$. If $m = n$ then we are done, so suppose that $m < n$. Let $S = \sum_{i=1}^m p_i$. We would like to show that $S = 2^{-a}$.
 
 The condition $p_1 \leq \cdots \leq p_n$ implies that $a_{m+1} \geq \cdots \geq a_1$, and so $k \defeq 2^{a_{m+1}} S = \sum_{i=1}^m 2^{a_{m+1}-a_i}$ is an integer. By assumption $k \leq 2^{a_{m+1}-a}$ whereas $k + 1 = 2^{a_{m+1}} \sum_{i=1}^{m+1} p_i > 2^{a_{m+1}-a}$. Since $2^{a_{m+1}-a}$ is itself an integer (since $a_{m+1} \geq a_1 \geq a$), we conclude that $k = 2^{a_{m+1}-a}$, and so $S = 2^{-a}$.
 
 \smallskip

 To prove the furthermore part, notice that by repeated applications of the  of the lemma we can partition $[n]$ into intervals whose probabilities are $2^{-a}$. The last such interval provides the required index~$\ell$. 
\end{proof}

\section{Comparisons and equality tests} \label{sec:comparison-equality}

Let $\pi$ be a distribution over $X_n = \{ x_1,\ldots,x_n \}$.
A fundamental result in information theory is that the entropy of 
a distribution $\pi $ captures the average number of queries needed to identify 
a random $x\sim\pi$.
More specifically, every algorithm asks at least $H(\pi)$ questions in expectation, and there are
algorithms that ask at most $H(\pi) + 1$ questions on average (such as Huffman coding and Shannon--Fano coding).
However, these algorithms
may potentially use arbitrary questions.

In this section we are interested in the setting where $X_n$ is linearly ordered: $x_1 \prec x_2 \prec \dots \prec x_n$.
We wish to use questions that are compatible with the ordering.
Perhaps the most natural question in this setting is a comparison query;
namely a question of the form ``$x \prec x_i$?''.
Gilbert and Moore~\cite{GilbertMoore} showed that there exists an algorithm that
uses at most $H(\pi) +  2$ comparisons. Is this tight?
Can comparison queries achieve the benchmark of $H(\pi) +  1$?

A simple argument shows that their result is tight:
let $n = 3$, and let $\pi$ be a distribution such that
\[ \pi(x_1) = \eps/2,~~~~ \pi(x_2)=1-\eps,~~~~ \pi(x_3) = \eps/2, \]
for some small $\eps$. Note that $H(\pi) = O(\eps\log\bigl(1/\eps\bigr))$, and therefore
any algorithm with redundancy 1 must use the query ``$x=x_2$?'' as its first question.
This is impossible if we only allow comparison queries (see Lemma~\ref{l52} for a more detailed and general argument).
In fact, this shows that any set of questions that achieves redundancy $1$ must include all
equality queries. So, we need to at least add all equality queries.
Is it enough? Do comparison and equality queries achieve redundancy of at most 1?

Our main result in this section gives an affirmative answer to this question:

\begin{main-theorem} \label{thm:redundancy-1}
Let $\qeq^{(n)} = \{ \{ x_i \} : 1 \leq i \leq n \}$ and $\qcomp^{(n)} = \{ \{x_1,\ldots,x_i\} : 1 \leq i \leq n-1 \}$. In other words, $\qeq^{(n)}$ consists of the questions ``$x = x_i$?'' for $i \in \{1,\ldots,n\}$, and $\qcomp^{(n)}$ consists of the questions ``$x \prec x_i$?'' for $i \in \{2,\ldots,n\}$. (Recall that $x$ is the \unknown element.)

For all $n$, $\rent(\qcomp^{(n)} \cup \qeq^{(n)}) = 1$. %Moreover, if $\pi$ has no element whose probability exceeds $0.9961$ then $\uent(\qcomp \cup \qeq, \pi) \leq 0.96711$.
\end{main-theorem}
We prove the theorem by modifying the weight-balancing algorithm of Rissanen~\cite{Rissanen}, which uses only comparison queries and achieves redundancy~$2$ (as shown by Horibe~\cite{Horibe}).

The original weight-balancing algorithm is perhaps the first algorithm that comes to mind: it asks the most balanced comparison query (the one that splits the distribution into two parts whose probability is as equal as possible), and recurses according to the answer.

Our modified algorithm, Algorithm~$A_t$, first checks whether the probability of the most probable element $x_{\max}$ exceeds the threshold~$t$. If so, it asks the question ``$x=x_{\max}$?''. Otherwise, it proceeds as in the weight-balancing algorithm. The choice $t=0.3$ results in an algorithm whose redundancy is at most~$1$.

While a naive implementation of Algorithm~$A_t$ takes time $O(n)$ to determine the next query, this can be improved to $O(\log n)$, given $O(n)$ preprocessing time. Moreover, the entire strategy can be computed in time $O(n\log n)$. The interested reader can find the complete details in~\cite{DaganThesis}.

\subsection{Algorithm \texorpdfstring{$A_t$}{At}}
Let $\pi$ be a distribution over $X_n$. 
Let $\xmax$ denote the most probable element, and let $\pimax$ denote its probability.
Let $\xmid$ denote a point $x_i$ that minimizes $|\pi(\{x: x \prec x_i\})-1/2|$ over $i\in [n]$. We call $\xmid$ the middle\footnote{Note that one of $\{\xmid,x_{\Mid-1}\}$ is a median.} of $\pi$. 
Note that the query ``$x\prec \xmid$?'' is the most balanced query among all queries of the form ``$x \prec x_i$?''.

Let $A\subseteq\{x_1,\ldots,x_n\}$.  We use $\pi_A$ to denote $\pi(A)$; i.e.\ the probability of $A$.
Specifically, we use $\pi_{\prec x_i}, \pi_{\succeq x_i}, \pi_{\neq x_i}$ to denote $\pi_A$ when $A$ is $\{x: x \prec x_i\}, \{x: x \succeq x_i\}, \{x : x\neq x_i\}$.
We use $\pi|_A$ to denote the restriction of $\pi$ to $A$; i.e., the distribution derived by conditioning $\pi$ on $A$. 
Specifically, we use $\pi|_{\prec x_i}, \pi|_{\succeq x_i}, \pi|_{\neq x_i}$ to denote $\pi|_A$ when $A$ is $\{x: x \prec x_i\}, \{x: x \succeq x_i\}, \{x : x\neq x_i\}$. 

\begin{algorithm-description}{$A_t$}
Given a threshold $t\in (0,1)$, Algorithm~$A_t$ takes as input a distribution $\pi$ over $X_n$ and \anunknown element $x$, and  determines $x$ using only comparison and equality queries, in the following recursive fashion:
\begin{enumerate}
 \item If $\pi(x_i) = 1$ for some element $x_i$, then output $x_i$.
 \item If $\pimax \geq{t}$ then ask whether $x = \xmax$, and either output $\xmax$, or continue with $\pi|_{\neq \xmax}$.
 \item If $\pimax < {t}$, ask whether $x \prec \xmid$, and continue with either $\pi|_{\prec \xmid}$ or $\pi|_{\succeq\xmid}$.
\end{enumerate}
(When recursing on a domain $D$, we identify $D$ with $X_{|D|}$.)
\end{algorithm-description}

The weight balancing algorithm of Rissanen~\cite{Rissanen} is the special case $t = 1$; in this case no equality queries are needed, and the resulting redundancy is~$2$, as shown by Horibe~\cite{Horibe}.

We will show that for a certain range of values of ${t}$ (for example, ${t} = 0.3$), Algorithm~$A_t$ achieves redundancy at most~$1$, thus proving Theorem~\ref{thm:redundancy-1}. 

\smallskip

Recall that $A_{{t}}(\pi)$ is the cost of $A_t$ on $\pi$, and let $R_t(\pi) \defeq A_t(\pi) - H(\pi)-1$.
It is more convenient to present the proof in terms of $R_t(\pi)$ rather than in terms of the redundancy.
Our goal, stated in these terms,  is showing that 
there exists some $t$ for which $R_t(\pi)\leq 0$ for all distributions $\pi$.

We next observe two simple properties of the algorithm $A_t$. The proof of Theorem~\ref{thm:redundancy-1} relies only on these properties.

The first property is a recursive definition of $R_t$ that is convenient to induct on. See Figure~\ref{fig:recdef} for a pictorial illustration.

\providecommand{\thet}{t} % t or 0.3
\providecommand{\theR}{R_t} % R or R_t
\begin{figure}
%\subfigure[$\pimax\in (0,\thet)$]{\label{fig-lem-b-t}
\begin{subfigure}{.5\linewidth}
\begin{tikzpicture}[sibling distance=5cm]
\node[circle,draw,fill=black,label={[label distance=0.5cm]below:\small $1-h(\pi_{\succeq \xmid})$},
label={[label distance=0.15cm]above:\small ``$x\prec\xmid$?''}] {}
    child{ 
    node[circle, draw, fill=black] {}
    node[itria] { $\theR(\pi|_{ \prec \xmid})$}
    edge from parent node[near end, above=10pt] {\small $\pi_{\prec \xmid}$} 
        }
    child{ 
    node[circle, draw, fill=black] {}
    node[itria] { $\theR(\pi|_{\succeq \xmid})$}
    edge from parent node[near end, above=10pt] {\small $\pi_{\succeq \xmid}$} 
        };
\end{tikzpicture}
\caption{$\pimax\in (0,\thet)$} \label{fig-lem-b-t}
\end{subfigure}
%}\hfill
%\subfigure[$\pimax\in [\thet,1)$]{\label{fig-lem-b-m}
\begin{subfigure}{.5\linewidth}
\begin{tikzpicture}[sibling distance=5cm]
\node[circle,draw,fill=black, label={[label distance=0.5cm]below:\small $1-h(\pi_{\neq \xmax})$},label={[label distance=0.15cm]above:\small ``$x\neq\xmax$?''}] {}
    child{ 
    node[circle, draw, label={[label distance=0.125cm]below:\small $\theR(\pi|_{=\xmax})=-1$}] {\small $\xmax$}
    edge from parent node[near end, above=10pt] {\small $\pimax$}
        }
    child{ 
    node[circle, draw, fill=black] {}
    node[itria] {\small  $\theR(\pi|_{ \neq \xmax})$}
    edge from parent node[near end, above=10pt] {\small $1-\pimax$}
        }
;
\end{tikzpicture}
\caption{$\pimax\in [\thet,1)$} \label{fig-lem-b-m}
\end{subfigure}
%}
\caption{Recursive definition of $\theR$}\label{fig:recdef}
\end{figure}

\begin{lemma} \label{lem:redundancy-1-formula}
Let $\pi$ be a distribution over $X_n$. Then
\[
 R_t(\pi) =
\begin{cases}
 -1  & \text{if } \pimax = 1, \\
 1 - h(\pimax) - \pimax + (1-\pimax) R_t(\pi|_{\neq \xmax}) & \text{if } \pimax\in[t,1), \\
 1 - h(\pi_{\prec\xmid}) + \pi_{\prec\xmid} R_t(\pi|_{\prec\xmid}) + \pi_{\succeq\xmid} R_t(\pi|_{\succeq\xmid}) & \text{if } \pimax\in(0,t).
\end{cases}
\]
\end{lemma}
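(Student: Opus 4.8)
The plan is to prove Lemma~\ref{lem:redundancy-1-formula} by a direct unwinding of the recursive definition of Algorithm~$A_t$, using the Bernoulli chain rule for entropy. Recall that $R_t(\pi) = A_t(\pi) - H(\pi) - 1$, so it suffices to establish the corresponding recursion for $A_t(\pi)$ and then subtract $H(\pi) + 1$, expanding $H(\pi)$ via the chain rule along the first query.

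First I would handle the base case: if $\pimax = 1$, then the support is a single element, so the algorithm outputs it immediately with $0$ queries, $H(\pi) = 0$, and hence $R_t(\pi) = 0 - 0 - 1 = -1$. Next, the case $\pimax \in [t,1)$: here the algorithm asks ``$x = \xmax$?'' (one query), then either halts (probability $\pimax$) or recurses on $\pi|_{\neq\xmax}$ (probability $1 - \pimax$), so $A_t(\pi) = 1 + (1-\pimax) A_t(\pi|_{\neq\xmax})$. By the Bernoulli chain rule applied to the indicator of $\{x = \xmax\}$, $H(\pi) = h(\pimax) + (1-\pimax) H(\pi|_{\neq\xmax})$ (the other conditional entropy vanishes). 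Substituting,
\[
R_t(\pi) = 1 + (1-\pimax) A_t(\pi|_{\neq\xmax}) - h(\pimax) - (1-\pimax) H(\pi|_{\neq\xmax}) - 1,
\]
and writing $(1-\pimax) A_t(\pi|_{\neq\xmax}) - (1-\pimax) H(\pi|_{\neq\xmax}) = (1-\pimax)\bigl(R_t(\pi|_{\neq\xmax}) + 1\bigr)$, the terms regroup to $1 - h(\pimax) - \pimax + (1-\pimax) R_t(\pi|_{\neq\xmax})$, as claimed. Note one must also observe $\pi_{\neq\xmax} = 1 - \pimax > 0$ so the restriction is well-defined (this uses $\pimax < 1$, and when $n = 1$ we are in the base case instead).

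The third case, $\pimax \in (0,t)$, is entirely analogous: the algorithm asks ``$x \prec \xmid$?'' and recurses on $\pi|_{\prec\xmid}$ or $\pi|_{\succeq\xmid}$ with the respective probabilities, giving $A_t(\pi) = 1 + \pi_{\prec\xmid} A_t(\pi|_{\prec\xmid}) + \pi_{\succeq\xmid} A_t(\pi|_{\succeq\xmid})$, and the Bernoulli chain rule gives $H(\pi) = h(\pi_{\prec\xmid}) + \pi_{\prec\xmid} H(\pi|_{\prec\xmid}) + \pi_{\succeq\xmid} H(\pi|_{\succeq\xmid})$; the same regrouping (with $\pi_{\prec\xmid} + \pi_{\succeq\xmid} = 1$ absorbing the two ``$+1$'' contributions against the leading $1$) yields $1 - h(\pi_{\prec\xmid}) + \pi_{\prec\xmid} R_t(\pi|_{\prec\xmid}) + \pi_{\succeq\xmid} R_t(\pi|_{\succeq\xmid})$. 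The only subtlety is checking that both parts $\{x \prec \xmid\}$ and $\{x \succeq \xmid\}$ have positive probability, so that both restrictions make sense and both branches of the tree are genuinely taken; this follows because $\pimax < t < 1$ forces the support to contain at least two elements, and by the choice of $\xmid$ as the minimizer of $|\pi_{\prec x_i} - 1/2|$ over $i \in [n]$, a degenerate split (all mass on one side) would be strictly worse than splitting at a point adjacent to a support element — this is the one place where a short argument beyond mechanical substitution is needed, and it is the step I expect to require the most care. Everything else is routine algebra driven by the chain rule.
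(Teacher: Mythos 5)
Your proposal is correct and follows essentially the same route as the paper: expand $A_t(\pi)$ along the first query, expand $H(\pi)$ via the Bernoulli chain rule, and regroup to get the stated recursion for $R_t$. The one subtlety you flag in the comparison case --- that both $\pi_{\prec\xmid}$ and $\pi_{\succeq\xmid}$ are positive --- is immediate from Lemma~\ref{l412}, which places both in $\bigl[\tfrac{1-\pimax}{2},\tfrac{1+\pimax}{2}\bigr]$ with $\pimax<1$, so no separate argument is needed.
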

\begin{proof}
If $\pimax = 1$ then $A_t(\pi) = 0$ and $H(\pi) = 0$, so $R_t(\pi) = -1$.

If $\pimax \in [t,1)$ then
\begin{align*}
 R_t(\pi) &= A_t(\pi) - H(\pi) - 1 \\ &=
 \Bigl[1 + (1-\pimax)A_t(\pi|_{\neq \xmax})\Bigr] - \Bigl[h(\pimax) + (1-\pimax)H(\pi|_{\neq \xmax})\Bigr] - \Bigl[\pimax + (1-\pimax)\Bigr] \\ &=
 1 - h(\pimax) - \pimax + (1-\pimax) \Bigl[A_t(\pi|_{\neq \xmax}) - H(\pi|_{\neq \xmax}) - 1\Bigr] \\ &=
 1 - h(\pimax) - \pimax + (1-\pimax) R_t(\pi_{\neq \xmax}).
\end{align*}

If $\pimax \in (0,t)$ then
\begin{align*}
 R_t(\pi) &= A_t(\pi) - H(\pi) - 1 \\ &=
 \Bigl[1 + \pi_{\prec\xmid} A_t(\pi|_{\prec\xmid}) + \pi_{\succeq\xmid} A_t(\pi|_{\succeq\xmid})\Bigr] \\&\qquad -
 \Bigl[h(\pi_{\prec \xmid}) + \pi_{\prec\xmid} H(\pi|_{\prec\xmid}) + \pi_{\succeq\xmid} H(\pi|_{\succeq\xmid})\Bigr] -
 \Bigl[\pi_{\prec\xmid} + \pi_{\succeq\xmid}\Bigr] \\ &=
 1 - h(\pi_{\prec \xmid}) + \pi_{\prec\xmid} \Bigl[A_t(\pi|_{\prec\xmid}) - H(\pi|_{\prec\xmid}) - 1\Bigr] + \pi_{\succeq\xmid} \Bigl[A_t(\pi|_{\succeq\xmid}) - H(\pi|_{\succeq\xmid}) - 1\Bigr] \\ &=
 1 - h(\pi_{\prec\xmid}) + \pi_{\prec\xmid} R_t(\pi|_{\prec\xmid}) + \pi_{\succeq\xmid} R_t(\pi|_{\succeq\xmid}). \qedhere
\end{align*}
\end{proof}

The second property is that whenever $A_t$ uses a comparison query (i.e.\ when $\pimax < t$), then this question is balanced:
\begin{lemma}\label{l412}
Let $\pi$ be a distribution over $X_n$. Then $\pi_{\prec\xmid},\pi_{\succeq\xmid}\in[\frac{1-\pimax}{2},\frac{1+\pimax}{2}]$.
\end{lemma}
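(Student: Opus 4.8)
The plan is to reduce the two-sided claim to a single one-sided estimate. Since $\pi_{\prec\xmid} + \pi_{\succeq\xmid} = 1$ and the interval $[\tfrac{1-\pimax}{2},\tfrac{1+\pimax}{2}]$ is symmetric about $\tfrac12$, it suffices to prove $|\pi_{\prec\xmid} - \tfrac12| \le \pimax/2$; the containment for $\pi_{\succeq\xmid} = 1 - \pi_{\prec\xmid}$ then follows immediately by passing to complements.

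To prove $|\pi_{\prec\xmid} - \tfrac12| \le \pimax/2$, I would work with the prefix sums $s_i \defeq \pi_{\prec x_i} = \pi_1 + \cdots + \pi_{i-1}$ for $i \in [n]$. These form a non-decreasing sequence with $s_1 = 0$, with $s_n = 1 - \pi_n \ge 1 - \pimax$, and with consecutive gaps $s_{i+1} - s_i = \pi_i \le \pimax$. By the definition of $\xmid$ as a minimizer of $|\pi(\{x : x \prec x_i\}) - \tfrac12|$, the quantity $|\pi_{\prec\xmid} - \tfrac12|$ equals $\min_{i\in[n]} |s_i - \tfrac12|$, so it is enough to exhibit a single index $i$ with $|s_i - \tfrac12| \le \pimax/2$.

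I would then split into two cases. If $s_n \ge \tfrac12$, then since $s_1 = 0 < \tfrac12$ there is a least index $m \ge 2$ with $s_m \ge \tfrac12$; now $\tfrac12$ lies in the interval $[s_{m-1}, s_m]$, whose length is $\pi_{m-1} \le \pimax$, so whichever of the two endpoints $s_{m-1}, s_m$ is nearer to $\tfrac12$ lies within $\pimax/2$ of it. If instead $s_n < \tfrac12$, then every prefix sum is below $\tfrac12$, the largest of them is $s_n$, and $\tfrac12 - s_n \le \tfrac12 - (1 - \pimax) = \pimax - \tfrac12 \le \pimax/2$, where the last inequality uses $\pimax \le 1$. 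In both cases $\min_{i} |s_i - \tfrac12| \le \pimax/2$, which gives the desired estimate and finishes the proof.

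There is no genuine obstacle here; the only point that deserves a little care is the second case, $s_n < \tfrac12$ (which can occur only when $\pimax > \tfrac12$). There is no prefix sum on the far side of $\tfrac12$, so instead of a ``straddling'' pair one has to fall back on the cruder bound $s_n \ge 1 - \pimax$ together with $\pimax \le 1$.
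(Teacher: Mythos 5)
Your proof is correct and follows essentially the same route as the paper's: both arguments observe that consecutive prefix sums differ by at most $\pimax$, so the pair straddling $\tfrac12$ contains a point within $\pimax/2$ of $\tfrac12$, which by the minimizing definition of $\xmid$ gives the claim. The only difference is that you explicitly treat the degenerate case where every prefix sum lies below $\tfrac12$ (which forces $\pimax>\tfrac12$), a case the paper's one-line argument glosses over.
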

\begin{proof}
By the definition of $\xmid$, it suffices to show that there exists some $j$ with $\pi_{\prec j}\in [\frac{1-\pimax}{2},\frac{1+\pimax}{2}]$.
Indeed, for all $j$, $\pi_{\prec j+1} - \pi_{\prec j} = \pi_j\leq\pimax$, and therefore, if $j$ is the maximum element with $\pi_{\prec j}<\frac{1}{2}$ (possibly $j=0$), then either $\pi_{\prec j}$ or $\pi_{\prec j+1}$ are in  $[\frac{1-\pimax}{2},\frac{1+\pimax}{2}]$. 
\end{proof}

\subsection{Game \texorpdfstring{$G_t$}{Gt}}

We will use the properties described in Lemma~\ref{l412} to bound $R_t(\pi)$.
Since this involves quantifying over all distributions $\pi$, it is convenient to introduce a game that simulates $A_t$ on a distribution $\pi$.
The game involves one player, Alice, who we think of as an adversary that chooses the input distribution $\pi$ (in fact, she only chooses $\pimax$), and wins a revenue of $R_t(\pi)$ (thus, her objective is to maximize the redundancy). This reduces our goal to showing that Alice's optimum revenue is nonpositive. 
The definition of the game is tailored to the properties stated in Lemma~\ref{lem:redundancy-1-formula} and Lemma~\ref{l412}. 
We first introduce $G_t$, and then relate it to the redundancy of $A_t$ (see Lemma~\ref{l416} below).

\begin{algorithm-description}[Game]{$G_t$}
Let $t\leq\frac{1}{3}$, and let $f,s\colon (0,1]\rightarrow \mathbb{R}$ be
$f(x) \defeq 1-h(x) - x$ and $s(x) \defeq 1-h(x)$.
The game $G_t$ consists of one player called Alice, 
whose objective is to maximize her revenue. 
The game $G_t$ begins at an initial state $p\in (0,1]$, and proceeds as follows.
\begin{enumerate}
\item If $p\in [t,1]$, the game ends with revenue $f(p)$.
\item If $p\in (0,t)$, then Alice chooses a state\footnote{Note that $p'\in(0,1]$, since $p<t\leq\frac{1}{3}$ implies that $\bigl[\frac{2p}{1+p}, \frac{2p}{1-p}\bigr]\subseteq(0,1]$.} $p'\in\bigl[\frac{2p}{1+p}, \frac{2p}{1-p}\bigr]$ and
recursively plays $G_t$ with initial state $p'$. Let $r'$ denote her revenue in the game that begins at $p'$.
 Alice's final revenue is
\[ s\biggl(\frac{p}{p'}\biggr) + \frac{p}{p'}\cdot r'. \eofahere \] 
\end{enumerate}
\end{algorithm-description}

Note that given any initial state $p$ and a strategy for Alice, the game $G_t$ always terminates:
indeed, if $p=p_0,p_1,p_2,\ldots$ is the sequence of states chosen by Alice,
then as long as $p_i < t$, it holds that $p_{i+1} \geq \frac{2p_i}{1+p_i} > \frac{2p_i}{1+1/3}=\frac{3}{2}p_i$ (the second inequality is since $p_i < t\leq\frac{1}{3}$). 
So the sequence of states grows at an exponential rate, which means that for $\ell=O(\log(1/p_0))$, the state $p_\ell$ exceeds the threshold $t$ and the game terminates. 

For $p\in (0,1]$, let $r_t(p)$ denote the supremum of Alice's revenue in $G_t$ when the initial state is $p$, the supremum ranging over all possible strategies for Alice. 

Our next step is using the game $G_t$ to prove Theorem~\ref{thm:redundancy-1}:
We will show that $t=0.3$ satisfies:
\begin{enumerate}[label=(\roman*)]
\item $r_t(p)\leq 0$ for all $p\in(0,1]$, and
\item $R_t(\pi)\leq r_t(\pimax)$ for all $\pi$.
\end{enumerate}
Note that (i) and (ii) imply Theorem~\ref{thm:redundancy-1}.
Before establishing (i) and (ii), we state and prove three simple lemmas regarding $G_t$ 
that are useful to this end.

The first lemma will be used in the proof of Lemma~\ref{l416}, which shows
that if $r_t(p)\leq 0$ for all $p\in (0,1]$, then $R_t(\pi)\leq r_t(\pimax)$.
Its proof follows directly from the definition of $r_t$.
\begin{lemma}\label{l413}
For all $p < t$ and all $p'\in\big[\frac{2p}{1+p}, \frac{2p}{1-p}\big]$:
\[r_t(p) \geq s\left(\frac{p}{p'}\right) + \frac{p}{p'}\cdot r_t(p').\]
\end{lemma}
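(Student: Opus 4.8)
The plan is to read the claim straight off the definition of $r_t$ as a supremum over Alice's strategies, together with the fact that when the game starts at a state $p<t$ its first step is exactly a choice of $p'$ in the interval $\bigl[\frac{2p}{1+p},\frac{2p}{1-p}\bigr]$ followed by a recursive play of $G_t$ from $p'$.

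First I would fix the particular $p'$ given in the statement (which lies in the allowed interval by hypothesis) and observe that any strategy $\sigma'$ for the subgame $G_t$ started at $p'$ can be promoted to a strategy for $G_t$ started at $p$: make the first move ``choose $p'$'', then follow $\sigma'$. If $r'$ denotes the revenue of $\sigma'$ from $p'$, then by the payoff rule in step~2 of $G_t$ the revenue of the promoted strategy from $p$ is $s\bigl(\frac{p}{p'}\bigr)+\frac{p}{p'}\, r'$. Since $r_t(p)$ is the supremum of Alice's revenue over all strategies from $p$, we get $r_t(p)\ge s\bigl(\frac{p}{p'}\bigr)+\frac{p}{p'}\, r'$ for every strategy $\sigma'$ from $p'$. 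Now I would take the supremum over $\sigma'$ on the right-hand side: because $\frac{p}{p'}>0$ (indeed $\frac{p}{p'}\in\bigl[\frac{1-p}{2},\frac{1+p}{2}\bigr]$) and $r'\mapsto s\bigl(\frac{p}{p'}\bigr)+\frac{p}{p'}r'$ is increasing and affine, the supremum of the right-hand side over $\sigma'$ equals $s\bigl(\frac{p}{p'}\bigr)+\frac{p}{p'}\sup_{\sigma'} r' = s\bigl(\frac{p}{p'}\bigr)+\frac{p}{p'}\, r_t(p')$, yielding the inequality.

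The only technical points to check are that everything is finite so the suprema make sense: the game always terminates (as already noted in the text, the states grow geometrically once below $t$), and $f$ and $s$ are bounded on $(0,1]$ while each recursion multiplies the tail revenue by a factor in $(0,1)$, so $r_t(p)\in\mathbb{R}$ for every $p$. If one prefers to avoid taking the supremum inside directly, the same conclusion follows from an $\varepsilon$-argument: for each $\varepsilon>0$ pick $\sigma'$ with $r'>r_t(p')-\varepsilon$, conclude $r_t(p)> s\bigl(\frac{p}{p'}\bigr)+\frac{p}{p'}r_t(p')-\frac{p}{p'}\varepsilon$, and let $\varepsilon\to 0$. I do not expect any real obstacle here; the content of the lemma is purely the observation that ``choose $p'$ then play optimally'' is one available strategy for Alice, so $r_t(p)$ dominates its payoff.
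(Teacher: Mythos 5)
Your proposal is correct and is exactly the argument the paper intends: the paper gives no written proof, stating only that the lemma ``follows directly from the definition of $r_t$,'' and your write-up simply fills in those details (promoting a sub-strategy from $p'$ to a strategy from $p$, then passing the supremum through the increasing affine map $r'\mapsto s\bigl(\frac{p}{p'}\bigr)+\frac{p}{p'}r'$). Nothing further is needed.
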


The next two lemmas will be used in the proof of Lemma~\ref{l417}, which shows
that $r_t(p)\leq 0$ for all $p\in (0,1]$ when $t=0.3$.
The first one gives a tighter estimate on the growth of the sequence of states: 
\begin{lemma}\label{l414}
Let $p_0,p_1,\ldots,p_k$ be the sequence of states chosen by Alice. For every $i\leq k-1$:
\[ p_{k-1-i} < \frac{t}{2^i(1-t)+t}. \]
\end{lemma}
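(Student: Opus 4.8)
The plan is to prove the bound by a backward induction on $i$, reading the sequence of states from the terminal end. First I would record two structural facts about the play $p_0,p_1,\ldots,p_k$ of $G_t$. Since the game only continues from states lying in $(0,t)$, every state at which Alice actually moves satisfies $p_j<t$; in particular $p_0,\ldots,p_{k-1}<t$ (the final state $p_k\in[t,1]$ at which the game halts will play no role). The case $k=0$ is vacuous, so assume $k\geq1$. Second, the rule of the game gives the only inequality we need between consecutive states: whenever Alice moves from $p_j$ (with $p_j<t$) to $p_{j+1}$, we have $p_{j+1}\in\bigl[\frac{2p_j}{1+p_j},\frac{2p_j}{1-p_j}\bigr]$, and in particular $p_{j+1}\geq\frac{2p_j}{1+p_j}$.

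For the base case $i=0$, the claimed bound reads $p_{k-1}<\frac{t}{2^0(1-t)+t}=t$, which is exactly the observation that $p_{k-1}$ is a move-state. For the inductive step, suppose $i+1\leq k-1$ and that $p_{k-1-i}<M$, where $M\defeq\frac{t}{2^i(1-t)+t}$. Since $k-2-i\geq0$ and $p_{k-2-i}<t$, Alice moves from $p_{k-2-i}$ to $p_{k-1-i}$, so $\frac{2p_{k-2-i}}{1+p_{k-2-i}}\leq p_{k-1-i}<M$. Because $t\leq\tfrac13$ and $2^i(1-t)+t\geq1$, we have $M\leq t<2$, so $2-M>0$; hence the map $q\mapsto\frac{2q}{1+q}$ can be inverted without flipping the inequality, giving $p_{k-2-i}<\frac{M}{2-M}$. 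A one-line computation with $M=\frac{t}{2^i(1-t)+t}$ then yields $\frac{M}{2-M}=\frac{t}{2(2^i(1-t)+t)-t}=\frac{t}{2^{i+1}(1-t)+t}$, which is precisely the bound at index $i+1$, completing the induction.

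The only things that require a small amount of care are: verifying $M<2$ (equivalently $2-M>0$) so that clearing denominators in the inversion preserves the direction of the inequality, and keeping the index bookkeeping honest so that each $p_{k-2-i}$ invoked is genuinely a move-state of the game, which is guaranteed by the assumption $i+1\leq k-1$. Neither is a real obstacle, so I expect the whole argument to be a short and routine induction rather than anything delicate.
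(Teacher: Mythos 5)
Your proof is correct and follows essentially the same route as the paper's: induction on $i$ with base case $p_{k-1}<t$ (a non-final state), followed by inverting the game's transition rule $p_{j+1}\geq\frac{2p_j}{1+p_j}$ to get $p_{k-2-i}<\frac{M}{2-M}$ and simplifying. The only difference is cosmetic index bookkeeping (stepping from $i$ to $i+1$ rather than from $i-1$ to $i$).
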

\begin{proof}
We prove the bound by induction on $i$. 
The case $i=0$ follows from $p_{k-1}$ being a non-final state, and therefore $p_{k-1} < t$ as required.
Assume now that $i > 0$.
By the definition of $G_t$ it follows that $p_{k-1-(i-1)}\in\bigl[\frac{2p_{k-1-i}}{1+p_{k-1-i}}, \frac{2p_{k-1-i}}{1-p_{k-1-i}}\bigr]$, which implies that $p_{k-1-i} \in \bigl[\frac{p_{k-1-(i-1)}}{2+p_{k-1-(i-1)}}, \frac{p_{k-1-(i-1)}}{2-p_{k-1-(i-1)}}\bigr]$.
Therefore,
\begin{align*}
p_{k-1-i} &\leq \frac{p_{k-1-(i-1)}}{2-p_{k-1-(i-1)}}\\
	      &< \frac{t/\bigl(2^{i-1}(1-t)+t\bigr)}{2-t/\bigl(2^{i-1}(1-t)+t\bigr)}\tag{by induction hypothesis on $i-1$}\\
	      &=\frac{t}{2^i(1-t)+t}. \qedhere
\end{align*}
\end{proof}

The second lemma gives a somewhat more explicit form of the revenue of $G_t$:
\begin{lemma}\label{l415}
Let $p_0,p_1,\ldots, p_k$ be the sequence of states chosen by Alice. Let $r(p_0,\ldots,p_k)$ denote the revenue obtained by choosing these states. Then
\[r(p_0,\ldots,p_k) = \sum_{i=0}^{k-1}\frac{p_0}{p_i} s\biggl(\frac{p_{i}}{p_{i+1}}\biggr) + \frac{p_{0}}{p_k}f(p_k).\]
\end{lemma}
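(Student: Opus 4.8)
The plan is to prove the identity by induction on $k$, the length of the sequence of states. The base case $k=0$ is immediate: the sequence consists of a single state $p_0$, which must satisfy $p_0\in[t,1]$ (otherwise the game would not have terminated), so by the first rule of $G_t$ the revenue is exactly $f(p_0)$, while the right-hand side of the claimed formula reduces to the single term $\frac{p_0}{p_0}f(p_0)=f(p_0)$ (the sum $\sum_{i=0}^{-1}$ being empty).

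For the inductive step, suppose the formula holds for all sequences of length $k-1$, and let $p_0,p_1,\ldots,p_k$ be a sequence of length $k\geq 1$. Since $k\geq 1$, we must have $p_0\in(0,t)$, and by the second rule of $G_t$, Alice's revenue satisfies
\[
 r(p_0,\ldots,p_k) = s\!\left(\frac{p_0}{p_1}\right) + \frac{p_0}{p_1}\cdot r(p_1,\ldots,p_k).
\]
Now $p_1,\ldots,p_k$ is a valid sequence of states for a game started at $p_1$, of length $k-1$, so the induction hypothesis gives
\[
 r(p_1,\ldots,p_k) = \sum_{i=1}^{k-1}\frac{p_1}{p_i} s\!\left(\frac{p_i}{p_{i+1}}\right) + \frac{p_1}{p_k} f(p_k).
\]
Substituting this into the previous display and multiplying through by $\frac{p_0}{p_1}$, the factor $\frac{p_0}{p_1}\cdot\frac{p_1}{p_i}$ telescopes to $\frac{p_0}{p_i}$, and likewise $\frac{p_0}{p_1}\cdot\frac{p_1}{p_k}=\frac{p_0}{p_k}$; together with the leading term $s(p_0/p_1) = \frac{p_0}{p_0}s(p_0/p_1)$ (the $i=0$ summand) this yields exactly $\sum_{i=0}^{k-1}\frac{p_0}{p_i}s\!\left(\frac{p_i}{p_{i+1}}\right) + \frac{p_0}{p_k}f(p_k)$, completing the induction.

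This is a routine unwinding of the recursive definition of the revenue, so there is no real obstacle; the only thing to be careful about is the bookkeeping of which state the recursive subgame starts at (namely $p_1$, not $p_0$), so that the ratios in the induction hypothesis are expressed relative to $p_1$ and the telescoping of $\frac{p_0}{p_1}\cdot\frac{p_1}{p_i}$ to $\frac{p_0}{p_i}$ goes through cleanly. One could alternatively phrase the same argument non-recursively by directly expanding the definition of $G_t$ level by level and observing that each step contributes a factor $p_j/p_{j+1}$ to all subsequent terms, but the induction is the cleanest presentation.
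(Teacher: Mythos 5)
Your proof is correct and is essentially identical to the paper's: both proceed by induction on $k$, handling the base case $k=0$ via the terminal rule of $G_t$, and the inductive step by applying the recursive definition of the revenue together with the induction hypothesis for the subgame starting at $p_1$, after which the ratios $\frac{p_0}{p_1}\cdot\frac{p_1}{p_i}=\frac{p_0}{p_i}$ telescope. Nothing to add.
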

\begin{proof}
We prove the formula by induction on $k$. If $k=0$ then $p_k \geq t$ and $r(p_k) = f(p_k) = \frac{p_0}{p_k}f(p_k)$.
When $k\geq 1$:
\begin{align*}
r(p_0,\ldots,p_k) &= s\biggl(\frac{p_0}{p_1}\biggr) + \frac{p_0}{p_1}\cdot r(p_1,\dots, p_k)\tag{by definition of $G_t$}\\
			  &= \frac{p_0}{p_0} s\biggl(\frac{p_0}{p_1}\biggr) + \frac{p_0}{p_1}\cdot \Bigl(\sum_{i=1}^{k-1}\frac{p_1}{p_i} s\biggl(\frac{p_{i}}{p_{i+1}}\biggr) + \frac{p_{1}}{p_k}f(p_k)\Bigr)\tag{by induction hypothesis}\\
			  &= \sum_{i=0}^{k-1}\frac{p_0}{p_i} s\biggl(\frac{p_{i}}{p_{i+1}}\biggr) + \frac{p_{0}}{p_k}f(p_k). \qedhere
\end{align*}
\end{proof}

\subsection{Relating \texorpdfstring{$A_t$}{At} to \texorpdfstring{$G_t$}{Gt}}
Next, we relate the revenue in $G_t$ to the redundancy of $A_t$ by linking $r_t$ and $R_t$. 
We first reduce the Theorem~\ref{thm:redundancy-1} to showing that there exists some $t\in (0,1]$ such that
$r_t(p)\leq 0$ for all $p\in (0,1]$ (Lemma~\ref{l416}), and then show that $t = 0.3$ satisfies this condition (Lemma~\ref{l417}).
\begin{lemma}\label{l416}
Let $t$ be such that $r_t(p)\leq 0$ for all $p\in(0,1]$. For every distribution $\pi$,
$$R_t(\pi)\leq r_t(\pimax).$$
In particular, such $t$ satisfies $R_t(\pi)\leq 0$ for all $\pi$.
\end{lemma}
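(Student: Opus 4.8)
The plan is to prove the stronger statement $R_t(\pi) \le r_t(\pimax)$ by induction on the size $n$ of the support of $\pi$; the "in particular" clause is then immediate from the hypothesis $r_t(p) \le 0$. The three cases of the induction will mirror the three cases in the recursive formula for $R_t$ from Lemma~\ref{lem:redundancy-1-formula}, and the matching of these cases to the two rules of the game $G_t$ is exactly what the definition of $G_t$ was designed to make work.

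First, the base case: if $\pimax = 1$ then $R_t(\pi) = -1$, and since $r_t$ is a supremum of revenues that is easily seen to be bounded (or more simply, since $r_t(1) = f(1) = 1 - h(1) - 1 = -1$ because $p = 1 \in [t,1]$ triggers rule~1 of $G_t$ with no further choices), we get $R_t(\pi) = -1 \le r_t(1)$. Actually it suffices here to invoke the blanket hypothesis $r_t(p) \le 0$, giving $R_t(\pi) = -1 \le 0 \le r_t(1)$ is not quite what we want — we want $\le r_t(\pimax) = r_t(1)$, so I will just note $r_t(1) = f(1) = -1$ directly from rule~1.

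Next, the case $\pimax \in [t,1)$. Here $A_t$ asks ``$x = \xmax$?'', and Lemma~\ref{lem:redundancy-1-formula} gives $R_t(\pi) = 1 - h(\pimax) - \pimax + (1-\pimax) R_t(\pi|_{\neq\xmax}) = f(\pimax) + (1-\pimax) R_t(\pi|_{\neq\xmax})$. On the game side, $p = \pimax \in [t,1]$ triggers rule~1, so $r_t(\pimax) = f(\pimax)$. Thus it suffices to show $(1-\pimax) R_t(\pi|_{\neq\xmax}) \le 0$; by the induction hypothesis $R_t(\pi|_{\neq\xmax}) \le r_t(\pimax')$ where $\pimax'$ is the max probability of the conditioned distribution, and $r_t \le 0$ everywhere by hypothesis, so we are done (using $1 - \pimax > 0$).

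Finally, the case $\pimax \in (0,t)$, which I expect to be the main obstacle. Here $A_t$ asks the balanced comparison ``$x \prec \xmid$?'', and writing $q = \pi_{\prec\xmid}$, Lemma~\ref{lem:redundancy-1-formula} gives $R_t(\pi) = 1 - h(q) + q\,R_t(\pi|_{\prec\xmid}) + (1-q)\,R_t(\pi|_{\succeq\xmid})$. I need to match this against rule~2 of $G_t$: there Alice at state $p = \pimax$ picks $p' \in [\tfrac{2p}{1+p}, \tfrac{2p}{1-p}]$, plays on, and collects $s(p/p') + (p/p')\,r'$. The idea is to choose $p'$ so that $p/p'$ corresponds to the "smaller" side's probability — but there are two subtrees, not one, so the recursion of $G_t$ as literally stated only tracks one branch. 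The resolution is that $s(q) = 1 - h(q) = s(1-q)$ is symmetric, and that $G_t$ should be applied to the branch whose conditional distribution has the larger max-probability (or to each branch and combined). Concretely: let $p_\ell'$ and $p_r'$ be the max probabilities of $\pi|_{\prec\xmid}$ and $\pi|_{\succeq\xmid}$ respectively; these are $\pimax/q$ and $\pimax/(1-q)$ — no wait, they are $\ge \pimax/q$ and $\ge \pimax/(1-q)$, and at least one of $q, 1-q$ lies in $[\tfrac{1-\pimax}{2}, \tfrac{1+\pimax}{2}]$ by Lemma~\ref{l412} (in fact both do). By Lemma~\ref{l413} with $p = \pimax$ and an appropriate choice $p'$ in the allowed interval, together with the induction hypothesis applied to both subtrees and monotonicity/sign considerations, I get $1 - h(q) + q\,R_t(\pi|_{\prec\xmid}) + (1-q)\,R_t(\pi|_{\succeq\xmid}) \le 1 - h(q) + \max(q,1-q) \cdot 0 \le$ ... — the clean version is: since $R_t$ of each subtree is $\le r_t(\cdot) \le 0$, we have $R_t(\pi) \le 1 - h(q) = s(q)$; and then I must check $s(q) \le r_t(\pimax)$, which is where Lemma~\ref{l413} enters — pick $p' = \pimax/q$ (legal since $q \in [\tfrac{1-\pimax}{2},\tfrac{1+\pimax}{2}]$ implies $p' \in [\tfrac{2\pimax}{1+\pimax}, \tfrac{2\pimax}{1-\pimax}]$), giving $r_t(\pimax) \ge s(q) + q\,r_t(\pimax/q) $; since $r_t \le 0$ this does not immediately give $r_t(\pimax) \ge s(q)$. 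So the honest argument must keep the $q\,R_t(\pi|_{\prec\xmid})$ term and match it to $q\,r_t(p')$ via the induction hypothesis rather than throwing it away — and handle the other subtree by absorbing $(1-q)R_t(\pi|_{\succeq\xmid}) \le 0$. That is: apply Lemma~\ref{l413} with $p = \pimax$, $p' = \pimax/q$; then $r_t(\pimax) \ge s(q) + q\, r_t(\pimax/q) \ge s(q) + q\, R_t(\pi|_{\prec\xmid}) \ge s(q) + q\,R_t(\pi|_{\prec\xmid}) + (1-q) R_t(\pi|_{\succeq\xmid}) = R_t(\pi)$, where the second inequality is the induction hypothesis (note $\pimax/q$ is a legal state and equals the max-prob denominator only if $\xmax$ falls in the left part — more care is needed on which side $\xmax$ lands, and I would choose $p'$ to be $\pimax / q$ or $\pimax/(1-q)$ according to whether $\xmax \prec \xmid$ or not, using the symmetry $s(q) = s(1-q)$), and the third inequality uses $(1-q) R_t(\pi|_{\succeq\xmid}) \le 0$ from the induction hypothesis plus $r_t \le 0$. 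The delicate points — which I would flag as the crux — are (a) verifying $p' = \pimax/q$ (resp.\ $\pimax/(1-q)$) lies in the admissible interval $[\tfrac{2\pimax}{1+\pimax}, \tfrac{2\pimax}{1-\pimax}]$, which is exactly the content of Lemma~\ref{l412}, and (b) correctly orienting the recursion so that the subtree containing $\xmax$ (whose conditional max-probability is genuinely $\pimax/(\text{its weight})$, the state we recurse on in $G_t$) is the one whose $R_t$-term we retain, while the other subtree's contribution is nonpositive.
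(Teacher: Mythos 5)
Your proposal is correct and follows essentially the same route as the paper: induction on the support size, dropping the recursive term via $r_t\leq 0$ in the case $\pimax\in[t,1)$, and in the case $\pimax\in(0,t)$ retaining only the term for the subtree containing $\xmax$ (whose conditional max-probability is exactly $\pimax/q$, legal by Lemma~\ref{l412}) and closing with Lemma~\ref{l413}. The only slip is arithmetic and harmless: $f(1)=1-h(1)-1=0$, not $-1$, but $R_t(\pi)=-1\leq 0=r_t(1)$ still holds.
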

\begin{proof}
We proceed by induction on the size of $\supp{\pi}=\{x_i : \pi(x_i)\neq 0\}$.
If $\bigl\lvert\supp{\pi}\bigr\rvert = 1$ then $\pimax=1$, and therefore $R_t(\pi)=-1$, $r_t(\pimax)=0$, and indeed $R_t(\pi)\leq r_t(\pimax)$.
Assume now that $\bigl\lvert\supp{\pi}\bigr\rvert = k > 1$. Since $k>1$, it follows that $\pimax < 1$. There are two cases, according to whether $\pimax\in(0,t)$ or $\pimax\in [t,1)$.

If $\pimax\in(0,t)$ then $A_t$ asks whether $x \prec \xmid$, and continues accordingly with $\pi|_{\prec\xmid}$
or $\pi|_{\succeq\xmid}$. Let $\sigma \defeq \pi|_{\prec\xmid}$ and $\tau \defeq \pi|_{\succeq\xmid}$.
By Lemma~\ref{lem:redundancy-1-formula}:
\begin{align*}
R_t(\pi) &=  1 - h(\pi_{\prec\xmid}) + \pi_{\prec\xmid} R_t(\sigma) + \pi_{\succeq\xmid} R_t(\tau)\tag{since $\pimax\in(0,t)$}\\
	   &\leq 1 - h(\pi_{\prec\xmid}) + \pi_{\prec\xmid} r_t(\sigma_{\max}) + \pi_{\succeq\xmid} r_t(\tau_{\max})\tag{by induction hypothesis}
\end{align*}
Without loss of generality, assume that $\xmax \prec \xmid$. Therefore $\sigma_{\max} = {\pimax}/{\pi_{\prec\xmid}}$, and 
by Lemma ~\ref{l412}:
\begin{equation}\label{eq1}
\sigma_{\max}\in \left[\frac{2\pimax}{1+\pimax},\frac{2\pimax}{1-\pimax}\right]. 
\end{equation}
Thus,
\begin{align*}
R_t(\pi) &\leq 1 - h(\pi_{\prec\xmid}) + \pi_{\prec\xmid} r_t(\sigma_{\max})\tag{since $r_t(\tau_{\max})\leq 0$}\\
	 &= 1 - h\left(\frac{\pimax}{\sigma_{\max}}\right) + \frac{\pimax}{\sigma_{\max}}r_t(\sigma_{\max})\tag{$\sigma_{\max} = \frac{\pimax}{\pi_{\prec\xmid}}$}\\
	 &=s\left(\frac{\pimax}{\sigma_{\max}}\right) + \frac{\pimax}{\sigma_{\max}}r_t(\sigma_{\max})\tag{by definition of $s$}\\
	 &\leq r_t(\pimax).\tag{by~\eqref{eq1} and Lemma~\ref{l413}}
\end{align*}

The analysis when $\pimax \in [t,1)$ is very similar. In this case $A_t$ asks whether $x = \xmax$, and continues with $\pi|_{\neq\xmax}$ if $x \neq \xmax$. Let $\sigma \defeq \pi|_{\leq \xmax}$. By Lemma~\ref{lem:redundancy-1-formula},
\begin{align*}
 R_t(\pi) &= 1 - h(\pimax) - \pimax + (1-\pimax) R_t(\sigma)	 \tag{since $\pimax \in (t,1)$} \\ &\leq
 1 - h(\pimax) - \pimax + (1-\pimax) r_t(\sigma_{\max}) \tag{by induction hypothesis} \\ &\leq
 1 - h(\pimax) - \pimax \tag{since $r_t(\sigma_{\max}) \leq 0$} \\ &=
 f(\pimax) \tag{by definition of $f$} \\ &=
 r_t(\pimax). \tag{by definition of $r_t$, since $\pimax \geq t$}
\end{align*}
\end{proof}

The following lemma shows that $t=0.3$ satisfies $r_t(p)\leq 0$ for all $p\in (0,1]$,
completing the proof of Theorem~\ref{thm:redundancy-1}. It uses some technical results, proved below in Lemma~\ref{la01}.
\begin{lemma}\label{l417}
Let $t=0.3$. Then $r_{t}(p)\leq 0$ for all $p\in(0,1]$.
\end{lemma}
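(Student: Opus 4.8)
The plan is to prove $r_t(p) \le 0$ for all $p \in (0,1]$ when $t = 0.3$ by leveraging the explicit formula for the revenue from Lemma~\ref{l415}, namely $r(p_0,\ldots,p_k) = \sum_{i=0}^{k-1}\frac{p_0}{p_i}s\bigl(\frac{p_i}{p_{i+1}}\bigr) + \frac{p_0}{p_k}f(p_k)$. Since $r_t(p)$ is the supremum of this quantity over all admissible sequences, it suffices to show that \emph{every} terminating sequence $p_0,p_1,\ldots,p_k$ (with $p_k \ge t$, $p_i < t$ for $i<k$, and each $p_{i+1} \in [\frac{2p_i}{1+p_i},\frac{2p_i}{1-p_i}]$) yields $r(p_0,\ldots,p_k) \le 0$. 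Dividing through by $p_0$, the claim becomes
\[
 \sum_{i=0}^{k-1}\frac{1}{p_i}s\biggl(\frac{p_i}{p_{i+1}}\biggr) + \frac{1}{p_k}f(p_k) \le 0.
\]
Here $s(x) = 1 - h(x)$ is negative for $x$ bounded away from $0$ and $1$ (it is $\le 0$ on all of $(0,1]$ in fact, vanishing only at $x=1/2$... wait, $s(1/2) = 1 - 1 = 0$, and $s(x) < 0$ elsewhere), while $f(x) = 1 - h(x) - x$ is positive only for small $x$ — and crucially $p_k \ge t = 0.3$ forces $f(p_k)$ to be at most some modest positive constant. So the sole positive contribution is the last term, and it must be dominated by the accumulated negative contributions from the $s$ terms.

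The key steps, in order: (1) Record the elementary analytic facts about $s$ and $f$ needed — these are presumably collected in the referenced Lemma~\ref{la01}: e.g.\ upper bounds on $f(x)$ for $x \ge t$, and lower bounds on $-s(x)$ in terms of how far $x/p_{i+1}$ (which lies in $[\frac{1-p_i}{2},\frac{1+p_i}{2}]$, hence close to $1/2$) is from $1/2$. In particular $-s(\frac{1}{2}+\delta) \approx \frac{2\delta^2}{\ln 2}$ for small $\delta$, so each $s$-term contributes roughly $-\frac{2}{p_i}\cdot(\text{dist from }1/2)^2$. (2) Use Lemma~\ref{l414}, which gives the sharp geometric growth $p_{k-1-i} < \frac{t}{2^i(1-t)+t}$, to control the tail sum $\sum \frac{1}{p_i}$: the weights $\frac{1}{p_i}$ grow geometrically backward from $i=k$, so the dominant $s$-contribution comes from the last few steps before termination, exactly where $p_k$ jumps above $t$. (3) Combine: show that the single step from $p_{k-1}$ to $p_k$ — where $p_{k-1} < t$ but $p_k \ge t$, so $\frac{p_{k-1}}{p_k} < 1$, and in fact $\frac{p_{k-1}}{p_k}$ can be as small as $\frac{1-p_{k-1}}{2}$ — already contributes enough negativity via $\frac{1}{p_{k-1}}s(\frac{p_{k-1}}{p_k})$ to cancel $\frac{1}{p_k}f(p_k)$, and that the remaining terms only help. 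This likely reduces to verifying, over the two-dimensional region $\{(p_{k-1}, p_k) : p_{k-1} < t \le p_k,\ p_k \in [\frac{2p_{k-1}}{1+p_{k-1}}, \frac{2p_{k-1}}{1-p_{k-1}}]\}$, the single inequality
\[
 \frac{1}{p_{k-1}}s\biggl(\frac{p_{k-1}}{p_k}\biggr) + \frac{1}{p_k}f(p_k) \le 0,
\]
possibly after absorbing a bounded amount of slack from earlier terms; this is where the precise choice $t = 0.3$ (rather than, say, $t = 1/3$) is pinned down.

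The main obstacle I anticipate is step (3): the two-variable inequality above is not a clean convexity statement, and near the boundary $p_k = t$ with $p_{k-1}$ close to $t$ as well, both the positive term $f(p_k)$ and the negative term $s(p_{k-1}/p_k)$ degenerate ($f(t)$ is a fixed positive number while $s(p_{k-1}/p_k)$ with $p_{k-1}/p_k$ near $1$ is near $0$), so one genuinely needs to either include the contribution of the step \emph{into} $p_{k-1}$ (i.e.\ the term $\frac{1}{p_{k-2}}s(\frac{p_{k-2}}{p_{k-1}})$, which is robustly negative because $\frac{p_{k-2}}{p_{k-1}}$ is forced near $1/2$), or argue that when $p_{k-1}$ is near $t$ the ratio $p_{k-1}/p_k$ cannot be too close to $1$. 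I expect the cleanest route is: first handle the ``short'' case $k \le 1$ directly ($r_t(p_0) = f(p_0) \le 0$ when $p_0 \ge t$, which needs $f(x) \le 0$ for $x \ge t$ — true since $f$ is decreasing past its small-$x$ peak and $f(0.3) < 0$); then for $k \ge 2$ bound the sum of the last two $s$-terms plus the $f$-term uniformly, using Lemma~\ref{l414} to see $p_{k-2} < \frac{t}{2(1-t)+t} = \frac{t}{2-t}$ and the geometric decay to discard $i \le k-3$. The role of Lemma~\ref{la01} will be to supply exactly the handful of numerical/analytic bounds on $h$, $s$, $f$ that make these comparisons go through with $t = 0.3$.

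\begin{proof}
Deferred; we follow the plan above, invoking Lemma~\ref{la01} for the analytic estimates on $s$ and $f$, Lemma~\ref{l415} for the closed form of the revenue, and Lemma~\ref{l414} for the geometric growth of the state sequence.
\end{proof}
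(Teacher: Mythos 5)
Your plan starts from the right place (the closed form of Lemma~\ref{l415}, dividing by $p_0$, invoking Lemma~\ref{l414} for the geometric growth, and reducing to a numerical check at $t=0.3$), but it is built on a sign error that inverts the entire logic. You assert that $s(x)=1-h(x)$ is $\le 0$ on $(0,1)$, vanishing only at $x=1/2$. The opposite is true: $h(x)\le 1$ with equality only at $x=1/2$, so $s(x)\ge 0$ everywhere, with $s(x)>0$ for $x\ne 1/2$. Likewise $f(p_k)=1-h(p_k)-p_k$ is \emph{negative} for $p_k\ge t=0.3$ (indeed $f\le 0$ on $[0.23,1]$, and $f(0.3)\approx -0.18$) --- you state this correctly when handling $k=0$ but contradict it in step (3) by calling $f(t)$ ``a fixed positive number.'' Consequently the roles of the two kinds of terms are exactly reversed: the sole \emph{helpful} (negative) contribution is $\frac{1}{p_k}f(p_k)$, and every $\frac{1}{p_i}s\bigl(\frac{p_i}{p_{i+1}}\bigr)$ is a \emph{harmful} (non-negative) contribution. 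Your step (3) therefore proposes to cancel the good term with bad terms, and your claim that ``the remaining terms only help'' discards precisely the terms that must be controlled. The ``degeneracy'' you worry about near $p_k=t$ is an artifact of the sign error and does not exist.

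The correct argument (which the paper gives) runs the estimate in the opposite direction: upper-bound each $s$-term by $\frac{1}{p_i}s\bigl(\frac{1+p_i}{2}\bigr)=S(p_i)$ using $\frac{p_i}{p_{i+1}}\in\bigl[\frac{1-p_i}{2},\frac{1+p_i}{2}\bigr]$ and the fact that $s$ is symmetric about $1/2$ and increasing in $\lvert x-1/2\rvert$; then use Lemma~\ref{l414} together with the monotonicity of $S$ (Lemma~\ref{la01}) to dominate $\sum_{i=0}^{k-1}S(p_i)$ by the convergent series $\sum_{n=0}^{\infty}S\bigl(\tfrac{t}{2^n(1-t)+t}\bigr)$; bound $F(p_k)=f(p_k)/p_k$ over $p_k\in[t,\tfrac{2t}{1-t})$ by $\max\bigl(F(t),F(\tfrac{2t}{1-t})\bigr)$ via the convexity of $F$ (also Lemma~\ref{la01}); and finally verify numerically that the total is negative at $t=0.3$ (it is about $-0.031$). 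Note in particular that the negativity does not come from any single step of the walk but from $F(p_k)$ alone, which must absorb the entire accumulated positive $S$-sum; your proposed reduction to the two-variable inequality in $(p_{k-1},p_k)$ would not suffice even if it held, since the terms with $i\le k-2$ also add positive mass.
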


\begin{proof}
Let $p\in (0,1]$.
We consider two cases: (i) $p\geq t$, and (ii) $p < t$.
In each case we derive a constraint on $t$ that
suffices for ensuring that $r_t(p) \leq 0$, and conclude the proof 
by showing that $t= 0.3$ satisfies both constraints.

Consider the case $t\leq p$. Here $r_t(p) = f(p) = 1-h(p)-p$,
and calculation shows that $f(p)$ is non-positive on $[0.23,1]$;
therefore, $r_t(p)\leq 0$ for all $p\geq t$, as long as $t\geq 0.23$.

Consider the case $p < t$. Here, we are not aware of an explicit
formula for $r_t(p)$; instead, we derive the following upper bound, for all $p < t$:
\begin{equation}\label{eq2}
%\forall p < t:~
\frac{r_t(p)}{p} \leq \sum_{n=0}^\infty S\left(\frac{t}{2^n(1-t)+t}\right) + \max\left(F(t),F\Bigl(\frac{2t}{1-t}\Bigr)\right),
\end{equation}
where \[ S(x)=\frac{s\bigg(\displaystyle\frac{1+x}{2}\bigg)}{x}, ~~~~~ F(x)=\frac{f(x)}{x}. \]
With~\eqref{eq2} in hand we are done: indeed, calculation shows that the right hand side of~\eqref{eq2} is non-positive in some neighborhood of $0.3$ (e.g.\ it is $-0.0312$ when $t=0.3$, it is $-0.0899$ when $t = 0.294$); thus, as these values of $t$ also satisfy the constraint from (i), this finishes the proof.

It remains to prove~\eqref{eq2}.
Let $p=p_0,p_1,p_2,\ldots,p_k$ be a sequence of states chosen by Alice.  
It suffices to show that $\frac{r(p_0,\ldots,p_k)}{p_0}\leq\sum_{n=0}^\infty S\left(\frac{t}{2^n(1-t)+t}\right) + \max\left(F(t),F\Bigl(\frac{2t}{1-t}\Bigr)\right).$
By Lemma~\ref{l415}:
\begin{align*}
r(p_0,\ldots,p_k) &= \sum_{i=0}^{k-1}\frac{p_0}{p_i} s\biggl(\frac{p_{i}}{p_{i+1}}\biggr) + \frac{p_{0}}{p_k}f(p_k)\\
			&\leq \sum_{i=0}^{k-1}\frac{p_0}{p_i} s\biggl(\frac{1+p_{i}}{2}\biggr) + \frac{p_{0}}{p_k}f(p_k)\\
			&=p_0\Bigl(\sum_{i=0}^{k-1}{S(p_i)} + F(p_k)\Bigr),
\end{align*}
where in the second line we used that $\frac{p_{i}}{p_{i+1}}\in[\frac{1-p_i}{2}, \frac{1+p_i}{2}]$, and the fact that $s(x)=1-h(x)$ is symmetric around $x=0.5$ and increases with $\lvert x-0.5 \rvert$. Therefore,
\begin{align*}
\frac{r(p_0,\ldots,p_k)}{p_0} &\leq \sum_{i=0}^{k-1}{S(p_i)} + F(p_k)\\
					  &\leq \sum_{i=0}^{k-1} S\left(\frac{t}{2^i(1-t)+t}\right) + F(p_k)\\
          &\leq \sum_{i=0}^\infty S\left(\frac{t}{2^i(1-t)+t}\right) + \max\left(F(t),F\Bigl(\frac{2t}{1-t}\Bigr)\right),
\end{align*}
where in the second last inequality we used that $p_{k-1-i} < \frac{t}{2^i(1-t)+t}$ (Lemma~\ref{l414}) and that $S(x)$ is monotone (Lemma~\ref{la01} below), and in the last inequality we used that $p_k\in [t, \frac{2t}{1-t})$ and that $F(x)$ is convex (Lemma~\ref{la01} below).
\end{proof}

The following technical lemma completes the proof of Lemma~\ref{l417}.

\begin{lemma} \label{la01}
 The function $S(x) = \frac{1-h(\frac{1+x}{2})}{x}$ is monotone, and the function $F(x) = \frac{1-h(x)-x}{x}$ is convex.
\end{lemma}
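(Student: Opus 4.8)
The plan is to establish the two claims separately by reducing each to an elementary one-variable calculus fact about the binary entropy function $h$. Write $g(x) = 1 - h(x)$, so that $g$ is convex, symmetric about $x = 1/2$, vanishes at $x \in \{0,1\}$ and is minimized ($g = 0$) at nowhere in $(0,1)$ — in fact $g(1/2) = 0$ and $g$ is strictly increasing in $|x - 1/2|$, a fact already used in the proof of Lemma~\ref{l417}. With this notation $S(x) = g\bigl(\tfrac{1+x}{2}\bigr)/x$ and $F(x) = \bigl(g(x) - x\bigr)/x = g(x)/x - 1$, so it suffices to show $g(\tfrac{1+x}{2})/x$ is monotone on $(0,1]$ and $g(x)/x$ is convex on its relevant domain (a neighborhood of $[t, 2t/(1-t)]$ for $t$ near $0.3$, so say on $(0,1)$).

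For the monotonicity of $S$: substitute $y = \tfrac{1+x}{2} \in (\tfrac12, 1]$, so $x = 2y - 1$ and $S = g(y)/(2y-1)$. I would compute the derivative $\tfrac{d}{dy}\bigl[g(y)/(2y-1)\bigr] = \bigl[g'(y)(2y-1) - 2g(y)\bigr]/(2y-1)^2$ and show the numerator $N(y) = (2y-1)g'(y) - 2g(y)$ has constant sign on $(\tfrac12,1]$. Since $N(\tfrac12) = -2g(\tfrac12) = 0$, it is enough to check $N'(y) = 2g'(y) + (2y-1)g''(y) - 2g'(y) = (2y-1)g''(y)$, which is $\geq 0$ on $[\tfrac12,1]$ because $g'' = h''\cdot(-1)\geq 0$ wait — more carefully, $g''(x) = -h''(x) = \tfrac{1}{\ln 2}\cdot\tfrac{1}{x(1-x)} > 0$, so $N'(y) \geq 0$ for $y \geq \tfrac12$; hence $N$ is nondecreasing from $N(\tfrac12)=0$, so $N \geq 0$ throughout, giving $S' \geq 0$: $S$ is monotone nondecreasing. (The direction of monotonicity is all that is used.)

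For the convexity of $F$, equivalently of $G(x) \defeq g(x)/x$ on $(0,1)$: I would write $G(x) = 1/x - h(x)/x$ and either (a) differentiate twice directly, or (b) use the cleaner route $G(x) = \tfrac1x\int_0^x g'(u)\,du$ — no, the direct computation is cleanest. We have $G'(x) = \bigl(x g'(x) - g(x)\bigr)/x^2$ and $G''(x) = \bigl(x^2 g''(x) - 2x g'(x) + 2g(x)\bigr)/x^3$. So convexity of $F$ on $(0,1)$ reduces to showing the numerator $M(x) \defeq x^2 g''(x) - 2x g'(x) + 2g(x) \geq 0$ for $x \in (0,1)$. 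Here $M(0^+) = 2g(0) = 0$ (using $x^2 g''(x) = \tfrac{x}{(1-x)\ln 2} \to 0$ and $xg'(x)\to 0$), and $M'(x) = 2xg''(x) + x^2 g'''(x) - 2g'(x) - 2xg''(x) + 2g'(x) = x^2 g'''(x)$. Since $g'''(x) = -h'''(x)$ and $h'''(x) = \tfrac{1}{\ln 2}\cdot\tfrac{2x-1}{x^2(1-x)^2}$, we get $g'''(x) = \tfrac{1}{\ln 2}\cdot\tfrac{1-2x}{x^2(1-x)^2}$, so $M'(x) = \tfrac{1}{\ln 2}\cdot\tfrac{1-2x}{(1-x)^2}$, which is positive on $(0,\tfrac12)$ and negative on $(\tfrac12,1)$. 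Thus $M$ increases from $M(0^+)=0$ up to $x=\tfrac12$ and then decreases toward $M(1^-) = 1\cdot g''(1^-)\cdot\text{(}\ldots\text{)}$; I should check $M(1^-) \geq 0$: $x^2 g''(x) \to \tfrac{1}{\ln 2}$ wait $g''(1^-) = +\infty$ so $x^2g''(x)\to+\infty$ — so actually $M(1^-) = +\infty$, and combined with $M(0^+) = 0$ and $M$ unimodal, $M \geq 0$ on all of $(0,1)$. Hence $G''\geq 0$, so $F$ is convex.

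The main obstacle is purely bookkeeping: getting the signs of the successive derivatives of $h$ right (each differentiation flips a sign and introduces the $(1-2x)$ or $x(1-x)$ factors), and handling the boundary limits $x\to 0^+$ and $x\to 1^-$ where individual terms blow up but the relevant combinations stay controlled. Once the auxiliary functions $N$ and $M$ are identified, the argument is the standard "the function vanishes at an endpoint and its derivative has a single sign change, hence the function is single-signed" trick applied twice. No delicate estimate is needed — everything follows from $h'' < 0$ and the explicit formula for $h'''$.
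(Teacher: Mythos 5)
Your argument for the monotonicity of $S$ is correct, and it takes a genuinely different route from the paper: the paper expands $h\bigl(\tfrac{1+x}{2}\bigr)$ in its Maclaurin series and reads off both claims termwise ($S$ is a nonnegative combination of the monotone functions $x^{2k-1}$, and $F$ of the convex functions $(1-2x)^{2k}/x$), whereas you differentiate directly and use the ``numerator vanishes at an endpoint and its derivative is single-signed'' device. For $S$ this works cleanly: $N(y)=(2y-1)g'(y)-2g(y)$ satisfies $N(1/2)=0$ and $N'(y)=(2y-1)g''(y)\geq 0$, so $S$ is nondecreasing.

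The convexity half, however, has a genuine gap: two sign errors make the written argument fail. First, $g(0)=1-h(0)=1$, not $0$ (it is $h$, not $g=1-h$, that vanishes at the endpoints), so $M(0^+)=2g(0)=2$, not $0$. Second, and more importantly, $h'''(x)=\frac{1}{\ln 2}\cdot\frac{1-2x}{x^2(1-x)^2}$ (differentiate $h''(x)=-\frac{1}{\ln 2\, x(1-x)}$), so $g'''(x)=-h'''(x)=\frac{1}{\ln 2}\cdot\frac{2x-1}{x^2(1-x)^2}$ and $M'(x)=x^2g'''(x)=\frac{1}{\ln 2}\cdot\frac{2x-1}{(1-x)^2}$ is \emph{negative} on $(0,\tfrac12)$ and \emph{positive} on $(\tfrac12,1)$ --- the opposite of what you claim. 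Thus $M$ is U-shaped with a minimum at $x=\tfrac12$, not increasing-then-decreasing; your own observation that $M(1^-)=+\infty$ while $M$ supposedly decreases toward $1^-$ should have flagged this. As written, the argument never establishes $M\geq 0$ at the point where it is actually smallest. The fix is short and makes the boundary limits unnecessary: since $M$ attains its minimum over $(0,1)$ at $x=\tfrac12$, and $M(\tfrac12)=\tfrac14 g''(\tfrac12)-g'(\tfrac12)+2g(\tfrac12)=\tfrac14\cdot\tfrac{4}{\ln 2}=\tfrac{1}{\ln 2}>0$ (using $g(\tfrac12)=g'(\tfrac12)=0$), we get $M>0$ on $(0,1)$ and hence $F''\geq 0$. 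With that correction your approach goes through and is a legitimate calculus alternative to the paper's series argument.
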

\begin{proof}
The function $h(\frac{1-x}{2})$ is equal to its Maclaurin series for $x\in (-1,+1)$:
\[ h\Bigl(\frac{1+x}{2}\Bigr) = 1 - \sum_{k=1}^{\infty}{\frac{\log_2{e}}{2k(2k-1)}\cdot x^{2k}}. \]
Therefore,
\[ S(x) = \sum_{k=1}^{\infty}{\frac{\log_2{e}}{2k(2k-1)}\cdot x^{2k-1}}, \]
and 
\[ F(x) =  \Bigl(\sum_{k=1}^{\infty}{\frac{\log_2{e}}{2k(2k-1)}}\cdot\frac{(1-2x)^{2k}}{x}\Bigr) - 1. \]
Now, each of the functions $x^{2k-1}$ is monotone, and each of the functions 
$\frac{(1-2x)^{2k}}{x}$ is convex on $(0,\infty)$: its second derivative is
\[
 \frac{2(1-2x)^{2k-2}(1 + 4(k-1)x + 4(k-1)(2k-1)x^2)}{x^3} > 0.
\]
Therefore, $S(x)$ is monotone as a non-negative combination of monotone functions,
and $F(x)$ is convex as a non-negative combination of convex functions.
\end{proof}

\section{Information theoretical benchmark --- Shannon's entropy} \label{sec:redundancy-r}
In this section we study the minimum number of questions that achieve redundancy of at most $r$, for a fixed $r \geq 1$. Note that $r=1$ is the optimal redundancy: the distribution $\pi$ on $X_2$ given by $\pi_1 = 1-\epsilon,\allowbreak \pi_2=\epsilon$ has redundancy $1-\tilde{O}(\epsilon)$ (that is, $1-O(\epsilon \log(1/\epsilon))$ even without restricting the set of allowed questions.

In the previous section we have shown that the optimal redundancy of $r=1$ can be achieved with just $2n$ comparison and equality queries (in fact, as we show below, there are only $2n-3$ of these queries).
It is natural to ask how small the number of questions can be if we allow for a larger $r$. 
Note that at least $\log n$ questions are necessary to achieve any finite redundancy. Indeed, a smaller set of questions is not capable of specifying all elements even if all questions are being asked.

The main result of this section is that the minimum number of questions that are sufficient for achieving
redundancy $r$ is roughly $r\cdot n^{1/ \lfloor r \rfloor}$:

\begin{main-theorem} \label{thm:redundancy-r-bounds}
For every $r \geq 1$ and $n\in\mathbb{N}$,
\[
 \frac{1}{e}\lfloor r \rfloor n^{1/\lfloor r \rfloor} \leq \uent(n,r) \leq 2 \lfloor r \rfloor n^{1/\lfloor r \rfloor}.
\]
In particular, $\uent(n,r) = \Theta\bigl(\lfloor r\rfloor n^{1/\lfloor r\rfloor}\bigr)$.
\end{main-theorem}

%The algorithm underlying the upper bound can be implemented efficiently, along the lines of Theorem~\ref{thm:At-implementation}; we leave the details to the reader.

\subsection{Upper bound} \label{sec:redundancy-r-ub}
The upper bound in Theorem~\ref{thm:redundancy-r-bounds} is based on the following corollary of Theorem~\ref{thm:redundancy-1}:

\begin{theorem}\label{thm:redundancy-r}
Let $Y$ be a linearly ordered set, and let $Z=Y^k$ (we don't think of $Z$ as ordered).

For any distribution $\pi$ on $Z$ there is an algorithm
that uses only questions of the form (i) ``$\vec x_i \prec y$?'' and (ii) ``$\vec x_i = y$?'', where $i\in[k]$ and $y\in Y$,
whose cost is at most $H(\pi) + k$.
\end{theorem}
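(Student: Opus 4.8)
The plan is to uncover the vector $\vec x = (\vec x_1,\ldots,\vec x_k)$ coordinate by coordinate, applying the algorithm of Theorem~\ref{thm:redundancy-1} to each coordinate in turn, conditioned on the values of the coordinates already revealed. Concretely, I would first run Algorithm $A_{0.3}$ on the marginal distribution of $\vec x_1$ over the linearly ordered set $Y$; this uses only questions of the form ``$\vec x_1 \prec y$?'' and ``$\vec x_1 = y$?'', and by Theorem~\ref{thm:redundancy-1} its expected number of questions is at most $H(\vec x_1) + 1$. Having learned $\vec x_1$, I would then apply $A_{0.3}$ to the conditional distribution of $\vec x_2$ given the observed value of $\vec x_1$, again using only coordinate-$2$ comparison and equality queries, at expected cost at most $H(\vec x_2 \mid \vec x_1 = (\text{observed value})) + 1$. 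Iterating, after $k$ stages the entire vector is determined, and all questions used are of the two allowed forms.

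The key step is then a chain-rule accounting of the total expected cost. Let $C_i$ be the (random) number of questions asked in stage $i$. Taking expectations and using the bound on each stage conditioned on the full history up to stage $i$ — which is exactly $H(\vec x_i \mid \vec x_1,\ldots,\vec x_{i-1} = \text{their observed values}) + 1$ in expectation over the remaining randomness — I get
\[
\E\Bigl[\sum_{i=1}^k C_i\Bigr] = \sum_{i=1}^k \E[C_i] \leq \sum_{i=1}^k \bigl(H(\vec x_i \mid \vec x_1,\ldots,\vec x_{i-1}) + 1\bigr) = H(\pi) + k,
\]
where the last equality is the chain rule for entropy, $H(\vec x_1,\ldots,\vec x_k) = \sum_{i=1}^k H(\vec x_i \mid \vec x_1,\ldots,\vec x_{i-1})$, and $\pi$ is the joint distribution of $\vec x$. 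This gives the claimed cost bound of $H(\pi) + k$.

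The main subtlety — more a bookkeeping obstacle than a genuine difficulty — is making the conditioning rigorous: in stage $i$ the algorithm $A_{0.3}$ is being run on a \emph{data-dependent} distribution (the conditional law of $\vec x_i$ given the prefix), so one must phrase the per-stage guarantee of Theorem~\ref{thm:redundancy-1} as holding for \emph{every} distribution on $Y$, then take the expectation over which conditional distribution actually arises. One should also note the harmless edge cases: if some coordinate is constant given the prefix, that stage contributes $0$ questions and $H = 0$, consistent with the bound; and the identification $D \equiv X_{|D|}$ from the description of $A_t$ lets us treat each conditional support as a linearly ordered set of the appropriate size. Assembling these observations yields the theorem.
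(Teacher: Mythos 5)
Your proposal is correct and is essentially identical to the paper's proof: both uncover the coordinates one at a time using the Theorem~\ref{thm:redundancy-1} algorithm on the conditional distribution of $\vec x_i$ given the revealed prefix, and both conclude by summing the per-stage bounds $H(\vec x_i \mid \vec x_1,\ldots,\vec x_{i-1}) + 1$ via the chain rule. The extra care you take with the data-dependent conditioning and the degenerate stages is sound and only makes explicit what the paper leaves implicit.
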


\begin{proof}
Let $Z_1Z_2\ldots Z_k \sim \pi$. Consider the algorithm which determines $Z_1,\ldots,Z_k$ in order, where $Z_i$ is determined by applying the algorithm from Theorem~\ref{thm:redundancy-1} on ``$Z_i \vert Z_1\ldots Z_{i-1}$", which is the conditional distribution of $Z_i$ given the known values of $Z_1,\ldots,Z_{i-1}$.
The expected number of queries is at most
\[
 \bigl(H(Z_1) + 1\bigr) + \bigl(H(Z_2 \vert Z_1) + 1\bigr) + \cdots + \bigl(H(Z_k\vert Z_1\ldots Z_{k-1}) + 1\bigr) =
 H(Z_1\ldots Z_k) + k,
\]
using the chain rule.
\end{proof}

We use this theorem to construct a set of questions of size at most $2 \lfloor r \rfloor n^{1/\lfloor r \rfloor}$
that achieves redundancy $r$ for any distribution over $X_n$.

Note that $n \leq \Bigl(\bigl\lceil n^{1/\lfloor r\rfloor}\bigr\rceil\Bigr)^{\lfloor r\rfloor}$.
Therefore every element $x\in X_n$ can be represented by a vector $\vec x \in \allowbreak \bigl\{1,\dots,\lceil n^{1/\lfloor r\rfloor}\rceil \bigr\} ^ {\lfloor r\rfloor}$.
Let $\cQ$ be the set containing all questions of the form (i) ``$\vec x_i = y$?'' 
and (ii) ``$\vec x_i \prec y$?''. 
By Theorem~\ref{thm:redundancy-r}, $r(\cQ)=\lfloor r\rfloor$.

The following questions from $\cQ$ are redundant, and can be removed from $\cQ$ without increasing its redundancy:
(i) ``$\vec x_i \prec 1$?'' (corresponds to the empty set and therefore provides no information), (ii) ``$\vec x_i \prec 2$?'' (equivalent to the question ``$\vec x_i = 1$?''), and (iii) ``$\vec x_i \prec \lceil n^{1/\lfloor r\rfloor}\rceil$?'' (equivalent to the question ``$\vec x_i = \lceil n^{1/\lfloor r\rfloor}\rceil$?'').
The number of remaining questions is
\[ \lfloor r \rfloor\cdot\Bigl(2\bigl\lceil n^{1/\lfloor r\rfloor}\bigr\rceil - 3\Bigr) \leq 2\lfloor r \rfloor\cdot\Bigl( n^{1/\lfloor r\rfloor} \Bigr). \]
This proves the upper bound in Theorem~\ref{thm:redundancy-r-bounds}.

\subsection{Lower bound} \label{sec:redundancy-r-lb}
The crux of the proof of the lower bound in Theorem~\ref{thm:redundancy-r-bounds} is that if $\cQ$
is a set of questions whose redundancy is at most $r$ then every $x\in X_n$
can be identified by at most $\lfloor r\rfloor$ questions from $\cQ$.

We say that the questions $q_1,\ldots,q_T$ \emph{identify} $x$
if for every $y\neq x$ there is some $i\leq T$ such that
$q_i(x)\neq q_i(y)$. Define $t(n,r)$ to be the minimum cardinality 
of a set $\cQ$ of questions such that every $x\in X$ has at most
$r$ questions in $\cQ$ that identify it.
The quantity $t(n,r)$ can be thought of as a non-deterministic version of $u(n,r)$:
it is the minimal size of a set of questions so that every element can be ``verified''
using at most $r$ questions.

The lower bound on $u(n,r)$ follows from Lemma~\ref{l52} and Lemma~\ref{l53} below. 
\begin{lemma}\label{l52}
For all $n,r$, $u(n,r)\geq t(n,\lfloor r \rfloor)$.
\end{lemma}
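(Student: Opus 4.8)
The plan is to establish the inequality $u(n,r) \geq t(n,\lfloor r\rfloor)$ by showing that any question set $\cQ$ witnessing $u(n,r)$ — i.e.\ with $\rent(\cQ) \leq r$ — also witnesses the bound defining $t(n,\lfloor r\rfloor)$: every element $x \in X_n$ is identified by at most $\lfloor r\rfloor$ questions from $\cQ$. Fix such a $\cQ$ and fix an arbitrary element $x_j \in X_n$. The key idea is to feed the algorithm a distribution that is almost entirely concentrated on $x_j$, so that the redundancy bound forces the decision tree to reach $x_j$'s leaf very quickly.

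Concretely, I would consider, for small $\eps > 0$, the distribution $\pi^{(\eps)}$ that assigns probability $1-\eps$ to $x_j$ and spreads the remaining $\eps$ mass over the other $n-1$ elements (say uniformly, or in whatever way is most convenient). Then $H(\pi^{(\eps)}) = h(\eps) + \eps \log(n-1) = O(\eps\log(1/\eps) + \eps\log n) \to 0$ as $\eps \to 0$. Since $\rent(\cQ) \leq r$, there is a decision tree $T$ using only questions from $\cQ$, valid for $\pi^{(\eps)}$, with $T(\pi^{(\eps)}) \leq H(\pi^{(\eps)}) + r$. Writing $d_j = T(x_j)$ for the depth of $x_j$'s leaf, we have $T(\pi^{(\eps)}) \geq (1-\eps) d_j$, hence $(1-\eps)d_j \leq H(\pi^{(\eps)}) + r$. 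Letting $\eps \to 0$ (note $\cQ$ is finite, so only finitely many decision trees on $X_n$ exist, and one can pass to a subsequence of $\eps$'s along which $T$ — and hence $d_j$ — is constant) gives $d_j \leq r$, and since $d_j$ is an integer, $d_j \leq \lfloor r\rfloor$. But the path from the root to $x_j$'s leaf in $T$ consists of exactly $d_j$ questions from $\cQ$, and these questions identify $x_j$: any $y \neq x_j$ in the support would be routed to a different leaf, so some question along the path separates $y$ from $x_j$. (For $y$ not in the support one enlarges the tree or argues directly; since $\pi^{(\eps)}$ has full support this is not an issue.)

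Thus every $x_j \in X_n$ has at most $\lfloor r\rfloor$ questions in $\cQ$ identifying it, so $\cQ$ is admissible for the definition of $t(n,\lfloor r\rfloor)$, giving $|\cQ| \geq t(n,\lfloor r\rfloor)$. Taking $\cQ$ of minimum size yields $u(n,r) \geq t(n,\lfloor r\rfloor)$, as claimed.

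The main obstacle — really the only subtlety — is making the limiting argument $\eps \to 0$ rigorous: a priori the optimal tree $T$ could depend on $\eps$, so one cannot simply ``plug in $\eps = 0$.'' The clean fix is the finiteness of $\cQ$ (hence of the set of valid decision trees over $X_n$ using $\cQ$): there are finitely many candidate trees, so some fixed tree $T^\star$ is optimal for $\pi^{(\eps)}$ along a sequence $\eps \to 0$, and for that $T^\star$ the inequality $(1-\eps) T^\star(x_j) \leq H(\pi^{(\eps)}) + r$ passes to the limit to give $T^\star(x_j) \leq r$. One should also double-check the supremum-versus-maximum issue in the definition of $\rent(\cQ)$: it is a supremum over distributions, so for each fixed $\eps$ we genuinely have a tree achieving cost $\leq H(\pi^{(\eps)}) + r + o(1)$, which is equally good for the argument. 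Everything else is bookkeeping.
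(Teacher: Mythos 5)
Your proof is correct and follows essentially the same route as the paper: the same near-point-mass distribution $\pi^{(\eps)}$, the same inequality $(1-\eps)d_j \leq H(\pi^{(\eps)})+r$, and the observation that the questions on the path to $x_j$'s leaf identify it. The only (minor) difference is that your subsequence/limit argument is unnecessary: for any single sufficiently small $\eps$ the right-hand side $\frac{H(\pi^{(\eps)})+r}{1-\eps}$ is already strictly below $\lfloor r\rfloor+1$, so integrality of $d_j$ gives $d_j\leq\lfloor r\rfloor$ directly.
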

\begin{proof}
It suffices to show that for every set of questions $\cQ$ with redundancy at most $r$,
every $x\in X$ has at most $\lfloor r\rfloor$ questions in $\cQ$ that identify it.

Consider the distribution $\pi$ given by $\pi(x) = 1-\epsilon$ and $\pi(y) = \epsilon/(n-1)$ for $y \neq x$. Thus $H(\pi) = \tilde{O}(\epsilon)$.
Consider an algorithm for $\pi$ with redundancy $r$ that uses only questions from $\cQ$. Let $T$ be the number of questions it uses to find $x$. The cost of the algorithm is at least $(1-\epsilon)T$, and so $(1-\epsilon)T \leq H(\pi) + r = \tilde{O}(\epsilon) + r$, implying $T \leq \tilde{O}(\epsilon) + (1+\frac{\epsilon}{1-\epsilon})r$. For small enough $\epsilon > 0$, the right-hand side is smaller than $\lfloor r \rfloor + 1$, and so $T \leq \lfloor r \rfloor$.
\end{proof}

Lemma~\ref{l52} says that in order to lower bound $u(n,r)$, it suffices to lower bound $t(n,\lfloor r\rfloor)$, which is easier to handle. 
For example, the following straightforward argument shows that $t(n,R)\geq \frac{1}{2e}R n^{ 1/R}$, for every $R,n\in\mathbb{N}$. Assume $\cQ$ is a set of questions of size $u(n,R)$ so that every $x$ is identified by at most $R$ questions.
This implies an encoding (i.e.\ a one-to-one mapping) of $x\in X_n$ by the $R$ questions identifying it, and by the bits indicating whether $x$ satisfies each of these questions. Therefore
\begin{align*}
n \leq \binom{|\cQ|}{\leq R }2^{R}
   \leq \Bigl(\frac{2e|\cQ|}{R}\Bigr)^{R}, 
\end{align*}
where in the last inequality we used that $\binom{m}{\leq k}\leq \bigl(\frac{em}{k}\bigr)^k$ for all $m,k$.
This implies that $t(n,\lfloor r\rfloor) \geq \frac{1}{2e}\lfloor r\rfloor n^{ 1/\lfloor r \rfloor}$.
The constant $\frac{1}{2e}$ in front of $\lfloor r\rfloor n^{ 1/\lfloor r \rfloor}$ can be increased to $\frac{1}{e}$, 
using an algebraic argument:
\begin{lemma}\label{l53}
For all $n,R\in\mathbb{N}$: \[ t(n,R)\geq \frac{1}{e}R\cdot n^{1/R}. \]
\end{lemma}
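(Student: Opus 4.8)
The plan is to refine the naive counting bound $n \leq \binom{|\cQ|}{\leq R} 2^R$ by being more careful about how an element $x$ is encoded by the questions that identify it. The key observation is that if $x$ is identified by a set $S$ of at most $R$ questions from $\cQ$, then we do not need to separately record the $|S|$ bits $(q(x))_{q \in S}$: the subset $S$ together with, for each $q \in S$, the bit $q(x)$, is really the same data as specifying, for each of the $R$ slots, an element of $\cQ \times \{0,1\}$ (a signed question), with the convention that unused slots are filled in some canonical way. More precisely, I would encode $x$ by an unordered multiset of at most $R$ "signed questions" $(q,b) \in \cQ \times \{0,1\}$, where $(q,b)$ appears iff $q \in S$ and $q(x) = b$. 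Distinct $x$'s must get distinct encodings, because if $x \neq y$ received the same signed-question multiset, then in particular they agree on $q(x)=q(y)$ for every $q$ in that common set $S$ — but $S$ identifies $x$, contradiction. Hence $n \leq \binom{2|\cQ|}{\leq R}$, i.e.\ roughly $n \leq (2e|\cQ|/R)^R$ — but this is the same as before and does not improve the constant, so a sharper idea is needed.

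The sharper idea — "an algebraic argument" — is to count more tightly by exploiting that a signed question and its negation cannot both be used for the same $x$. So the right count is: choose a sub-collection of at most $R$ \emph{unsigned} questions that identify $x$, which is $\binom{|\cQ|}{\leq R}$ ways, but then the signs are \emph{forced} by $x$ — they carry no extra factor of $2^R$ if we think of the encoding as a partial function. Concretely I would argue that the map sending $x$ to the function $\phi_x \colon \cQ \to \{0,1,\star\}$ with $\phi_x(q) = q(x)$ for $q$ in a chosen identifying set $S_x$ (with $|S_x| \leq R$) and $\phi_x(q) = \star$ otherwise is injective, and the number of such functions with at most $R$ non-$\star$ values is $\sum_{k=0}^{R}\binom{|\cQ|}{k}$ — still the same bound. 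The genuine improvement from $\tfrac{1}{2e}$ to $\tfrac1e$ must instead come from optimizing the inequality $n \le \binom{|\cQ|}{\le R} 2^R$ directly: writing $m = |\cQ|$, one uses the bound $\binom{m}{\le R} \le \binom{m + \text{(something)}}{R}$ or, more cleanly, the estimate $\binom{m}{\le R} 2^R \le \binom{m+R}{R}2^R \cdot(\text{correction})$ and then $\binom{m+R}{R} \le \bigl(e(m+R)/R\bigr)^R$; balancing constants and using $m \ge R n^{1/R}$-type manipulations should yield the cleaner constant. I expect the cleanest route is: from $n \le \binom{m}{\le R}2^R$ and the identity $\binom{m}{\le R}2^R = \binom{m}{\le R}\sum_j \binom{R}{j} = \sum_{k} \binom{m}{\le R}$... — more honestly, one shows $\binom{m}{\le R} 2^R \le \binom{2m}{R}\cdot c$ for a constant $c<1$ absorbed suitably, and then $\binom{2m}{R} \le (2em/R)^R$ gives $n \le (2em/R)^R$, whence $m \ge \tfrac{R}{2e} n^{1/R}$ — which is again only $\tfrac{1}{2e}$. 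So the actual gain must use a non-asymptotic version: $\binom{m}{\le R}2^R \le (em/R)^R \cdot e^R / (\text{?})$; I would chase the precise constant by writing $n \le \sum_{k=0}^R \binom{m}{k}2^k = \sum_{k=0}^R \binom{m}{k}2^k$ and comparing termwise with $\bigl(\tfrac{em}{R}\cdot\tfrac{e\cdot 2^{?}}{}\bigr)$ — this bookkeeping is the technical heart and I would carry it out by the standard trick of bounding the sum by its largest term times $R$, or by the generating-function inequality $\sum_k \binom{m}{k}x^k \le (1+x)^m \le e^{xm}$ with $x$ chosen to make $e^{xm} x^{-R}$ minimal, namely $x = R/m$, giving $n \le \sum_k \binom{m}{k}2^k$ — wait, here $x=2$ is fixed, so instead: $n \le \binom{m}{\le R}2^R$, and since each identifying bit-vector lies in $\{0,1\}^R$ we get $n \le \binom{m}{R}2^R$ assuming WLOG exactly $R$ questions (pad), so $n \le \binom{m}{R}2^R \le \bigl(\tfrac{em}{R}\bigr)^R 2^R = \bigl(\tfrac{2em}{R}\bigr)^R$, forcing $m \ge \tfrac{R}{2e}n^{1/R}$; the improvement to $\tfrac1e$ comes from the better estimate $\binom{m}{R} \le \tfrac{m^R}{R!}$ and Stirling $R! \ge (R/e)^R$, giving $n \le \tfrac{m^R}{R!}2^R \le (2em/R)^R$ — \emph{still} $\tfrac1e \cdot \tfrac12$. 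So the final factor of $2$ is saved precisely because we do not need the $2^R$: if we encode $x$ by the \emph{signed} identifying set directly (an element of the set of $\le R$-subsets of the $2m$ signed questions that is "consistent", i.e.\ never contains both $(q,0)$ and $(q,1)$), the count of consistent subsets of size $\le R$ is $\binom{m}{\le R} 2^R$ again — no, a consistent size-$k$ subset is exactly: choose $k$ questions and a sign for each, $\binom{m}{k}2^k$. Summing gives $\binom{m}{\le R}2^R$; hence the $2^R$ is intrinsic, and the only honest source of the constant $\tfrac1e$ versus $\tfrac1{2e}$ is the inequality $\binom{m}{\le R} \le \bigl(\tfrac{e m}{R}\bigr)^R \big/ 2^R$ — i.e.\ $\binom{m}{\le R}2^R \le \bigl(\tfrac{em}{R}\bigr)^R$, which holds when $m \ge R$ by a direct check. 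This is the algebraic inequality to prove, and then $n \le (em/R)^R$ yields $m \ge \tfrac1e R n^{1/R}$, as claimed.

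The main obstacle, then, is establishing the clean inequality $\binom{m}{\le R} 2^R \le \bigl(\tfrac{em}{R}\bigr)^R$ for integers $m \ge R \ge 1$ — which is where the "algebraic argument" lives. I would prove it by the substitution $m = R\lambda$ and reducing to $\bigl(\sum_{k\le R}\binom{R\lambda}{k}\bigr)2^R \le (e\lambda)^R R^0$; using $\binom{R\lambda}{k} \le \binom{R\lambda}{R}(\lambda$-dependent ratio$)$ and summing the geometric-like tail, or more simply bounding $\binom{m}{\le R}2^R \le \binom{m+R}{R}2^R$ — hmm, I would actually prefer the slick route: $\binom{m}{\le R}2^R$ counts pairs (subset of size $\le R$, sign assignment), which injects into functions $\cQ \to \{-1,0,+1\}$ with $\le R$ nonzeros, of which there are $\binom{m}{\le R}2^R$, and this is at most $3^m$ trivially but we want the $(em/R)^R$ bound — obtained from $\binom{m}{\le R}2^R = \sum_{k=0}^R \binom{m}{k}2^k \le \sum_{k=0}^\infty \frac{m^k}{k!}2^k \cdot [k \le R] \le$ ... the cleanest is: $\binom{m}{k}2^k \le \frac{(2m)^k}{k!} \le \frac{(2m)^R}{R!}\cdot(\text{geometric in } R/(2m))$ when $2m \ge R$, and summing the geometric series and using $R! \ge (R/e)^R$ gives exactly $(2e m/R)^R/(1 - R/2m)^{-1}$; squeezing the constant requires instead keeping $\binom{m}{\le R} \le \bigl(\tfrac{em}{R}\bigr)^R / 2^R$ as a clean standalone fact, provable by induction on $R$ or by the integral/entropy estimate $\binom{m}{\le R} \le 2^{m \cdot h(R/m)}$ combined with calculus. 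I would present whichever of these is shortest once I sit down with it; the rest of the lemma is then one line: $n \le \binom{|\cQ|}{\le \lfloor r\rfloor} 2^{\lfloor r\rfloor} \le \bigl(e|\cQ|/\lfloor r\rfloor\bigr)^{\lfloor r\rfloor}$ wait — with $R$ in place of $\lfloor r \rfloor$ — gives $|\cQ| \ge \tfrac1e R n^{1/R}$, and since $|\cQ| = t(n,R)$ by choice of $\cQ$, we are done.
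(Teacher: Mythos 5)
There is a genuine gap, and it sits exactly where you suspected: the ``clean inequality'' $\binom{m}{\leq R}\,2^R \leq \bigl(\tfrac{em}{R}\bigr)^R$ that your whole argument ultimately rests on is false for every $R \geq 2$ once $m$ is large. For fixed $R$ and $m\to\infty$ the left side is $\sim \tfrac{m^R}{R!}2^R$ and the right side is $\tfrac{e^R m^R}{R^R}$, so their ratio tends to $\tfrac{2^R R^R}{R!\,e^R} = \tfrac{2^R}{\sqrt{2\pi R}}(1+o(1))$ by Stirling, which exceeds $1$ already at $R=2$ (concretely, $R=2$, $m=10$ gives $\binom{10}{\le 2}\cdot 4 = 224$ versus $(5e)^2 \approx 184.7$). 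Your own back-and-forth in the write-up correctly diagnoses why: any encoding of $x$ by a signed identifying set genuinely carries about $\binom{m}{R}2^R$ worth of information --- the signs are not recoverable from the unsigned set, and the partial-function reformulation counts the same objects --- so no injection-into-signed-subsets argument can do better than $n \leq \sum_{k\leq R}\binom{m}{k}2^k$, which is stuck at the constant $\tfrac{1}{2e}$.

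The missing idea is to replace counting by a dimension count over $\mathbb{F}_2$ (the ``polynomial method''). Map each $x$ to its answer vector $u_x = (q_1(x),\ldots,q_m(x)) \in \{0,1\}^m$ and let $U = \{u_x\}$. If $q_{i_1},\ldots,q_{i_T}$ (with $T \leq R$) identify $x$, then the product $\prod_{j=1}^{T}\bigl(y_{i_j} - q_{i_j}(x) + 1\bigr)$ is a multilinear polynomial of degree at most $R$ that equals the delta function $\delta_x$ on $U$. Hence every function $U \to \mathbb{F}_2$ is a polynomial of degree at most $R$, so $n = \dim\{\text{functions on }U\} \leq \binom{m}{\leq R} \leq \bigl(\tfrac{em}{R}\bigr)^R$ --- the signs $q_{i_j}(x)$ are absorbed into the polynomial's coefficients and contribute nothing to the dimension of the ambient space, which is precisely the factor of $2^R$ you could not eliminate by encoding. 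From $n \leq (e m/R)^R$ the bound $m \geq \tfrac{1}{e}R\,n^{1/R}$ follows as you wrote.
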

\begin{proof}
We use the so-called polynomial method. 
Let $\cQ$ be a set of questions such that each $x\in X$ can be identified by at most $R$ queries.
For each $x\in X$, let $u_x$  be the $|\cQ|$-dimensional vector $u_x=\bigl(q_1(x),\ldots,q_{\lvert \cQ\rvert}(x)\bigr)$, and let $U=\{u_x : x\in X\}\subseteq\{0,1\}^{\lvert \cQ\rvert}$. We will show that every function $F\colon U \to \mathbb{F}_2$ can be represented as a multilinear polynomial of degree at most $R$ in $|\cQ|$ variables. Since the dimension over $\mathbb{F}_2$ of all such functions is $n$, whereas the dimension of the space of all multilinear polynomials of degree at most $R$ is $\binom{|\cQ|}{\leq R}$, the bound follows:
\[
 n \leq \binom{|\cQ|}{\leq R} \leq \Bigl(\frac{e|\cQ|}{R}\Bigr)^R \Longrightarrow n \geq \frac{1}{e} R \cdot n^{1/R}.
\] 
 
It is enough to show that for any $u_x\in U$, the corresponding ``delta function'' $\delta_x\colon U\to\mathbb{F}_2$, defined as $\delta_x(u_{x})=1$ and $\delta_x(v) =0$ for $u_{x}\neq v\in U$, can be represented as a polynomial of degree at most $d$.
Suppose that $q_{i_1},\ldots, q_{i_T}$ are $T\leq R$ questions that identify $x$. Consider the polynomial
\[
 P(y_1,\ldots,y_{\lvert \cQ\rvert}) = (y_{i_1} - q_{i_1}(x) + 1) \cdots (y_{i_r} - q_{i_r}(x) + 1).
\]
 Clearly $P(u_x) = 1$. On the other hand, if $P(u_y) = 1$ then $q_{i_j}(y) = q_{i_j}(x)$ for all $j$, showing that $y = x$. So $P = \delta_x$, completing the proof.
\end{proof}

Our proof of the lower bound in Theorem~\ref{thm:redundancy-r-bounds} is based
on $t(n,R)$, which is the minimum cardinality of a set of queries such that each element can
be identified by at most $R$ questions. This quantity is closely related to witness codes~\cite{Meshulam,CRZ08}; see~\cite{DaganThesis} for more details.

\section{Combinatorial benchmark --- Huffman codes} \label{sec:huffman}

Section~\ref{sec:comparison-equality} shows that the optimal redundancy, namely $1$, can be achieved using only $O(n)$ questions. However, it is natural to ask for an \emph{instance}-optimal algorithm? That is, we are looking for a set of questions which matches the performance of minimum redundancy codes such as Huffman codes.

Let us repeat the definition of an optimal set of questions that is central in this section.
\begin{main-definition} \label{dfn:optimal-set-of-questions}
A set \hitter of subsets of $X_n$ is an \emph{optimal set of questions over \groundset} if for all distributions $\mu$ on \groundset,
\[
 \costt{\hitter}{\mu} = \opt{\mu}.
\]	
\end{main-definition}

Using the above definition, $\uhuf(n,0)$ is equal to the minimal size of an optimal set of questions over $X_n$.
Perhaps surprisingly, the trivial upper bound of $2^{n-1}$ on $\uhuf(n,0)$ can be exponentially improved:
\begin{main-theorem} \label{thm:minimum-redundancy-ub}
 We have
\[
 \uhuf(n,0) \leq 1.25^{n+o(n)}.
\]
\end{main-theorem}

We prove a similar lower bound, which is almost tight for infinitely many $n$:
\begin{main-theorem} \label{thm:minimum-redundancy-lb}
 For $n$ of the form $n=5 \cdot 2^m$,
\[
 \uhuf(n,0) \geq 1.25^{n}/O(\sqrt{n}).
\]

 For all $n$,
\[
 \uhuf(n,0) \geq 1.232^n/O(\sqrt{n}).
\]
\end{main-theorem}

\begin{main-corollary} \label{cor:minimum-redundancy}
 We have
\[
 \limsup_{n\to\infty} \frac{\log \uhuf(n,0)}{n} = \log 1.25.
\]
\end{main-corollary}

Unfortunately, the construction in Theorem~\ref{thm:minimum-redundancy-ub} is not explicit. A different construction, which uses $O(\sqrt{2}^n)$ questions, is not only explicit, but can also be implemented efficiently:

\begin{main-theorem} \label{thm:cone}
 Consider the set of questions
\[
 \hitter = \{ A \subseteq X_n : A \subseteq X_{\lceil n/2 \rceil} \text{ or } A \supseteq X_{\lceil n/2 \rceil} \}.
\]
 The set $\hitter$ consists of $2^{\lceil n/2 \rceil} + 2^{\lfloor n/2 \rfloor}$ questions and satisfies the following properties:
\begin{enumerate}
 \item There is an indexing scheme $\hitter = \{ Q_q : q \in \{0,1\}^{\lceil n/2 \rceil + 1} \}$ such that given an index $q$ and an element $x_i \in X_n$, we can decide whether $x_i \in Q_q$ in time $O(n)$.
 \item Given a distribution $\pi$, we can construct an optimal decision tree for $\pi$ using \hitter in time $O(n^2)$.
 \item Given a distribution $\pi$, we can implement an optimal decision tree for $\pi$ in an online fashion in time $O(n)$ per question, after $O(n\log n)$ preprocessing.
\end{enumerate}
\end{main-theorem}

\paragraph{Section organization.} Section~\ref{sec:dyadic} shows that a set of questions is optimal if and only if it is a \emph{dyadic hitter}, that is, contains a question splitting every non-constant dyadic distribution into two equal halves. Section~\ref{sec:antichains} discusses a relation to hitting sets for maximal antichains, and proves Theorem~\ref{thm:cone}. Section~\ref{sec:un0-density} shows that the optimal size of a dyadic hitter is controlled by the minimum value of another parameter, the \emph{maximum relative density}. We upper bound the minimum value in Section~\ref{sec:un0-ub}, thus proving Theorem~\ref{thm:minimum-redundancy-lb}, and lower bound it in Section~\ref{sec:un0-lb}, thus proving Theorem~\ref{thm:minimum-redundancy-ub}.

\subsection{Reduction to dyadic hitters} \label{sec:dyadic}

The purpose of this subsection is to give a convenient combinatorial characterization of optimal sets of questions.
Before presenting this characterization, we show that in this context it suffices to look at dyadic distributions.
\begin{lemma} \label{lem:optimal-dyadic}
  A set \hitter of questions over \range is optimal if and only if $\costt{\hitter}{\mu} = \opt{\mu}$ for all \emph{dyadic} distributions $\mu$.
\end{lemma}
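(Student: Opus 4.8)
The plan is to prove both directions, but only one is substantive. The ``only if'' direction is immediate from the definition: if $\costt{\hitter}{\mu} = \opt{\mu}$ for \emph{all} distributions $\mu$, then in particular this holds for dyadic ones. The content is in the ``if'' direction, so I would spend essentially all of the proof there.

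For the ``if'' direction, suppose $\costt{\hitter}{\mu} = \opt{\mu}$ holds for all dyadic $\mu$, and let $\pi$ be an arbitrary distribution over $\range$. I want to show $\costt{\hitter}{\pi} = \opt{\pi}$; since $\costt{\hitter}{\pi} \geq \opt{\pi}$ always holds (any decision tree using $\hitter$ is in particular a decision tree), it suffices to exhibit a decision tree using $\hitter$ whose cost on $\pi$ is exactly $\opt{\pi}$. The key observation, recalled from the preliminaries, is that $\opt{\pi} = \min_\tau [H(\pi) + D(\pi\|\tau)]$ over dyadic $\tau$ with $\supp\tau = \supp\pi$, and this minimum is attained by a Huffman distribution $\tau^*$. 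So I would take such a $\tau^*$, apply the hypothesis to the dyadic distribution $\tau^*$ to get a decision tree $T$ using $\hitter$ with $T(\tau^*) = \opt{\tau^*}$, and then check two things: first, that $\opt{\tau^*} = H(\tau^*)$ because $\tau^*$ is dyadic (so $D(\tau^*\|\tau^*)=0$), hence $T(x_i) = \log(1/\tau^*_i)$ for every $i$ in the support; and second, that $T$ has exactly $\supp\pi$ as its leaf labels, because $\supp\tau^* = \supp\pi$ and $T$ is valid for $\tau^*$. Then the cost of this same tree $T$ on the original distribution $\pi$ is $\sum_i \pi_i T(x_i) = \sum_i \pi_i \log(1/\tau^*_i) = H(\pi) + D(\pi\|\tau^*) = \opt{\pi}$, which is what we wanted.

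The one subtlety I would be careful about is the support/validity bookkeeping: a ``decision tree valid for $\mu$'' is defined to have its leaves labeled exactly by $\supp\mu$, so I need $T$ valid for $\tau^*$ to be automatically valid for $\pi$, which works precisely because a Huffman distribution for $\pi$ can be chosen with $\supp\tau^* = \supp\pi$ (elements of probability zero stay zero). A second minor point is that I should note $\costt{\hitter}{\tau^*}$ being equal to $\opt{\tau^*}$ means there \emph{exists} such a $T$ — I am using the hypothesis in the direction ``optimal cost achieved implies an achieving tree exists,'' which is fine since the optimal cost is by definition the minimum over trees.

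I do not expect any real obstacle here; the lemma is essentially a repackaging of the identity $\opt{\pi} = H(\pi) + D(\pi\|\tau^*)$ together with the fact that the dyadic distribution $\tau^*$ realizing the optimum can be fed into the hypothesis. The ``hard part,'' such as it is, is simply making sure the decision tree produced for $\tau^*$ is literally reused verbatim for $\pi$ (same tree, same leaf labels, same depths) rather than constructing a new one — the point being that the \emph{shape} of an optimal tree for $\pi$ depends only on the rounded dyadic distribution, not on $\pi$ itself.
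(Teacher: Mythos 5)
Your proposal is correct and is essentially the paper's own proof: both take the Huffman (dyadic) distribution $\tau^*$ realizing $\opt{\pi}=H(\pi)+D(\pi\|\tau^*)$, invoke the hypothesis to get a tree $T$ from $\hitter$ that is optimal for $\tau^*$, observe that its leaf-depth distribution must coincide with $\tau^*$ (since $D(\tau^*\|\cdot)=0$ forces equality), and reuse that same tree for $\pi$. The support bookkeeping you flag is a fair point that the paper leaves implicit, but it introduces no real difference in the argument.
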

\begin{proof}
 Suppose that \hitter is optimal for all dyadic distributions, and let $\pi$ be an arbitrary distribution over \range. Let $\mu$ be a dyadic distribution such that
\[
 \opt{\pi} = \sum_{i=1}^n \pi_i \log \frac{1}{\mu_i}.
\]
 By assumption, \hitter is optimal for $\mu$. Let $T$ be an optimal decision tree for $\mu$ using questions from \hitter only, and let $\tau$ be the corresponding dyadic distribution, given by $\tau_i = 2^{-T(x_i)}$ (recall that $T(x_i)$ is the depth of $x_i$). Since $\tau$ minimizes $T(\mu) = H(\mu) + D(\mu\|\tau)$ over dyadic distributions, necessarily $\tau = \mu$. Thus
\[
 T(\pi) = \sum_{i=1}^n \pi_i \log \frac{1}{\tau_i} = \sum_{i=1}^n \pi_i \log \frac{1}{\mu_i} = \opt{\pi},
\]
 showing that \hitter is optimal for \dist.
\end{proof}

Given a dyadic distribution $\mu$ on \range, we will be particularly interested in the collection of subsets of \range that have probability exactly half under $\mu$.
\begin{dfn}[Dyadic hitters]\label{dfn:dyadSetsandHitters}
 Let \dist be a non-constant dyadic distribution. A set $A \subseteq X_n$ \emph{splits} \dist if $\dist(A) = 1/2$. We denote the collection of all sets splitting \dist by \dyadof{\dist}. We call a set of the form \dyadof{\dist} a \emph{dyadic set}.
 
 We call a set of questions \hitter a \emph{dyadic hitter} in $X_n$ if it intersects \dyadof{\dist} for all non-constant dyadic distributions \dist. (Lemma~\ref{lem:neat-sum} implies that \dyadof{\dist} is always non-empty.)
  \end{dfn}

  A dyadic hitter is precisely the object we are interested in:
  \begin{lemma}\label{lem:hitterIsOptimal}
   A set \hitter of subsets of \range is an optimal
   set of questions if and only if it is a \dyadhitter in \range.
  \end{lemma}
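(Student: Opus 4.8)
The plan is to prove the two directions separately, in both cases routing through Lemma~\ref{lem:optimal-dyadic}, which already reduces the notion of optimality to dyadic distributions. So throughout I only need to reason about dyadic $\mu$, and I will repeatedly use the correspondence between decision trees $T$ and dyadic distributions $\tau$ via $\tau_i = 2^{-T(x_i)}$, together with the fact (recalled in the preliminaries) that $T$ is optimal for a dyadic $\mu$ exactly when the induced $\tau$ equals $\mu$.

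First I would handle the ``only if'' direction: suppose $\hitter$ is an optimal set of questions; I want to show it is a dyadic hitter, i.e.\ it meets $\dyadof{\mu}$ for every non-constant dyadic $\mu$. Fix such a $\mu$. By optimality and Lemma~\ref{lem:optimal-dyadic} there is a decision tree $T$ using only questions from $\hitter$ with $T(\mu) = \opt{\mu}$; since $\mu$ is dyadic, the induced distribution $\tau_i = 2^{-T(x_i)}$ must equal $\mu$. Now look at the question $q$ at the root of $T$: the Yes-subtree has leaves whose $\tau$-mass (hence $\mu$-mass) sums to $1/2$, so $\mu(q) = 1/2$, i.e.\ $q$ splits $\mu$. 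Since $q \in \hitter$, this shows $\hitter \cap \dyadof{\mu} \neq \emptyset$.

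For the ``if'' direction, suppose $\hitter$ is a dyadic hitter; I want $\costt{\hitter}{\mu} = \opt{\mu}$ for every dyadic $\mu$, which by Lemma~\ref{lem:optimal-dyadic} gives optimality. I would prove $\costt{\hitter}{\mu} \le \opt{\mu}$ by induction on the size of the support of $\mu$ (the reverse inequality being trivial since $\opt{\mu}$ is the unrestricted optimum). If $\mu$ is constant the tree is a single leaf and there is nothing to do. Otherwise pick $q \in \hitter \cap \dyadof{\mu}$, which exists because $\hitter$ is a dyadic hitter; then $\mu(q) = \mu(\bar q) = 1/2$, and the conditional distributions $\mu|_q$ and $\mu|_{\bar q}$ are again dyadic (halving moves every exponent down by one) with strictly smaller support. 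By induction there are decision trees $T_0, T_1$ using $\hitter$ that are optimal for $\mu|_q$ and $\mu|_{\bar q}$. Form the tree $T$ that asks $q$ at the root and then runs $T_0$ or $T_1$; it uses only questions from $\hitter$. Using the Bernoulli chain rule with a balanced split,
\[
 T(\mu) = 1 + \tfrac12 T_0(\mu|_q) + \tfrac12 T_1(\mu|_{\bar q}) = 1 + \tfrac12 \opt{\mu|_q} + \tfrac12 \opt{\mu|_{\bar q}},
\]
and since $\mu$ is dyadic, $\opt{\mu} = H(\mu) = 1 + \tfrac12 H(\mu|_q) + \tfrac12 H(\mu|_{\bar q}) = 1 + \tfrac12 \opt{\mu|_q} + \tfrac12 \opt{\mu|_{\bar q}}$; hence $T(\mu) = \opt{\mu}$, so $\costt{\hitter}{\mu} \le \opt{\mu}$ as desired.

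The only mildly delicate point — and the one I would state carefully — is that halving a non-constant dyadic distribution really does produce a dyadic distribution on a strictly smaller support, so that the induction is well-founded and the dyadic-hitter property can be invoked on the pieces; everything else is bookkeeping with the chain rule and the tree/dyadic-distribution dictionary. I do not expect a genuine obstacle here, just the need to invoke Lemma~\ref{lem:optimal-dyadic} and Lemma~\ref{lem:neat-sum} (for non-emptiness of $\dyadof{\mu}$) at the right moments.
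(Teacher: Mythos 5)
Your proof is correct and follows essentially the same route as the paper: the ``if'' direction is the same induction on support size using the Bernoulli chain rule, and your ``only if'' direction is the direct form of the paper's contrapositive (the paper shows a non-splitting root question forces cost strictly above $H(\mu)$ via $1>h(\mu(S))$, while you argue that an optimal tree's induced dyadic distribution must equal $\mu$, so its root question splits $\mu$). Both rely on the same facts, namely Lemma~\ref{lem:optimal-dyadic}, the Kraft identity for full binary trees, and the uniqueness of the dyadic minimizer of $D(\mu\|\tau)$ for dyadic $\mu$.
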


\begin{proof}
Let \hitter be a \dyadhitter in \range. 
We prove by induction on $1\leq m \leq n$ that for a dyadic distribution \dist on \range with support size $m$, $\costt{\hitter}{\dist}=H(\dist)$.
Since $\opt{\dist} = H(\dist)$, Lemma~\ref{lem:optimal-dyadic} implies that \hitter is an optimal set of questions. 

The base case, $m = 1$, is trivial. Suppose therefore that \dist is a dyadic distribution whose support has size $m>1$. In particular, \dist is not constant, and so \hitter contains some set $S \in \dyadof{\dist}$. Let $\alpha = \dist|_S$ and $\beta = \dist|_{\overline{S}}$, and note that $\alpha,\beta$ are both dyadic. The induction hypothesis shows that $\costt{\hitter}{\alpha} = H(\alpha)$ and $\costt{\hitter}{\beta} = H(\beta)$. A decision tree which first queries $S$ and then uses the implied algorithms for $\alpha$ and $\beta$ has cost
\[
 1 + \frac{1}{2} H(\alpha) + \frac{1}{2} H(\beta) =
 h(\dist(S)) + \dist(S) H(\dist|_S) + \dist(\overline{S}) H(\dist|_{\overline{S}}) = H(\dist),
\]
using the Bernoulli chain rule; here $\dist|_S$ is the restriction of $\dist$ to the elements in $S$.

Conversely, suppose that \hitter is not a dyadic hitter, and let \dist be a non-constant dyadic distribution such that \dyadof{\dist} is disjoint from \hitter. Let $T$ be any decision tree for \dist using \hitter, and let $S$ be its first question. The cost of $T$ is at least
\[
 1 + \dist(S) H(\dist|_S) + \dist(\overline{S}) H(\dist|_{\overline{S}}) > h(\dist(S)) + \dist(S) H(\dist|_S) + \dist(\overline{S}) H(\dist|_{\overline{S}}) = H(\dist),
\]
since $\dist(S) \neq \frac{1}{2}$. Thus $\costt{\hitter}{\dist} > \opt{\dist}$, and so \hitter is not an optimal set of questions.
\end{proof}
  
\subsection{Dyadic sets as antichains} \label{sec:antichains}

There is a surprising connection between dyadic hitters and hitting sets for maximal antichains.
We start by defining the latter:

\begin{definition}
 A \emph{fibre} in $X_n$ is a subset of $2^{X_n}$ which intersects every maximal antichain in $X_n$.
\end{definition}

Fibres were defined by Lonc and Rival~\cite{LoncRival}, who also gave a simple construction, via cones:

\begin{definition}
 The \emph{cone} $\cone{S}$ of a set $S$ consists of all subsets and all supersets of $S$. % If $\lfloor n/2 \rfloor \leq |S| \leq \lceil n/2 \rceil$, we say that the cone is \emph{central}.
\end{definition}

Any cone $\cone{S}$ intersects any maximal antichain $A$, since otherwise $A \cup \{S\}$ is also an antichain. By choosing $S$ of size $\lfloor n/2 \rfloor$, we obtain a fibre of size $2^{\lfloor n/2 \rfloor} + 2^{\lceil n/2 \rceil} - 1 = \Theta(2^{n/2})$. %Lonc and Rival conjectured that this fibre has minimal size, but gave no lower bound. The first lower bound was given by Duffus, Sands and Winkler~\cite{DSW}, who showed that any fibre contains at least $\Omega(1.25^n)$ sets, using an argument virtually the same as our lower bound, Theorem~\ref{thm:minimum-redundancy-lb}. This has been improved by \L{}uczak~\cite{Luczak,DuffusSands} to $\Omega(\sqrt[3]{2}^n)$, but the conjecture of Lonc and Rival remains open.
Our goal now is to show that every fibre is a dyadic hitter:
\begin{theorem} \label{thm:fibre}
 every fibre is a dyadic hitter.
\end{theorem}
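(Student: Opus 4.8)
The plan is to show the contrapositive fails in a controlled way: given a fibre $\hitter$ and a non-constant dyadic distribution $\dist$, I must exhibit a member of $\hitter$ that lies in $\dyadof{\dist}$. Since $\hitter$ meets every maximal antichain, it suffices to produce a maximal antichain $\mathcal{A}$ all of whose members split $\dist$ --- i.e.\ $\mathcal{A} \subseteq \dyadof{\dist}$ --- because then $\hitter \cap \mathcal{A} \neq \emptyset$ forces $\hitter \cap \dyadof{\dist} \neq \emptyset$. So the real content is:

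\emph{Claim: for every non-constant dyadic distribution $\dist$ on $X_n$, the family $\dyadof{\dist}$ of sets of $\dist$-measure exactly $1/2$ contains a maximal antichain of the Boolean lattice $2^{X_n}$.}

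First I would reduce to the full-support case: elements $x_i$ with $\dist_i = 0$ can always be thrown into or out of any splitting set freely, so if $\dyadof{\dist}$ restricted to $\supp\dist$ contains a maximal antichain of $2^{\supp\dist}$, I can extend it to a maximal antichain of $2^{X_n}$ by, say, adding all of $X_n \setminus \supp\dist$ to each set; these enlarged sets still have measure $1/2$ and still form a maximal antichain (any set $B \subseteq X_n$ is comparable to $B' \cup (X_n\setminus\supp\dist)$ where $B'$ is the trace of $B$ on $\supp\dist$, using maximality of the antichain downstairs). So assume $\dist_i = 2^{-a_i} > 0$ for all $i$, with $a_1 \le \cdots \le a_n$ after sorting.

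The heart is constructing the maximal antichain inside $\dyadof{\dist}$. I would build it recursively, mirroring the structure of Lemma~\ref{lem:neat-sum}. Note $\dist_1 = 2^{-a_1}$ with $a_1 \ge 1$; either $a_1 = 1$ (so $x_1$ alone has measure $1/2$ and $\{x_1\}$, together with $\{x_2,\dots,x_n\}$, already gives progress) or $a_1 \ge 2$, in which case by Lemma~\ref{lem:neat-sum} (the ``furthermore'' direction, applied with $a = a_1$, since the total mass $1$ is a multiple of $2^{-a_1}$... wait, I need measure-$1/2$ blocks, so apply it with $a=1$) the ground set partitions into consecutive blocks each of $\dist$-measure $2^{-a_1}$, and pairs of these can be merged. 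Concretely, I would argue: the collection of all \emph{down-sets} (order ideals) of a suitable auxiliary poset on $\supp\dist$ whose measure is $1/2$ forms a maximal antichain. A cleaner route: induct on $n$. Pick the lightest element $x_1$ (measure $2^{-a_1}$, $a_1\ge 1$). If $a_1 = 1$, then $\dist$ has exactly two atoms of mass $1/2$ each, namely $x_1$ and ``the rest'' $R$; the two singletons-of-halves give the antichain $\{\{x_1\}, R\}$, which is maximal since every $B$ is comparable to one of them. If $a_1 \ge 2$: merge $x_1$ with the next-lightest $x_2$ into a super-element of mass $\dist_1 + \dist_2$ (a power of two since $a_1 = a_2$ or $a_1 < a_2$ forces $\dist_1 < \dist_2$, which can't happen as $x_1$ is lightest, so $a_1 = a_2$); this yields a dyadic distribution $\dist'$ on $n-1$ points, apply induction to get a maximal antichain $\mathcal{A}' \subseteq \dyadof{\dist'}$, and then lift each $A' \in \mathcal{A}'$ to $2^{X_n}$ by the rule ``if the super-element is in $A'$, include $\{x_1, x_2\}$; otherwise include neither.'' These lifted sets have the right measure; I then need to check the lifted family is still a \emph{maximal} antichain, and possibly augment it with a few extra splitting sets that distinguish $x_1$ from $x_2$ (e.g.\ sets containing exactly one of them) to restore maximality, while verifying those extras still have measure $1/2$ --- this is where Lemma~\ref{lem:neat-sum} is used to find a measure-$1/2$ ``interval'' of the remaining light elements to pair with a single one of $\{x_1,x_2\}$.

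The main obstacle I anticipate is precisely this last bookkeeping: ensuring the constructed family is \emph{maximal} as an antichain, not merely an antichain, after the lifting/merging step. Antichain-ness is easy to preserve, but maximality can be destroyed by a merge, and repairing it requires adding sets that (a) still split $\dist$ and (b) don't become comparable to existing members. I expect the fix to be: instead of the recursive merge, directly characterize $\dyadof{\dist}$ well enough to see a maximal antichain inside it --- e.g.\ show that for the sorted atoms, the family of sets $S$ with $\dist(S) = 1/2$ that are ``prefix-closed under the $a_i$-order with ties broken by index'' is already a maximal antichain, by checking every $B \subseteq X_n$ is comparable to some such $S$ via a greedy argument (repeatedly move mass across the $1/2$ threshold one atom at a time, possible since each step changes the measure by $\le 1/2$, exactly as in the proof of Lemma~\ref{l412}). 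That greedy comparability argument, combined with the partition furnished by Lemma~\ref{lem:neat-sum}, should close the proof.
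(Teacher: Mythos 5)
Your overall strategy is the right one and matches the paper's: to show a fibre \hitter hits \dyadof{\dist}, exhibit a maximal antichain contained in \dyadof{\dist}. However, both halves of your execution have problems, and you miss the observation that makes the whole thing short. For a \emph{full-support} non-constant dyadic \dist, no sub-family needs to be extracted: the \emph{entire} collection \dyadof{\dist} is already a maximal antichain. It is an antichain because $A\subsetneq B$ forces $\dist(B)>\dist(A)$ when every element has positive mass; it is closed under complementation; and it is maximal because any $C$ with $\dist(C)>1/2$ contains a member of \dyadof{\dist} by Lemma~\ref{lem:neat-sum} (and the case $\dist(C)<1/2$ follows by complementing). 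Your recursive merge-and-lift construction is aimed at producing some maximal antichain inside \dyadof{\dist}, and you yourself flag that you cannot certify maximality after the lifting step; that is exactly the point at which the argument is not a proof, and the repair you sketch (``prefix-closed under the $a_i$-order\dots should close the proof'') is not carried out.

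The reduction to full support is also incorrect as stated. If $\mathcal{A}'$ is a maximal antichain of $2^{\supp\dist}$ and you lift it by adding $Z=X_n\setminus\supp\dist$ to every member, the result is an antichain but need not be maximal: for $B\subseteq X_n$ whose trace $B'$ satisfies $A'\subseteq B'$ for the relevant $A'\in\mathcal{A}'$, you would need $A'\cup Z\subseteq B$, which fails whenever $Z\not\subseteq B$. Concretely, for $\dist=(1/2,1/2,0)$ on $X_3$ your lift gives $\{\{x_1,x_3\},\{x_2,x_3\}\}$, and $\{x_1,x_2\}$ is comparable to neither. The paper avoids trying to find a maximal antichain inside \dyadof{\nu} for non-full-support $\nu$ altogether: it replaces the lightest supported element of $\nu$ by a ``tail'' spread over $\overline{\supp\nu}$ to get a full-support $\mu$, and then uses Lemma~\ref{lem:small-elements} to show that every set in \dyadof{\mu} either contains or avoids the whole tail, hence also lies in \dyadof{\nu}. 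Since the fibre meets the maximal antichain \dyadof{\mu}, it meets \dyadof{\nu}. You would need to replace your lifting argument with something of this kind for the proof to go through.
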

This shows that every cone is a dyadic hitter, and allows us to give a simple algorithm for constructing an optimal decision tree using any cone.

We start with a simple technical lemma which will also be used in Section~\ref{sec:un0-ub}:

\begin{definition} \label{def:tail}
 Let $\mu$ be a dyadic distribution over $X_n$. The \emph{tail} of $\mu$ is the largest set of elements $T \subseteq X_n$ such that for some $a \geq 1$,
\begin{enumerate}[label=(\roman*)]
 \item The elements in $T$ have probabilities $2^{-a-1},2^{-a-2},\ldots,2^{-a-(|T|-1)},2^{-a-(|T|-1)}$.
 \item Every element not in $T$ has probability at least $2^{-a}$. \new{Shay: It could be good if we demonstrate the name ``tail'' by considering the tree representation of the dyadic distribution (I can do it sometimes later).}
\end{enumerate}
\end{definition}

\begin{lemma} \label{lem:small-elements}
 Suppose that $\mu$ is a non-constant dyadic distribution with non-empty tail $T$. Every set in $\dyadof{\mu}$ either contains $T$ or is disjoint from $T$.
\end{lemma}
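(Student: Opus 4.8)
Suppose that $\mu$ is a non-constant dyadic distribution with non-empty tail $T$. Every set in $\dyadof{\mu}$ either contains $T$ or is disjoint from $T$.

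The plan is to run a ``granularity'' argument: the whole tail $T$ carries probability exactly $2^{-a}$, while every element \emph{outside} $T$ is coarse at scale $2^{-a}$, so a set of probability $\tfrac12$ cannot intersect $T$ in a proper nonempty way. (If $|T|\le 1$ the statement is trivial — any set contains or misses a single element — and in fact the shape prescribed in Definition~\ref{def:tail}, where the two smallest probabilities coincide, forces $|T|\ge 2$ once $T\ne\emptyset$; so assume $k:=|T|\ge 2$.)

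First I would compute the mass of the tail. By condition (i) of Definition~\ref{def:tail} the tail probabilities are $2^{-a-1},2^{-a-2},\dots,2^{-a-(k-1)},2^{-a-(k-1)}$, and summing the geometric part and adding the repeated last term gives
\[
 \mu(T) = \Bigl(\textstyle\sum_{j=1}^{k-1}2^{-a-j}\Bigr) + 2^{-a-(k-1)} = \bigl(2^{-a}-2^{-a-(k-1)}\bigr) + 2^{-a-(k-1)} = 2^{-a}.
\]
Next comes the key observation: by condition (ii), every element outside $T$ has probability at least $2^{-a}$, and since $\mu$ is dyadic such a probability is a power of two that is $\ge 2^{-a}$, hence an \emph{integer multiple} of $2^{-a}$. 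Therefore $\mu(S)$ is an integer multiple of $2^{-a}$ for every $S\subseteq X_n\setminus T$.

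Now take any $A\in\dyadof{\mu}$, so $\mu(A)=\tfrac12$. Because $a\ge 1$, the number $\tfrac12$ is itself a multiple of $2^{-a}$, and $\mu(A\setminus T)$ is a multiple of $2^{-a}$ by the previous paragraph; hence $\mu(A\cap T)=\mu(A)-\mu(A\setminus T)$ is a multiple of $2^{-a}$ as well. But $0\le \mu(A\cap T)\le \mu(T)=2^{-a}$, so $\mu(A\cap T)\in\{0,2^{-a}\}$. If $\mu(A\cap T)=0$ then (all probabilities being positive) $A\cap T=\emptyset$, i.e.\ $A$ is disjoint from $T$; if $\mu(A\cap T)=2^{-a}=\mu(T)$ then $\mu(T\setminus A)=0$, so $T\setminus A=\emptyset$, i.e.\ $T\subseteq A$. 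This gives exactly the claimed dichotomy.

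I don't expect a real obstacle here — this lemma is genuinely short. The only points to be careful about are the degenerate cases ($|T|\le 1$, or $\mu$ a single atom, in which case $\dyadof{\mu}=\emptyset$ and the statement is vacuous) and the bookkeeping that makes $\tfrac12$ a multiple of $2^{-a}$, which is precisely where the hypothesis $a\ge 1$ built into Definition~\ref{def:tail} is used; the rest is the geometric-sum identity $\mu(T)=2^{-a}$ and the power-of-two divisibility of the probabilities outside $T$.
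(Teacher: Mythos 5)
Your proof is correct, but it takes a genuinely different route from the paper's. The paper proves the lemma by induction on $|T|$: the base case $|T|=2$ uses a parity argument (a set containing exactly one of the two smallest elements would have $2^a\mu(S)$ equal to an integer plus $\tfrac12$), and the inductive step merges the two elements of smallest probability into one, obtaining a distribution with a shorter tail. You instead argue in one shot: the geometric-sum identity gives $\mu(T)=2^{-a}$, every element outside $T$ has mass an integer multiple of $2^{-a}$ (being a power of two at least $2^{-a}$), and since $a\ge 1$ makes $\tfrac12$ itself a multiple of $2^{-a}$, any $A\in\dyadof{\mu}$ has $\mu(A\cap T)$ a multiple of $2^{-a}$ lying in $[0,2^{-a}]$, forcing $A\cap T=\emptyset$ or $T\subseteq A$ (positivity of the tail probabilities rules out proper subsets of full mass). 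Your argument is shorter and avoids the induction entirely; it is essentially the paper's base-case divisibility idea applied at the scale of the whole tail rather than just its two smallest elements. The paper's merging construction has the side benefit of mirroring how tails are manipulated elsewhere (e.g.\ in the reduction to full-support distributions), but as a proof of this lemma your version is at least as clean, and your handling of the degenerate cases ($|T|\le 1$, and the vacuous case of a constant $\mu$) is careful and correct.
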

\begin{proof}
 The proof is by induction on $|T|$. If $|T|=2$ then there exist an integer $a \geq 1$ and two elements, without loss of generality $x_1,x_2$, of probability $2^{-a-1}$, such that all other elements have probability at least $2^{-a}$. Suppose that $S \in \dyadof{\mu}$ contains exactly one of $x_1,x_2$. Then
\[
 2^{a-1} = \sum_{x_i \in S} 2^a \mu(x_i) = \sum_{x_i \in S \setminus \{x_1,x_2\}} 2^a \mu(x_i) + \frac{1}{2}.
\]
 However, the left-hand side is an integer while the right-hand side is not. We conclude that $S$ must contain either both of $x_1,x_2$ or none of them.
 
 For the induction step, let the elements in the tail $T$ of $\mu$ have probabilities $2^{-a-1},2^{-a-2},\allowbreak\ldots,\allowbreak2^{-a-(|T|-1)},2^{-a-(|T|-1)}$. Without loss of generality, suppose that $x_{n-1},x_n$ are the elements whose probability is $2^{-a-(|T|-1)}$. The same argument as before shows that every dyadic set in $\dyadof{\mu}$ must contain either both of $x_{n-1},x_n$ or neither. Form a new dyadic distribution $\nu$ on $X_{n-1}$ by merging the elements $x_{n-1},x_n$ into $x_{n-1}$, and note that $\dyadof{\mu}$ can be obtained from $\dyadof{\nu}$ by replacing $x_{n-1}$ with $x_{n-1},x_n$. The distribution $\nu$ has tail $T' = T \setminus \{x_n\}$, and so by induction, every set in $\dyadof{\nu}$ either contains $T'$ or is disjoint from $T'$. This implies that every set in $\dyadof{\mu}$ either contains $T$ or is disjoint from $T$.
\end{proof}

The first step in proving Theorem~\ref{thm:fibre} is a reduction to dyadic distributions having full support:

\begin{lemma} \label{lem:dyadic-full-support}
 A set of questions is a dyadic hitter in $X_n$ if and only if it intersects $\dyadof{\mu}$ for all non-constant full-support dyadic distributions $\mu$ on $X_n$.
\end{lemma}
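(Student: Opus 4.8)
The plan is to show that any dyadic hitter argument for full-support distributions automatically handles distributions with smaller support, by embedding a small-support distribution inside a full-support one in a probability-preserving way. The forward direction is trivial (a dyadic hitter intersects $\dyadof{\mu}$ for \emph{all} non-constant dyadic $\mu$, in particular the full-support ones), so the content is the converse: assume $\hitter$ intersects $\dyadof{\mu}$ for every non-constant full-support dyadic $\mu$, and let $\nu$ be an arbitrary non-constant dyadic distribution on $X_n$, say with support $S \subsetneq X_n$ of size $m < n$. I want to produce a full-support dyadic distribution $\mu$ such that some set in $\dyadof{\mu} \cap \hitter$ restricts to a set in $\dyadof{\nu}$.

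The key construction: take the elements outside the support of $\nu$ and give them tiny dyadic probabilities, shrinking $\nu$ accordingly. Concretely, pick an integer $a$ large enough that every element in $\supp\nu$ has probability at least $2^{-a}$, and define $\mu$ by scaling: let the $n-m$ elements outside $\supp\nu$ receive probabilities forming a ``dyadic tail'' summing to some $2^{-k}$, and scale $\nu$ by $(1-2^{-k})$ — but to keep everything dyadic, it is cleaner to instead split one low-probability element of $\nu$ repeatedly, or to append a tail. The cleanest route is: let $\mu$ agree with $\frac12$-scaled copies appropriately — actually the right move is to put all the ``new'' mass into a tail $T = X_n \setminus \supp\nu$ where the elements of $T$ get probabilities $2^{-a-1}, 2^{-a-2}, \ldots, 2^{-a-(n-m-1)}, 2^{-a-(n-m-1)}$ summing to $2^{-a}$, and take one element $x^\star \in \supp\nu$ of probability $2^{-b} \geq 2^{-a}$ and reduce it; if $b = a$ we can simply replace $x^\star$ by the tail. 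In general, split $x^\star$ into a sub-element of probability $2^{-a}$ (which becomes the tail) plus a remainder, iterating if necessary — since $\nu$ is dyadic and non-constant, such an element exists, and $\mu$ so obtained is full-support and dyadic.

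Now invoke the hypothesis: $\hitter$ contains some $A \in \dyadof{\mu}$. By Lemma~\ref{lem:small-elements}, since $T$ is (contained in) the tail of $\mu$, the set $A$ either contains all of $T$ or is disjoint from $T$. In the first case $\mu(A) = 1/2$ forces $\mu(A \cap \supp\nu) = 1/2 - 2^{-a} \cdot (\text{contribution})$ — here one must be careful, so rather: in the case $A \cap T = \emptyset$, we get $A \subseteq \supp\nu \cup \{x^\star\text{-remainder}\}$ and $\mu(A) = 1/2$; since on $\supp\nu \setminus\{x^\star\}$ the measures $\mu$ and $\nu$ are proportional (with constant factor, by the construction $\nu(B) = \mu(B)$ for $B$ avoiding the modified mass, after suitable normalization), this translates to $\nu(A') = 1/2$ for the corresponding $A' \subseteq \supp\nu$. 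In the case $T \subseteq A$, take the complement $\overline{A} \in \dyadof{\mu}$ — but $\overline A$ need not be in $\hitter$; instead observe $\mu(A) = 1/2$ with $T\subseteq A$ gives $\mu(A \setminus T) = 1/2 - 2^{-a}$, and again relate to $\nu$. The honest statement is that $\dyadof{\nu}$ and $\dyadof{\mu}$ are in correspondence via the map $B \mapsto B \cup T$ (or $B \mapsto B$) on the relevant half, exactly as in the merging argument of Lemma~\ref{lem:small-elements}; so $A$ or a canonically associated set lies in $\dyadof{\nu}$, and since $\hitter$ is a set of subsets of $X_n$, membership is preserved.

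\textbf{Main obstacle.} The delicate point is making the correspondence between $\dyadof{\mu}$ and $\dyadof{\nu}$ exact — i.e.\ ensuring that a hitting set for $\dyadof{\mu}$ yields a hitting set for $\dyadof{\nu}$ and \emph{not merely} that the two dyadic sets have the same ``shape.'' This requires setting up $\mu$ so that: (1) the added tail $T$ has total mass exactly matching the mass removed from $\supp\nu$, (2) on the unmodified part of the support $\mu$ and $\nu$ are literally equal (not just proportional), which one can arrange by \emph{not} rescaling $\nu$ at all but instead carving the tail out of a single element $x^\star$ via repeated halving, so that $\mu$ restricted to $\supp\nu \setminus \{x^\star\}$ equals $\nu$ restricted there, $\mu(x^\star) = \nu(x^\star) - 2^{-a}$ (assuming $\nu(x^\star) > 2^{-a}$; if $\nu(x^\star) = 2^{-a}$ replace $x^\star$ entirely by $T$), and the tail sums to $2^{-a}$. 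Then Lemma~\ref{lem:small-elements} applies verbatim to $\mu$ with tail $T$, every $A \in \dyadof{\mu}$ is $T$-monolithic, and contracting $T$ back to a point (reversing the construction, exactly as $\nu$ was obtained from $\mu$ by merging) sends $\dyadof{\mu}$ onto $\dyadof{\nu}$; since $\hitter \cap \dyadof{\mu} \neq \emptyset$ and this element contracts to an element of $\dyadof{\nu}$ that (being a subset of $X_n$ expressible without the tail elements) still lies in $\hitter$ — wait, this last step needs $\hitter$'s element to already avoid $T$ or to contract to something in $\hitter$; the resolution is that we only need \emph{some} set of $\dyadof{\nu}$ to be hit, and running the argument for a well-chosen $\mu$ (or noting $\dyadof{\nu} \subseteq \dyadof{\mu}$ after identifying $X_{n-m}$-worth of points, which is false in general) — the cleanest fix is to iterate: add the tail elements one at a time, at each step using the full-support hypothesis on an intermediate distribution with exactly one more support element, and at each step Lemma~\ref{lem:small-elements} with a two-element tail lets us merge back down, preserving membership in $\hitter$. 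Getting this induction bookkeeping right is the crux; the rest is routine.
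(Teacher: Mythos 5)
Your high-level plan is the same as the paper's: extend $\nu$ to a full-support dyadic $\mu$ by replacing a low-probability element with a geometric ``tail'' occupying $\overline{\supp\nu}$, invoke the hypothesis to get $S\in\hitter\cap\dyadof{\mu}$, and use Lemma~\ref{lem:small-elements} to control how $S$ meets the tail. But there is a genuine gap at exactly the point you flag as the ``main obstacle,'' and your proposed fixes do not close it. In your construction the tail $T=X_n\setminus\supp\nu$ carries $\mu$-mass $2^{-a}$ carved out of $x^\star$, so $\mu(x^\star)=\nu(x^\star)-2^{-a}$. Then for $S\in\dyadof{\mu}$ with $S\cap T=\emptyset$ and $x^\star\in S$ you get $\nu(S)=\mu(S)+2^{-a}\neq 1/2$, and symmetrically in the other cases: the masses simply do not match, so $S$ need not split $\nu$. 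Your attempted repairs --- ``contracting $T$ back to a point'' or iterating one tail element at a time and ``merging back down'' --- all replace $S$ by a \emph{different} subset of $X_n$, and there is no reason that modified set belongs to $\hitter$; you notice this yourself but never resolve it. (There is also a secondary issue: $\nu(x^\star)-2^{-a}$ is generally not a power of two, so your $\mu$ need not be dyadic without further surgery.)

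The resolution in the paper is to set up $\mu$ so that \emph{no} modification of $S$ is ever needed. Take $x_{\min}\in\supp\nu$ of minimal probability $\nu_{\min}$ and let the tail be $T=\{x_{\min}\}\cup\overline{\supp\nu}$, assigning $\mu(x_{\min})=\nu_{\min}/2$ and probabilities $\nu_{\min}/4,\ldots,\nu_{\min}/2^{m},\nu_{\min}/2^{m}$ to the elements outside $\supp\nu$, while leaving every other probability of $\nu$ untouched. Then $\mu(T)=\nu_{\min}=\nu(T)$ and $\mu=\nu$ off $T$, so for any set that contains all of $T$ or none of it, the $\mu$-mass equals the $\nu$-mass. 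Lemma~\ref{lem:small-elements} (with minimality of $\nu_{\min}$ guaranteeing $T$ really is the tail of $\mu$ in the sense of Definition~\ref{def:tail}) says the set $S\in\hitter\cap\dyadof{\mu}$ is of exactly this form, whence $\nu(S)=\mu(S)=1/2$ and the \emph{same} $S$ witnesses $\hitter\cap\dyadof{\nu}\neq\emptyset$. This is the one observation your write-up is missing; with it, no iteration or bookkeeping induction is required.
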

\begin{proof}
 A dyadic hitter clearly intersects $\dyadof{\mu}$ for all non-constant full-support dyadic distributions on $X_n$. For the other direction, suppose that \hitter is a set of questions that intersects $\dyadof{\mu}$ for every non-constant full-support dyadic distribution $\mu$. Let $\nu$ be a non-constant dyadic distribution on $X_n$ which doesn't have full support. Let $x_{\min}$ be an element in the support of $\nu$ with minimal probability, which we denote $\nu_{\min}$. Arrange the elements in $\overline{\supp\nu}$ in some arbitrary order $x_{i_1},\ldots,x_{i_m}$. Consider the distribution $\mu$ given by:
\begin{itemize}
 \item $\mu(x_i) = \nu(x_i)$ if $x_i \in \supp \mu$ and $x_i \neq x_{\min}$.
 \item $\mu(x_{\min}) = \nu_{\min}/2$.
 \item $\mu(x_{i_j}) = \nu_{\min}/2^{j+1}$ for $j < m$.
 \item $\mu(x_{i_m}) = \nu_{\min}/2^m$.
\end{itemize}
 In short, we have replaced $\nu(x_{\min}) = \nu_{\min}$ with a tail $x_{\min},x_{i_1},\ldots,x_{i_m}$ of the same total probability. It is not hard to check that $\mu$ is a non-constant dyadic distribution having full support on $X_n$.
 
 We complete the proof by showing that \hitter intersects $\dyadof{\nu}$. By assumption, \hitter intersects $\dyadof{\mu}$, say at a set $S$. Lemma~\ref{lem:small-elements} shows that $S$ either contains all of $\{x_{\min}\} \cup \overline{\supp\nu}$, or none of these elements. In both cases, $\nu(S) = \mu(S) = 1/2$, and so \hitter intersects $\dyadof{\nu}$.
\end{proof}

%Dyadic sets are maximal antichains which are closed under complementation:
We complete the proof of Theorem~\ref{thm:fibre} by showing that dyadic sets corresponding to full-support distributions are maximal antichains:

\begin{lemma} \label{lem:antichain}
Let $\mu$ be a non-constant dyadic distribution over $X_n$ with full support, and let $\dyad=\dyadof{\mu}$.
Then $\dyad$ is a maximal antichain which is closed under complementation (i.e.\ $A\in \dyad\implies X\setminus A\in \dyad$).
\end{lemma}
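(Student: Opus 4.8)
\textbf{Proof plan for Lemma~\ref{lem:antichain}.}

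The plan is to verify the three required properties in turn: $\dyad$ is an antichain, it is closed under complementation, and it is \emph{maximal} as an antichain. Closure under complementation is immediate: if $\mu(A) = 1/2$ then $\mu(X \setminus A) = 1 - 1/2 = 1/2$, so $X\setminus A \in \dyad$. For the antichain property, suppose $A \subsetneq B$ with $A, B \in \dyad$. Then $\mu(B) = \mu(A) + \mu(B\setminus A)$, forcing $\mu(B\setminus A) = 0$; but $B \setminus A$ is non-empty and $\mu$ has full support, a contradiction. So no element of $\dyad$ properly contains another, i.e.\ $\dyad$ is an antichain.

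The substance of the lemma is \emph{maximality}: given any $C \subseteq X_n$ with $\mu(C) \neq 1/2$, I must exhibit a set $A \in \dyad$ that is comparable to $C$ (either $A \subseteq C$ or $C \subseteq A$); then $\dyad \cup \{C\}$ fails to be an antichain, so $\dyad$ cannot be properly extended. By replacing $C$ with its complement if necessary (using closure under complementation of the target and the fact that $\mu(C) \neq 1/2 \iff \mu(X\setminus C) \neq 1/2$), I may assume $\mu(C) < 1/2$; I will then produce $A \in \dyad$ with $C \subseteq A$. Write $\mu(C) = k \cdot 2^{-a}$ where $2^{-a}$ is the smallest probability appearing in $\mu$ (so all $\mu_i$ are integer multiples of $2^{-a}$), and note $0 \le k < 2^{a-1}$ since $\mu(C) < 1/2$. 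The idea is to grow $C$ by adding elements of $\overline{C}$, smallest-probability-first, until the running total hits exactly $1/2$. Concretely, order the elements of $\overline C$ as $y_1, y_2, \ldots$ with $\mu(y_1) \le \mu(y_2) \le \cdots$, each of the form $2^{-b_j}$ with $b_j$ non-increasing in $j$; apply Lemma~\ref{lem:neat-sum} to the list $\mu(y_1), \mu(y_2), \ldots$ together with the offset coming from $\mu(C)$. More precisely, since $\mu(C) + \mu(\overline C) = 1$ is a multiple of $2^{-a}$ and $\mu(C)$ is a multiple of $2^{-a}$, the quantity $1/2 - \mu(C) = \mu(\overline C) - 1/2$ is a non-negative multiple of $2^{-a}$, and $\mu(\overline C) \ge 1/2 > 1/2 - \mu(C)$; hence by Lemma~\ref{lem:neat-sum} (applied with the target $2^{-a'}$ where $1/2 - \mu(C) \geq 2^{-a'}$, iterating to partition $\overline C$ into blocks of measure exactly $1/2 - \mu(C)$, or directly to a prefix) there is a prefix $\{y_1,\ldots,y_m\}$ with $\sum_{j=1}^m \mu(y_j) = 1/2 - \mu(C)$. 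Then $A := C \cup \{y_1,\ldots,y_m\}$ satisfies $\mu(A) = 1/2$ and $C \subseteq A$, as desired.

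One technical point to get right is the exact invocation of Lemma~\ref{lem:neat-sum}: that lemma is stated for non-increasing lists summing to at least (or to a multiple of) the target, whereas here I want a prefix of an \emph{increasing} list of probabilities of $\overline C$ summing to the prescribed value $1/2 - \mu(C)$. This is handled by reversing the order (the lemma's ``furthermore'' clause gives a \emph{suffix} of a non-increasing list, which is exactly a prefix when read from the small end), and by checking the hypotheses: all probabilities are powers of two, they are bounded above by the largest probability $2^{-a_1}$ which is at most $1/2$ (since $\mu$ is non-constant full-support on $n \ge 2$ elements — actually one should note the maximal probability is $\le 1/2$ here because every element has positive dyadic probability and there are at least two of them), and $1/2 - \mu(C)$ is a multiple of the smallest probability $2^{-a}$, which is $\le$ each $\mu(y_j)$. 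The main obstacle, then, is purely bookkeeping: correctly matching the direction of the ordering and the divisibility condition to the precise statement of Lemma~\ref{lem:neat-sum}; once that is lined up, maximality follows immediately and the other two properties are one-liners.
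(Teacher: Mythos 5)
Your proof is correct and follows essentially the same route as the paper: complementation and the antichain property are verified exactly as you do, and maximality is reduced via complementation to an application of Lemma~\ref{lem:neat-sum}. The only difference is directional --- the paper reduces to a set $B$ with $\mu(B) > 1/2$ and \emph{shrinks} it to a subset of measure exactly $1/2 = 2^{-1}$, so Lemma~\ref{lem:neat-sum} applies in one shot with a power-of-two target, whereas your choice to \emph{grow} a set of measure $< 1/2$ makes the target $1/2 - \mu(C)$, which is generally not a power of two and forces the iterated/partition bookkeeping you describe; that bookkeeping does go through, but flipping the reduction avoids it entirely.
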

\begin{proof}
(i) That $\dyad$ is closed under complementation follows since if $A\in \dyad$ then $\mu(X\setminus A) = 1 - \mu(A) = 1/2$.

(ii) That $\dyad$ is an antichain follows since if $A$ strictly contains $B$ then $\mu(A) > \mu(B)$ (because $\mu$ has full support).

(iii) It remains to show that $\dyad$ is maximal.
By~(i) it suffices to show that every $B$ with $\mu(B)>1/2$ contains some $A\in \dyad$. This follows from applying Lemma~\ref{lem:neat-sum} on the probabilities of the elements in~$B$.
\end{proof}

Cones allow us to prove Theorem~\ref{thm:cone}:

\begin{proof}[Proof of Theorem~\ref{thm:cone}]
 Let $S = \{x_1,\ldots,x_{\lfloor n/2 \rfloor}\}$. The set of questions \hitter is the cone $\cone{S}$, whose size is $2^{\lfloor n/2 \rfloor} + 2^{\lceil n/2 \rceil} - 1 < 2^{\lceil n/2 \rceil + 1}$.
 
 An efficient indexing scheme for \hitter divides the index into a bit $b$, signifying whether the set is a subset of $S$ or a superset of $S$, and $\lfloor n/2 \rfloor$ bits (in the first case) or $\lceil n/2 \rceil$ bits (in the second case) for specifying the subset or superset.
 
 \smallskip
 
 To prove the other two parts, we first solve an easier question.
 Suppose that $\mu$ is a non-constant dyadic distribution whose sorted order is known. We show how to find a set in $\dyadof{\mu} \cap \hitter$ in time $O(n)$. If $\mu(S) = 1/2$ then $S \in \dyadof{\mu}$. If $\mu(S) > 1/2$, go over the elements in $S$ in non-decreasing order. According to Lemma~\ref{lem:neat-sum}, some prefix will sum to $1/2$ exactly. If $\mu(S) < 1/2$, we do the same with $\overline{S}$, and then complement the result.
 
 Suppose now that $\pi$ is a non-constant distribution. We can find a Huffman distribution $\mu$ for $\pi$ and compute the sorted order of $\pi$ in time $O(n\log n)$. The second and third part now follow as in the proof of Lemma~\ref{lem:hitterIsOptimal}.
\end{proof}

\subsection{Reduction to maximum relative density} \label{sec:un0-density} 

Our lower bound on the size of a \dyadhitter, proved in the following subsection, will be along the following lines.
For appropriate values of $n$, we describe a dyadic distribution $\mu$, all of whose splitters have a certain size $i$ or $n-i$. Moreover, only a $\rho$ fraction of sets of size $i$ split $\mu$. We then consider all possible permutations of $\mu$. Each set of size $i$ splits a $\rho$ fraction of these, and so any dyadic hitter must contain at least $1/\rho$ sets.

This lower bound argument prompts the definition of \emph{maximum relative density} (MRD), which corresponds to the parameter $\rho$ above; in the general case we will also need to optimize over~$i$. We think of the MRD as a property of dyadic sets rather than dyadic distributions; indeed, the concept of MRD makes sense for any collection of subsets of $X_n$. If a dyadic set has MRD $\rho$ then any dyadic hitter must contain at least $1/\rho$ questions, due to the argument outlined above. Conversely, using the probabilistic method we will show that roughly $1/\smallestdens{n}$ questions suffice, where $\smallestdens{n}$ is the minimum MRD of a dyadic set on \range.

\begin{definition}[Maximum relative density]\label{dfn:relativedensity}
Let \dyad be a collection of subsets of \range.
 For $0\leq i\leq n$, let
 \[\reldensof{\dyad}{i}\defeq \frac{\Bigl\lvert\bigl\{S\in \dyad : \lvert S\rvert = i\bigr\}\Big\rvert}{\binom{n}{i}}.\]
  We define the \emph{maximum relative density} (MRD) of \dyad, denoted \maxdensof{\dyad},
  as 
  \[\maxdensof{\dyad}\defeq \max_{i \in \{1,\ldots,n-1\}} \reldensof{\dyad}{i}.\]

 We define \smallestdens{n} to be the minimum of \maxdensof{\dyad} over all dyadic sets.
 That is, \smallestdens{n} is the smallest possible maximum relative density of a set of the form \dyadof{\dist}.
 \end{definition}

The following theorem shows that $\uhuf(n,0)$ is controlled by \smallestdens{n}, up to polynomial factors.
\begin{thm}\label{thm:hittersizeisdensity}
 Fix an integer $n$, and denote $M\defeq \frac{1}{\smallestdens{n}}$.
 Then
 \[ M\leq \uhuf(n,0)\leq n^2 \log n \cdot M. \]
\end{thm}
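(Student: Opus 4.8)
The theorem asserts a two-sided estimate $M \le \uhuf(n,0) \le n^2 \log n \cdot M$, where $M = 1/\smallestdens{n}$ and $\uhuf(n,0)$ is the minimum size of a dyadic hitter (by Lemma~\ref{lem:hitterIsOptimal}). I expect the two directions to be quite asymmetric in difficulty.

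For the lower bound $M \le \uhuf(n,0)$, I would run the ``permutation averaging'' argument already outlined in the text preceding the statement. Let $\dyad = \dyadof{\mu}$ be a dyadic set achieving the minimum MRD, so $\maxdensof{\dyad} = \smallestdens{n}$. Let $i \in \{1,\dots,n-1\}$ be the index where $\reldensof{\dyad}{i} = \maxdensof{\dyad}$ (I would pick the maximizing $i$ so that \emph{every} size-$i$ set is below threshold). Now consider the family of distributions $\{\mu \circ \sigma : \sigma \in \sym\}$; each is dyadic, and for $\sigma$ uniformly random, $\Pr[A \text{ splits } \mu\circ\sigma] = \Pr[\sigma^{-1}(A) \in \dyad]$, which for $|A| = i$ equals exactly $\reldensof{\dyad}{i} \le \smallestdens{n}$. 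Any dyadic hitter $\hitter$ must contain, for each $\sigma$, some set splitting $\mu\circ\sigma$; restricting attention to the size-$i$ sets in $\hitter$ (any splitter of $\mu\circ\sigma$ has size $i$ or $n-i$, and we can pair up via complementation or just note both sizes are $\le \binom{n}{i}$... I should be careful here), a union bound over $\hitter$ gives $|\hitter| \cdot \smallestdens{n} \ge 1$, i.e. $|\hitter| \ge M$. The one wrinkle is handling the size-$(n-i)$ splitters: either argue $\dyad$ is closed under complementation (Lemma~\ref{lem:antichain} when $\mu$ has full support — and by Lemma~\ref{lem:dyadic-full-support} we may assume this) so size-$(n-i)$ splitters have relative density equal to $\reldensof{\dyad}{n-i} = \reldensof{\dyad}{i}$ as well, or just bound the contribution of each splitter of either size by $\smallestdens{n}$.

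For the upper bound $\uhuf(n,0) \le n^2 \log n \cdot M$, I would use the probabilistic method. The goal is to build a dyadic hitter of size $O(n^2 \log n \cdot M)$. The key point is that there are not too many dyadic sets to hit: distinct dyadic sets $\dyadof{\mu}$ correspond to distinct dyadic distributions $\mu$ up to the equivalence that permits reordering, but more usefully, the relevant structure is captured by the ``shape'' of $\mu$, and the number of dyadic \emph{sets} is polynomial in... actually I should think of it as: by Lemma~\ref{lem:dyadic-full-support} it suffices to hit $\dyadof{\mu}$ for non-constant full-support dyadic $\mu$ on $X_n$; the number of such $\mu$ is bounded by the number of ways to assign depths $\le n$ to $n$ elements, crude bound $n^n$, so $\log(\#\mu) = O(n \log n)$. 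For a \emph{fixed} target size $i^*$ chosen per $\mu$ to be the index achieving $\maxdensof{\dyadof{\mu}} \le \smallestdens{n}$ — wait, I need a single size class. Better: sample sets as follows. For each $i \in \{1,\dots,n-1\}$, include each $i$-subset of $X_n$ independently with probability $p_i = \min(1, c \log(\#\mu)/(\binom{n}{i}\reldensof{\cdot}{i}))$... this is getting tangled because $\reldensof{}{i}$ depends on $\mu$. The clean way: for each $\mu$, its MRD is attained at some $i(\mu)$, and $\reldensof{\dyadof{\mu}}{i(\mu)}\binom{n}{i(\mu)} = |\{S \in \dyadof{\mu} : |S| = i(\mu)\}| \ge \binom{n}{i(\mu)}\smallestdens{n}$, wait no — MRD is a \emph{max} so $\reldensof{\dyadof{\mu}}{i(\mu)} = \maxdensof{\dyadof{\mu}} \ge \smallestdens{n}$, meaning the dyadic set is relatively \emph{dense} at $i(\mu)$, with at least $\binom{n}{i(\mu)} \smallestdens{n}$ sets of that size. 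So: for each size $i$, sample each $i$-subset independently with probability $q$; a random $i$-subset lies in $\dyadof{\mu}$ with probability $\ge \smallestdens{n} = 1/M$ when $i = i(\mu)$; the probability that our random collection misses $\dyadof{\mu}$ entirely is at most $(1 - q/M)^{\binom{n}{i(\mu)}\smallestdens{n}}$... hmm, I need the sample to hit the $\ge \binom{n}{i(\mu)}\smallestdens{n}$ good sets. Expected number of good sets sampled is $\ge q \binom{n}{i(\mu)} \smallestdens{n}$; to make the miss probability $\ll 1/(\#\mu)$ by union bound over all $\mu$, I want $q \binom{n}{i(\mu)}\smallestdens{n} \gtrsim \log(\#\mu) = O(n\log n)$, so take $q = \Theta(n \log n / (\binom{n}{i(\mu)}\smallestdens{n}))$ — but $q$ must be uniform over $\mu$! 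Since $\binom{n}{i}$ ranges widely, I'd instead do the sampling \emph{per size class}: for each $i$ set $q_i = \min(1, Cn\log n/(\binom{n}{i}\smallestdens{n}))$, expected sampled sets of size $i$ is $q_i \binom{n}{i} = \min(\binom{n}{i}, Cn\log n/\smallestdens{n})$, total over all $n-1$ classes is $O(n^2 \log n / \smallestdens{n}) = O(n^2 \log n \cdot M)$, and for each $\mu$, the class $i(\mu)$ contains $\ge \binom{n}{i(\mu)}\smallestdens{n}$ good sets each sampled w.p. $q_{i(\mu)}$, so the miss probability is $(1-q_{i(\mu)})^{\binom{n}{i(\mu)}\smallestdens{n}} \le e^{-q_{i(\mu)}\binom{n}{i(\mu)}\smallestdens{n}} \le e^{-Cn\log n} < 1/(\#\mu)$ for $C$ large enough, so a union bound over the $\le \#\mu \le n^n$ full-support dyadic distributions shows a valid dyadic hitter of the claimed size exists.

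The main obstacle I anticipate is the bookkeeping in the upper bound: getting the size-class-wise sampling probabilities right so that (a) the total expected size telescopes to $O(n^2 \log n \cdot M)$ — the factor $n^2$ comes from $n$ size classes times one factor of $n$ (or $\log n$) from the union-bound requirement, which matches — and (b) the failure probability beats the union bound over all relevant dyadic distributions, which requires a clean bound on \emph{how many} dyadic sets there are (polynomially-exponential, i.e. $2^{O(n\log n)}$, suffices) together with Lemma~\ref{lem:dyadic-full-support} to restrict to full support. I'd also want to double-check that the $i(\mu)$ achieving the MRD indeed has $\ge \binom{n}{i(\mu)}\smallestdens{n}$ splitters of that exact size — this is immediate from the definition of MRD as a maximum relative density and the fact that $\maxdensof{\dyadof{\mu}} \ge \smallestdens{n}$ by definition of $\smallestdens{n}$ as the minimum. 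The lower bound direction is essentially the argument already sketched in the prose and should be routine once the complementation/size-$(n-i)$ issue is dispatched.
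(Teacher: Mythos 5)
Your proposal is correct and matches the paper's proof in both directions: permutation averaging plus a union bound over the hitter for the lower bound, and per-size-class random sampling with a union bound over the at most $n^n$ dyadic distributions for the upper bound. The ``wrinkle'' you worry about in the lower bound is a non-issue, and the paper dispatches it exactly as your fallback suggests: since $\maxdensof{\dyad}$ is a \emph{maximum} over all sizes $i$, every set $S$ in the hitter, of whatever size, satisfies $\Pr_{\perm}[S\in\perm(\dyad)]=\reldensof{\dyad}{|S|}\leq\smallestdens{n}$, so the union bound applies uniformly with no need to single out sizes $i$ and $n-i$ or invoke closure under complementation.
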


\begin{proof}
Note first that according to Lemma \ref{lem:hitterIsOptimal}, $\uhuf(n,0)$ is equal to the
minimal size of a dyadic hitter in \range, and thus it suffices to lower- and upper-bound this size.

Let \perm be a uniformly random permutation on \range. If $S$ is any set of size~$i$ then $\perm^{-1}(S)$ is a uniformly random set of size~$i$, and so
\[
 \reldensof{\dyad}{i} = \Pr_{\perm\in\sym}[\perm^{-1}(S) \in \dyad] = \Pr_{\perm\in\sym}[S \in \perm(\dyad)].
\]
(Here $\sym$ is the group of permutations of $X_n$.)

Fix a dyadic set \dyad on \range with $\maxdensof{\dyad}=\smallestdens{n}$.
The formula for $\reldensof{\dyad}{i}$ implies that for any subset $S$ of \range (of \emph{any} size),
\[\Pr_{\perm\in\sym} [S\in \perm(\dyad)]\leq \smallestdens{n}.\]

Let \hitter be a collection of subsets of \range with $|\hitter|<M$.
A union bound shows that
\[
 \Pr_{\perm\in\sym} [\hitter \cap \perm(\dyad) \neq \emptyset] \leq |\hitter| \smallestdens{n} < 1.\]
Thus, there exists a permutation $\perm$ such that $\hitter\cap \perm(\dyad) =\emptyset$. Since $\perm(\dyad)$ is also a dyadic set, this shows that \hitter is not a dyadic hitter. We deduce that any dyadic hitter must contain at least $M$ questions.

\smallskip

For the upper bound on $\uhuf(n,0)$, construct a set of subsets \hitter containing, for each $i\in \{1,\ldots,n-1\}$,
$M n \log n$ uniformly chosen sets $S\subseteq \range$ of size $i$.
We show that with positive probability, \hitter is a \dyadhitter.

Fix any dyadic set \dyad, and let $i\in\{1,\ldots,n-1\}$ be such that
$\reldensof{\dyad}{i}=\maxdensof{\dyad}\geq \smallestdens{n}$.
The probability that a random set of size $i$ doesn't belong to \dyad
is at most $1-\maxdensof{\dyad}\leq 1-\smallestdens{n}$.
Therefore the probability that \hitter is disjoint from \dyad is at most
\[(1-\smallestdens{n})^{M n \log n} \leq e^{-\smallestdens{n} M n \log n}
 = e^{-n\log n}<n^{-n}.
\]
As we show below in Claim~\ref{clm:numofdyadics}, there are at most $n^n$ non-constant dyadic distributions, and so a union bound implies that with positive probability, \hitter is indeed a \dyadhitter.
\end{proof}

In order to complete the proof of Theorem~\ref{thm:hittersizeisdensity}, we bound the number of non-constant dyadic distributions:

\begin{claim}\label{clm:numofdyadics}
 There are at most $n^n$ non-constant dyadic distributions on $X_n$.
\end{claim}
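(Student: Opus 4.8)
There are at most $n^n$ non-constant dyadic distributions on $X_n$.

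The plan is to encode each non-constant dyadic distribution $\mu$ on $X_n$ by a vector in a set of size at most $n^n$, i.e.\ to exhibit an injection from the set of such distributions into $[n]^n$ (or some set of cardinality at most $n^n$). The natural encoding is the \emph{depth vector}: since $\mu$ is dyadic, each $\mu_i$ is either $0$ or of the form $2^{-d_i}$ for a positive integer $d_i$; record $d_i$ for each $i$, using a special symbol (say $0$, or equivalently $\infty$) when $\mu_i = 0$. The distribution $\mu$ is clearly recovered from this vector, so it suffices to bound the number of vectors that actually arise.

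The key observation is that the depths are bounded: if $\mu_i = 2^{-d_i} > 0$ then $d_i \le n-1$. Indeed, a dyadic distribution with $\mu_i = 2^{-d}$ corresponds to a decision tree (equivalently a binary prefix code) having a leaf at depth $d$; since $\mu$ is non-constant it has at least two nonzero atoms, and a dyadic distribution on at most $n$ atoms summing to $1$ has all depths at most $n-1$ — this follows, for instance, from the Kraft equality $\sum 2^{-d_i} = 1$ together with the fact that among at most $n$ terms of a convergent dyadic sum to $1$, the largest depth is at most $n-1$ (one can prove this directly: sort the nonzero atoms as $2^{-a_1} \ge \cdots \ge 2^{-a_k}$ with $k \le n$; repeatedly applying Lemma~\ref{lem:neat-sum} shows the partial sums pass through every dyadic value, forcing $a_k \le k-1 \le n-1$). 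Hence each coordinate $d_i$ takes one of the $n$ values $\{0,1,\ldots,n-1\}$ (reserving $0$ to mean ``probability $0$''), so the number of possible depth vectors, and therefore the number of non-constant dyadic distributions, is at most $n^n$.

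I would write it as: define the map $\mu \mapsto (d_1,\ldots,d_n)$ with $d_i = \log(1/\mu_i)$ if $\mu_i \ne 0$ and $d_i = 0$ otherwise; observe it is injective on dyadic distributions; prove the bound $d_i \le n-1$ for nonzero atoms via the Kraft-type argument above; conclude $|\{\text{vectors}\}| \le n^n$. The main (minor) obstacle is justifying the depth bound $d_i \le n-1$ cleanly — one should be slightly careful that ``non-constant'' and ``at most $n$ atoms'' together genuinely cap the depth, rather than hand-wave it; Lemma~\ref{lem:neat-sum} does exactly this work, since it lets us split $X_n$ into blocks each of probability $2^{-a}$ for any dyadic threshold $2^{-a} \le$ (largest atom), and counting blocks bounds the number of distinct depth levels by $n$.
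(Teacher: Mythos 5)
Your proof is correct and follows essentially the same route as the paper: both encode a dyadic distribution by recording, for each of the $n$ elements, which of the $n$ values $0,2^{-1},\ldots,2^{-(n-1)}$ its probability takes, giving the bound $n^n$. The only cosmetic difference is the justification of the depth bound $d_i\le n-1$ (the paper invokes the decision-tree correspondence, where a binary tree with at most $n$ leaves has depth at most $n-1$, while you argue via a Kraft-sum/Lemma~\ref{lem:neat-sum} argument), and both justifications are fine.
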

\begin{proof}
 Recall that dyadic distributions correspond to decision trees in which an element of probability $2^{-\ell}$ is a leaf at depth $\ell$. Clearly the maximal depth of a leaf is $n-1$, and so the probability of each element in a non-constant dyadic distribution is one of the $n$ values $0,2^{-1},\ldots,2^{-(n-1)}$. The claim immediately follows.
\end{proof}

Krenn and Wagner~\cite{KrennWagner} showed that the number of full-support dyadic distributions on $X_n$ is asymptotic to $\alpha \gamma^{n-1} n!$, where $\alpha \approx 0.296$ and $\gamma \approx 1.193$, implying that the number of dyadic distributions on $X_n$ is asymptotic to $\alpha e^{1/\gamma} \gamma^{n-1} n!$. Boyd~\cite{Boyd} showed that the number of monotone full-support dyadic distributions on $X_n$ is asymptotic to $\beta \lambda^n$, where $\beta \approx 0.142$ and $\lambda \approx 1.794$, implying that the number of monotone dyadic distributions on $X_n$ is asymptotic to $\beta (1+\lambda)^n$.

\medskip

The proof of Theorem~\ref{thm:hittersizeisdensity} made use of two properties of dyadic sets:
\begin{enumerate}
\item Any permutation of a dyadic set is a dyadic set.
\item There are $e^{n^{O(1)}}$ dyadic sets.	
\end{enumerate}
If $\mathcal{F}$ is any collection of subsets of $2^{X_n}$ satisfying the first property then the proof of Theorem~\ref{thm:hittersizeisdensity} generalizes to show that the minimal size $U$ of a hitting set for $\mathcal{F}$ satisfies
\[
 M \leq U \leq M n \log |\mathcal{F}|, \qquad \text{where } M = \frac{1}{\min_{D \in \mathcal{F}} \rho(D)}.
\]

\subsection{Upper bounding \smallestdenstitle{n}} \label{sec:un0-ub}

Theorem~\ref{thm:minimum-redundancy-lb} will ultimately follow from the following lemma, by way of Theorem~\ref{thm:hittersizeisdensity}:

\begin{lemma}\label{lem:densityUB}
 Fix $0 < \epsparam \leq 1/2$.
 There exists an infinite sequence of positive integers $n$ (namely, those of the form $\lfloor \frac{2^a}{2\epsparam} \rfloor$ for integer $a$) such that
 some dyadic set 
 \dyad in \range satisfies
 $\maxdensof{\dyad}\leq O(\sqrt{n}) 2^{-(h(\epsparam)-2\epsparam) n}$.
\end{lemma}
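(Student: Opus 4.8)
The statement is exactly an existence claim about a dyadic set, so the whole proof amounts to producing one good dyadic distribution $\mu$ on $X_n$ and estimating $\maxdensof{\dyadof\mu}$. I would reduce the estimate to two structural facts about $\dyad:=\dyadof\mu$: (i) every splitter $S\in\dyad$ has cardinality within $O(\log n)$ of $i:=\lfloor\epsparam n\rfloor$ or of $n-i$; and (ii) $|\dyad|\le 2^{2\epsparam n+O(\log n)}$. Granting (i) and (ii), the estimate is immediate Stirling bookkeeping: for $j$ outside the two windows $\reldensof{\dyad}{j}=0$, while for $j$ inside a window we have $\binom nj\ge \binom ni/n^{O(1)}=\Omega\!\bigl(2^{h(\epsparam)n}/n^{O(1)}\bigr)$, so
\[
\reldensof{\dyad}{j}=\frac{|\{S\in\dyad:\,|S|=j\}|}{\binom nj}\le\frac{2^{2\epsparam n+O(\log n)}}{\Omega(2^{h(\epsparam)n}/n^{O(1)})}=O(\sqrt n)\,2^{-(h(\epsparam)-2\epsparam)n},
\]
where the additive slack $2^a\le 2\epsparam n+2\epsparam$ between $2^a$ and $2\epsparam n$ is absorbed into the $O(\log n)$ and the constant. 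Taking $\max_j$ gives $\maxdensof\dyad\le O(\sqrt n)2^{-(h(\epsparam)-2\epsparam)n}$. If $\mu$ is chosen with full support then Lemma~\ref{lem:antichain} shows $\dyad$ is closed under complementation, so it suffices to verify (i)--(ii) for splitters of size $\le n/2$ and double the count.

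\textbf{The construction.} Choose the integer $a$ so that $n=\lfloor 2^a/(2\epsparam)\rfloor$, and build $\mu$ from two parts, each of total mass $\tfrac12$: a \emph{bulk} consisting of $2^{a-1}$ atoms of weight $2^{-a}$ (note $2^{a-1}=\epsparam n+O(1)$, so the bulk has $\approx\epsparam n$ atoms), and a \emph{tail} on the remaining $n-2^{a-1}$ atoms built from successively halved weights --- a staircase, i.e.\ a union of tails in the sense of Definition~\ref{def:tail} --- whose weight multiplicities are chosen so that the atom count comes out to exactly $n-2^{a-1}$ and so that the tail atoms are strictly lighter than the bulk atoms. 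The purpose of making the tail finely graded is Lemma~\ref{lem:neat-sum} together with Lemma~\ref{lem:small-elements}: a prescribed dyadic weight $w$ can be realized by a subset of the tail in only a bounded number of ways, and each such subset has cardinality pinned, up to $O(\log n)$, by the $2$-adic expansion of $w$. This is what will force (i): a splitter $S$ picks $t$ bulk atoms (contributing $t2^{-a}$) and then a subset of the tail of the complementary weight $\tfrac12-t2^{-a}=(2^{a-1}-t)2^{-a}$, and the tail part must then simultaneously compensate the deficit $2^{a-1}-t$ in weight \emph{and} in cardinality, which (by the finely graded design) leaves $|S|$ confined to an $O(\log n)$-window around $\epsparam n$ (or, after complementation, around $(1-\epsparam)n$).

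\textbf{Counting and finishing.} By the above, a splitter is determined by (a) which of the $2^{a-1}$ bulk atoms it contains --- a factor of at most $2^{2^{a-1}}\le 2^{\epsparam n+O(1)}$ --- and (b) its restricted behaviour inside the tail, which the graded structure limits to at most $2^{\epsparam n+o(n)}$ possibilities (in the cleanest instance of the construction, only $\mathrm{poly}(n)$ possibilities once the bulk part is fixed). Multiplying yields $|\dyad|\le 2^{2\epsparam n+o(n)}$, and combined with (i) and the Stirling estimate of the opening paragraph this gives the claimed bound. In the general (non-extremal) case the cardinality window and the binomial both drift with $\epsparam$, so one additionally optimizes over $i$; but $\reldensof{\dyad}{i}$ is maximized at the $i$ closest to $\epsparam n$, precisely where $\binom{2^a}{\,\cdot\,}/\binom{n}{\,\cdot\,}$ peaks, which is the case handled above.

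\textbf{Main obstacle.} The hard part is the construction together with claim (i): one must choose the tail's weight multiplicities so that (a) the tail is a bona fide dyadic sub-distribution with \emph{exactly} $n-2^{a-1}$ atoms of total mass $\tfrac12$ --- this is what ties the admissible $n$ to the values $\lfloor 2^a/(2\epsparam)\rfloor$ --- and (b) the $2$-adic/Kraft bookkeeping really does prevent any "short but heavy" splitter from escaping the cardinality window downward (the naive choices of tail, e.g.\ a plain doubling staircase, fail exactly here, producing splitters of size $O(\log n)$ whose relative density is only $n^{-\Theta(1)}$, far above the target). Once (i) is secured, the counting in (ii) and the Stirling estimate are routine.
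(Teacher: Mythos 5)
Your proposal leaves the actual construction unspecified at exactly the point where the difficulty lies, and the architecture you do commit to --- a bulk of $2^{a-1}$ atoms of weight $2^{-a}$ carrying mass $\tfrac12$, plus a ``tail'' of the remaining $n-2^{a-1}$ atoms also carrying mass $\tfrac12$ --- cannot work. First, a set of total mass $\tfrac12$ whose atoms are all lighter than the remaining atoms is never a tail in the sense of Definition~\ref{def:tail} (a tail starting below level $2^{-a}$ has total mass exactly $2^{-a}<\tfrac12$), so Lemma~\ref{lem:small-elements} does not apply and nothing prevents splitters from mixing bulk and tail atoms freely. Second, this mixing is fatal. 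Take $\epsparam=1/5$, so $n=5\cdot 2^{a-1}$: the only way to fit $n-2^{a-1}=2^{a+1}$ atoms of weight at most $2^{-a-1}$ with total mass $\tfrac12$ is essentially to put them all at weight $2^{-a-2}$. Then every choice of $i$ bulk atoms and $2^{a+1}-4i$ tail atoms is a splitter; at $i=2^{a-2}$ these splitters have size exactly $n/2$ and number about $\binom{2^{a-1}}{2^{a-2}}\binom{2^{a+1}}{2^{a}}=\Theta\bigl(2^{n}/n\bigr)$, giving $\reldensof{\dyad}{n/2}=\Theta(n^{-1/2})$ --- polynomially large, not exponentially small. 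Your own counting sketch also signals the problem from the other side: if the tail really contributed only $\mathrm{poly}(n)$ subsets per target mass and all splitters sat near size $\epsparam n$, you would get at most $2^{\epsparam n+o(n)}$ splitters and hence $\maxdensof{\dyad}\le 2^{-(h(\epsparam)-\epsparam)n+o(n)}$, contradicting Lemma~\ref{lem:densityLB} at $\epsparam=1/5$. So either the confinement claim (i) or the count (ii) must fail for any instantiation of your scheme.

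The fix is a different split of the mass, which is what the paper does: put $2t-1$ atoms (with $t=\epsparam n=2^{a-1}$, so about $2\epsparam n$ atoms, twice your bulk) at the uniform weight $2^{-a}$, carrying mass $1-2^{-a}$, and spread only the residual mass $2^{-a}$ over the remaining $(1-2\epsparam)n+1$ atoms as a genuine geometric tail ($2^{-a-1},2^{-a-2},\dots$, with the last weight repeated). Now Lemma~\ref{lem:small-elements} applies verbatim: every splitter contains the whole tail or misses it, so the tail contributes zero combinatorial freedom, and the splitters are exactly the $\binom{2t-1}{t}$ size-$t$ subsets of the uniform block together with their complements. This yields $\maxdensof{\dyad}=\binom{2t-1}{t}/\binom{n}{t}\le O(\sqrt n)\,2^{(2\epsparam-h(\epsparam))n}$ with no case analysis over sizes (all other size classes are empty) and no optimization over $j$. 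Your high-level intuition --- confine splitter sizes to two windows and count --- is the right shape of argument, but the mass-$\tfrac12$/mass-$\tfrac12$ decomposition is the wrong way to realize it, and the ``$2$-adic bookkeeping'' you defer is precisely the step that breaks.
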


\begin{proof}
 We prove the lemma under the simplifying assumption that $1/\epsparam$ is an integer (our most important application of the lemma has $\epsparam \defeq 1/5$). Extending the argument for general $\epsparam$ is straightforward and left to the reader.

 Let $n$ be an integer of the form $\frac{2^{a}}{2\epsparam}$, for a positive integer $a$. 
 Note that for $n$ of this form, $\epsparam n = 2^{a-1}$ is a power of two.
 Let $t = \epsparam n$, and construct a dyadic distribution \dist on \range as follows:
 \begin{enumerate}
  \item For $i\in [2t-1]$, $\mu(x_i) = 2^{-a}=\frac{1}{2t}$.
  \item For $i\in [n-1] \setminus [2t-1]$, $\mu(x_i) = \mu(x_{i-1})/2 = 2^{-(a+i-2t+1)}$.
  \item $\mu(x_n) = \mu(x_{n-1})$.
 \end{enumerate}
The corresponding decision tree is obtained by taking a complete binary tree of depth $a$ and replacing one of the leaves by a ``path'' of length $n-2^a$; see Figure~\ref{fig:hard-distribution}. Alternatively, in the terminology of Definition~\ref{def:tail} we form \dist by taking the uniform distribution on $X_{2t}$ and replacing $x_{2t}$ with a tail on $x_{2t},\ldots,x_n$.

\begin{figure}

\begin{center}
\tikzset{internal/.style={circle,draw,inner sep=2pt}}
\tikzset{leaf/.style={circle,fill,inner sep=2pt}}
\begin{tikzpicture}
\node [draw,regular polygon,regular polygon sides=3,dashed,text width=2cm] (triangle) {Complete binary tree on $2\epsparam n$ vertices};
\node (leaf) [internal] at (triangle.corner 3) {};
\node (v1) [leaf,below left=1cm of leaf] {};
\node (v2) [internal,below right=1cm of leaf] {};
\node (v3) [leaf,below left=1cm of v2] {};
\node (v4) [internal,below right=2cm of v2] {};
\node (v5) [leaf,below left=1cm of v4] {};
\node (v6) [leaf,below right=1cm of v4] {};
\path
(leaf) edge (v1)
(leaf) edge (v2)
(v2) edge (v3)
(v2) edge[dashed] node[right] {Path of length $(1-2\epsparam)n$} (v4)
(v4) edge (v5)
(v4) edge (v6)
;
\end{tikzpicture}
\end{center}

\caption{The hard distribution used to prove Lemma~\ref{lem:densityUB}, in decision tree form}
\label{fig:hard-distribution}
\end{figure}
 
We claim that $\dyad\defeq \dyadof{\dist}$ contains only two types of sets:
\begin{enumerate}
 \item Subsets of size $t$ of $X_{2t-1}$.
 \item Subsets of size $n-t$ containing $t-1$ elements of $X_{2t-1}$ and all the elements $x_{2t},\ldots,x_n$.
\end{enumerate}
It is immediate that any such set $S$ is in \dyad.
On the other hand, Lemma~\ref{lem:small-elements} shows that every set $S \in \dyad$ either contains the tail $x_{2t},\ldots,x_n$ or is disjoint from it. If $S$ is disjoint from the tail then it must be of the first form, and if $S$ contains the tail then it must be of the second form.

Using the estimate $\binom{n}{\epsparam n} \geq 2^{h(\epsparam)n}/O(\sqrt{n})$ (see for example~\cite{TCS14476}), we see that 
\[\reldensof{\dyad}{t}= \reldensof{\dyad}{n-t}=\frac{\binom{2t-1}{t}}{\binom{n}{t}} \leq \frac{2^{2t}}{\binom{n}{\epsparam n}}\leq O(\sqrt{n}) \frac{2^{2t}}{2^{h(\epsparam) n}} = O(\sqrt{n}) 2^{(2\epsparam - h(\epsparam)) n}.\]
For $i\in \{1,\ldots,n-1\} \setminus \set{t,n-t}$ we have $\reldensof{\dyad}{i}=0$.
Thus indeed
\[\maxdensof{\dyad}\leq  O(\sqrt{n}) 2^{(2\epsparam - h(\epsparam)) n}. \qedhere \]
\end{proof}

Theorem~\ref{thm:minimum-redundancy-lb} can now be easily derived.
The first step is determining the optimal value of $\epsparam$:
\begin{claim}\label{claim:maxOfFunc}
We have
 \[ \max_{\epsparam \in [0,1]} 2^{h(\epsparam)-2 \epsparam} = 1.25, \]
and the maximum is attained (uniquely) at $\epsparam=1/5$.
 \end{claim}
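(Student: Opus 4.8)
The plan is to reduce the claim to a one-variable calculus exercise: maximize the exponent $g(\epsparam) \defeq h(\epsparam) - 2\epsparam$ over $[0,1]$ and then exponentiate, since the map $y \mapsto 2^{y}$ is increasing. The function $g$ is continuous on the compact interval $[0,1]$, so the maximum is attained; the only real work is to locate it and to verify uniqueness.

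First I would compute the derivative of $g$ on the open interval $(0,1)$. The binary entropy satisfies $h'(\epsparam) = \log_2\frac{1-\epsparam}{\epsparam}$, so $g'(\epsparam) = \log_2\frac{1-\epsparam}{\epsparam} - 2$. Setting $g'(\epsparam) = 0$ gives $\frac{1-\epsparam}{\epsparam} = 4$, i.e.\ $\epsparam = 1/5$. Moreover $h$ is strictly concave on $[0,1]$ (since $h''(\epsparam) = -\frac{\log_2 e}{\epsparam(1-\epsparam)} < 0$ on $(0,1)$) and $-2\epsparam$ is linear, so $g$ is strictly concave; hence $g$ has a \emph{unique} maximizer on $[0,1]$, which must be the interior critical point $\epsparam = 1/5$.

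Finally I would evaluate the maximum. Writing $h(1/5) = \tfrac15\log_2 5 + \tfrac45\log_2\tfrac54 = \tfrac15\log_2 5 + \tfrac45(\log_2 5 - 2) = \log_2 5 - \tfrac85$, we obtain $g(1/5) = h(1/5) - \tfrac25 = \log_2 5 - 2$, and therefore $\max_{\epsparam \in [0,1]} 2^{h(\epsparam) - 2\epsparam} = 2^{\log_2 5 - 2} = 5/4 = 1.25$, attained uniquely at $\epsparam = 1/5$. There is no genuine obstacle here; the only point worth a sentence of care is invoking strict concavity, which yields uniqueness of the maximizer for free and makes a separate comparison against the endpoint values $g(0) = 0$ and $g(1) = -2$ unnecessary.
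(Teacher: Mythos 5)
Your argument is correct and essentially identical to the paper's: both compute $g'(\epsparam)=\log_2\frac{1-\epsparam}{\epsparam}-2$, observe that the derivative is strictly decreasing (equivalently, that $g$ is strictly concave), locate the unique critical point at $\epsparam=1/5$, and evaluate $2^{g(1/5)}=5/4$. Your explicit evaluation of $h(1/5)$ is a nice touch the paper leaves to "calculation," but there is no substantive difference.
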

\begin{proof}
 Let $f(\epsparam) = h(\epsparam)-2\epsparam$.
 Calculation shows that the derivative $f'(\epsparam)$ is equal to 
 \[f'(\epsparam) = \log \left(\frac{1-\epsparam}{\epsparam}\right) -2,\]
 which is decreasing for $0<\epsparam<1$ and vanishes at $\epsparam =1/5$.
 Thus $f(\epsparam)$ achieves a unique maximum
 over $\epsparam\in (0,1)$ at $\epsparam=1/5$,
 where
 \[2^{f(1/5)} = 2^{h(1/5)-2\cdot 1/5} =1.25. \qedhere\]
\end{proof}

\paragraph{Proof of Theorem \ref{thm:minimum-redundancy-lb}:}
\begin{proof}
Let $\epsparam\defeq 1/5$. Claim~\ref{claim:maxOfFunc} shows that $2^{-(2\epsparam-h(\epsparam))}=1.25$.
Fix any $n$ of the form $n=\frac{2^{a}}{2\epsparam}$ for a positive integer $a$. 
It follows from Lemma~\ref{lem:densityUB} together with the first inequality in Theorem~\ref{thm:hittersizeisdensity}
that $\uhuf(n,0)\geq 1.25^n/O(\sqrt{n})$.

A general $n$ can be written in the form $n = \frac{2^a}{2\beta}$ for a positive integer $a$ and $1/4 \leq \beta \leq 1/2$.
Lemma~\ref{lem:densityUB} and Theorem~\ref{thm:hittersizeisdensity} show that for any integer $\ell \geq 0$,
\[
 \uhuf(n,0) \geq 2^{[h(\beta/2^\ell) - 2\beta/2^\ell]n}/O(\sqrt{n}).
\]
Calculation shows that when $\beta \leq \beta_0 \approx 0.27052059413118146$, this is maximized at $\ell = 0$, and otherwise this is maximized at $\ell = 1$.
Denote the resulting lower bound by $L(\beta)^n/O(\sqrt{n})$, the minimum of $L(\beta)$ is attained at $\beta_0$, at which point its value is $L(\beta_0) \approx 1.23214280723432$.
\end{proof}

\subsection{Lower bounding \smallestdenstitle{n}} \label{sec:un0-lb}

We will derive Theorem~\ref{thm:minimum-redundancy-ub} from the following lemma:
\begin{lemma}\label{lem:min-redundancy-ubhelper}
 For every non-constant dyadic distribution \dist
 there exists $0<\epsparam<1$ such that 
 \[\maxdensof{\dyadof{\dist}}\geq \frac{ 2^{(2\epsparam - h(\epsparam)) n}}{O(\sqrt n)^{O(\log n)}} = 2^{(2\epsparam - h(\epsparam)) n-o(n)}.\]
 \end{lemma}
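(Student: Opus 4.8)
The plan is to construct, for the given non-constant dyadic distribution $\mu$ on $X_n$, a family of $2^{(2\epsparam-h(\epsparam))n-o(n)}\binom{n}{i}$ distinct $i$-subsets of $X_n$, each of $\mu$-mass exactly $\tfrac12$ (i.e.\ each splitting $\mu$), for a single cardinality $i$ and a parameter $\epsparam\in(0,1)$ that the construction determines. Dividing by $\binom{n}{i}$ then lower-bounds $\reldensof{\dyadof{\mu}}{i}$, and hence $\maxdensof{\dyadof{\mu}}$, as required.

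The first move is to strip the tail. By Lemma~\ref{lem:small-elements} the tail $T$ of $\mu$ lies entirely inside or entirely outside every splitter, so contracting $T$ to a single atom of probability $\mu(T)$ produces a dyadic distribution $\nu$ on fewer atoms whose splitters are exactly those of $\mu$ with that atom re-expanded, changing cardinalities by at most $|T|\le n$. Because a tail swallows every maximal run of single-atom levels sitting above two deepest siblings, a single application leaves a tail-free distribution, and a short parity computation (via Lemma~\ref{lem:neat-sum}) shows its lowest level then has at least four atoms --- this is the structural normalization needed below. (On the extremal instance of Lemma~\ref{lem:densityUB} this step turns the complete-binary-tree-plus-path into the uniform distribution on $2\epsparam n$ atoms.)

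Next, in the tail-free distribution $\nu$ (on $n'$ atoms) group the atoms by depth, with $n_d$ atoms of probability $2^{-d}$, and choose for each $d$ an integer $c_d$ with $\sum_d c_d 2^{-d}=\tfrac12$, keeping $c_d$ as close as possible to $n_d/2$; the two parts of Lemma~\ref{lem:neat-sum} together with the parity gained above guarantee such a choice and control how far the $c_d$ must stray. Any set taking exactly $c_d$ atoms at each level $d$ then has $\mu$-mass $\tfrac12$ and cardinality $i'=\sum_d c_d\approx n'/2$, so these $\prod_d\binom{n_d}{c_d}$ sets are splitters of $\nu$ of a common cardinality. One argues that their relative density in $\nu$ is at least $2^{-O(\log^2 n)}$, the losses being controlled by the self-contained profile estimate $\sum_{d:\,n_d\ge2}\log n_d=O(\log^2 n)$ --- valid because $n_d\ge2$ forces level $d$ to carry $\mu$-mass $\ge2^{-(d-1)}$ while $n_d\le2^d$, so that many large levels rapidly exhaust the total mass $1$ (the $\Theta(n)$ single-atom levels a ``path'' may contain cost nothing, since $\binom{n_d}{c_d}=1$ there). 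Finally, re-expanding the contracted tail sends this family to a family of $\mu$-splitters; restricting to the (at least half --- by the complementation symmetry of splitters) members that do not contain the contracted atom gives a family of a single $\mu$-cardinality $i=i'$, whose relative density in $\mu$ is at least $\tfrac12$ the $\nu$-density times $\binom{n'}{i'}/\binom{n}{i'}$. A Stirling estimate identifies this ratio as $2^{(2\epsparam-h(\epsparam))n\pm o(n)}$ with $\epsparam:=i'/n$, and combining with the previous bound yields $\maxdensof{\dyadof{\mu}}\ge 2^{(2\epsparam-h(\epsparam))n-o(n)}$ with $o(n)=O(\log^2 n)=\log\bigl(O(\sqrt n)^{O(\log n)}\bigr)$.

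The step I expect to be the main obstacle is keeping the losses in the $\nu$-density down to $2^{o(n)}$: one must choose the $c_d$ so that every $\binom{n_d}{c_d}$ stays within a $\mathrm{poly}(n)$ factor of $\binom{n_d}{\lfloor n_d/2\rfloor}$ \emph{for the levels that matter}, while still solving $\sum_d c_d 2^{-d}=\tfrac12$ across possibly awkward parities, and one must then show the accumulated shortfall is governed by $\sum_{d:\,n_d\ge2}\log n_d$ rather than by the number of levels. The tail-stripping and the ``furthermore'' clause of Lemma~\ref{lem:neat-sum} are the tools for this, but orchestrating them --- and proving the profile estimate that makes the count $O(\log^2 n)$ --- is the technical heart; the lifting step and the Stirling bookkeeping afterwards are routine.
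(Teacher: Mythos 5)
Your construction has a genuine gap at exactly the point you flagged as the technical heart. The profile estimate $\sum_{d:\,n_d\ge 2}\log n_d=O(\log^2 n)$ is false, and with it the whole level-by-level strategy collapses. Concretely, take the dyadic distribution with $n_d=2$ atoms at each depth $d=2,\ldots,D-1$ and $n_D=4$ atoms at depth $D$, so that $n=2D$ and the total mass is $1$. This distribution is tail-free (a non-empty tail must contain exactly two atoms of minimal probability, but here there are four), so your normalization step changes nothing; yet it has $\Theta(n)$ levels with $n_d\ge 2$, so $\sum_{d:\,n_d\ge2}\log n_d=\Theta(n)$. Your argument that ``many large levels rapidly exhaust the total mass'' only uses $\sum_d n_d2^{-d}=1$, which is consistent with $\Theta(n)$ two-atom levels because deep levels carry negligible mass. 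On this example your construction takes $c_d=n_d/2$ everywhere and produces $6\cdot 2^{D-2}$ splitters of size $c=D=n/2$, giving a relative density of order $2^{-n/2+o(n)}$, whereas the lemma with $\epsparam=c/n=1/2$ demands $2^{(2\epsparam-h(\epsparam))n-o(n)}=2^{-o(n)}$ (and any admissible $\epsparam$ demands at least $1.25^{-n-o(n)}$ by Claim~\ref{claim:maxOfFunc}). So the family you build is exponentially too sparse; the lemma is still true for this $\mu$, but via splitters of a completely different (tiny) cardinality.

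The missing idea is that one must \emph{not} split every level in half: all but $O(\log n)$ levels have to be excluded from the splitters wholesale. The paper's proof takes $D_1=\{x_1\}$ (a single element of maximal probability $\dist_1$) and uses the ``furthermore'' clause of Lemma~\ref{lem:neat-sum} to find a suffix $E_1$ of the sorted distribution with $\dist(E_1)=\dist_1$; the splitters it constructs contain $x_1$ and \emph{no element of $E_1$}. Since every element outside $D_1\cup E_1$ then has probability greater than $\dist_1/n$, only $O(\log n)$ distinct levels remain, and those are split in half, with further sets $E_i$ of mass $p_i$ (again excluded from the splitters) absorbing the odd-cardinality levels. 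Your tail-stripping is a strictly weaker normalization: it only contracts a single geometric chain ending in two siblings, and does nothing to the example above. If you want to salvage your route, you would need to replace the rule ``$c_d\approx n_d/2$'' by ``$c_d\in\{0\}$ for all sufficiently deep levels, compensated by one extra top element,'' which is precisely the paper's $D_1,E_1$ device.
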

\begin{proof}
 Assume without loss of generality that the probabilities in \dist are non-increasing:
 \[ \dist_1 \geq \dist_2 \geq \cdots \geq \dist_n. \]
 
 The idea is to find a partition of $X_n$ of the form
 \[
  X_n = \bigcup_{i=1}^\gamma (D_i \cup E_i)
 \]
 which satisfies the following properties:
\begin{enumerate}
 \item $D_i$ consists of elements having the same probability $p_i$.
 \item If $D_i$ has an even number of elements then $E_i = \emptyset$.
 \item If $D_i$ has an odd number of elements then $\dist(E_i) = p_i$.
 \item $\gamma = O(\log n)$. (In fact, $\gamma = o(n/\log n)$ would suffice.)
\end{enumerate}
 We will show later how to construct such a partition.
 
 The conditions imply that $\dist(D_i \cup E_i)$ is an even integer multiple of $p_i$, say $\dist(D_i \cup E_i) = 2c_ip_i$. It is not hard to check that $c_i = \lceil |D_i|/2 \rceil$.
 
 Given such a partition, we show how to lower bound the maximum relative density of $\dyadof{\dist}$. If $S_i \subseteq D_i$ is a set of size $c_i$ for each $i \in [\gamma]$ then the set $S = \bigcup_i S_i$ splits $\dist$:
\[
 \dist(S) = \sum_{i=1}^\gamma c_i p_i = \frac{1}{2} \sum_{i=1}^\gamma \dist(D_i \cup E_i) = \frac{1}{2}.
\]
 Defining $c = \sum_{i=1}^\gamma c_i$, we see that each such set $S$ contains $c$ elements, and the number of such sets is
\[
 \prod_{i=1}^\gamma \binom{|D_i|}{c_i} \geq \prod_{i=1}^\gamma \frac{2^{2c_i}}{O(\sqrt{n})} = \frac{2^{2c}}{O(\sqrt{n})^{O(\log n)}},
\]
 using the estimate
\[
 \binom{m}{\lceil m/2 \rceil} = \Theta\left(\frac{2^{2\lceil m/2 \rceil}}{\sqrt{m}}\right),
\]
 which follows from Stirling's approximation.

 In order to obtain an estimate on the maximum relative density of $\dyadof{\dist}$, we use the following folklore upper bound\footnote{Here is a quick proof: Let $Y$ be a uniformly random subset of $X_n$ of size $c$, and let $Y_i$ indicate the event $x_i \in Y$. Then $\log \binom{n}{c} = H(Y) \leq nH(Y_1) = nh(c/n)$.} on $\binom{n}{c}$:
\[
 \binom{n}{c} \leq 2^{h(c/n)n}. 
\]
 We conclude that the maximum relative density of $\dyadof{\dist}$ is at least
\[
 \maxdensof{\dist} \geq \reldensof{\dist}{c} \geq
 \frac{\prod_{i=1}^\gamma \binom{|D_i|}{c_i}}{\binom{n}{c}} \geq
 \frac{2^{2c-h(c/n)n}}{O(\sqrt{n})^{O(\log n)}}.
\]
 To obtain the expression in the statement of the lemma, take $\epsparam \defeq c/n$.
 
 \medskip
 
 We now show how to construct the partition of $X_n$. We first explain the idea behind the construction, and then provide full details; the reader who is interested only in the construction itself can skip ahead.
 
 \paragraph{Proof idea} Let $q_1,\ldots,q_\gamma$ be the different probabilities of elements in \dist. We would like to put all elements of probability $q_i$ in the set $D_i$, but there are two difficulties:
\begin{enumerate}
\item There might be an odd number of elements whose probability is $q_i$.
\item There might be too many distinct probabilities, that is, $\gamma$ could be too large. (We need $\gamma = o(n/\log n)$ for the argument to work.)
\end{enumerate}

 The second difficulty is easy to solve: we let $D_1 = \{ x_1 \}$, and use Lemma~\ref{lem:neat-sum} to find an index $\ell$ such that $\dist(E_1) \defeq \dist(\{x_\ell,\ldots,x_n\}) = \dist_1$. A simple argument shows that all remaining elements have probability at least $\dist_1/n$, and so the number of remaining distinct probabilities is $O(\log n)$. (The reader should observe the resemblance between $E_1$ and the tail of the hard distribution constructed in Lemma~\ref{lem:densityUB}.)
 
 Lemma~\ref{lem:neat-sum} also allows us to resolve the first difficulty. The idea is as follows. Suppose that the current set under construction, $D_i$, has an odd number of elements, each of probability $q_i$. We use Lemma~\ref{lem:neat-sum} to find a set of elements whose total probability is $q_i$, and put them in $E_i$.
 
 \paragraph{Detailed proof}
 
 Let $N$ be the maximal index such that $\dist_N > 0$. Since $\dist$ is non-constant, $\dist_1 \leq 1/2$, and so Lemma~\ref{lem:neat-sum} proves the existence of an index $M$ such that $\dist(\{x_{M+1},\ldots,x_N\}) = \dist_1$ (we use the \emph{furthermore} part of the lemma, and $M = \ell-1$). We take
\[
 D_1 \defeq \{ x_1 \}, \quad E_1 \defeq \{ x_{M+1}, \ldots, x_n \}.
\]
 Thus $\dist(D_1) = \dist(E_1) = \dist_1$, and so $\dist(\{x_2,\ldots,x_M\}) = 1-2\dist_1$ (possibly $M=1$, in which case the construction is complete).
 
 By construction $n\dist_M > \dist(E_1) = \dist_1$, and so $\dist_M < \dist_1/n$. In particular, the number of distinct probabilities among $\dist_2,\ldots,\dist_M$ is at most $\log n$. This will guarantee that $\gamma \leq \log n + 1$, as will be evident from the construction.
 
 The construction now proceeds in steps. At step $i$, we construct the sets $D_i$ and $E_i$, given the set of available elements $\{x_{\alpha_i},\ldots,x_M\}$, where possibly $\alpha_i = M+1$; in the latter case, we have completed the construction. We will maintain the invariant that $\dist(\{x_{\alpha_i},\ldots,x_M\})$ is an even multiple of $\dist_{\alpha_i}$;
 initially $\alpha_2 \defeq 2$, and $\dist(\{x_{\alpha_i},\ldots,x_M\}) = (1/\dist_1-2)\dist_1$ is indeed an even multiple of $\dist_2$.
 
 Let $\beta_i$ be the maximal index such that $\dist_{\beta_i} = \dist_{\alpha_i}$ (possibly $\beta_i = \alpha_i$). We define
\[
 D_i \defeq \{x_{\alpha_i},\ldots,x_{\beta_i}\}.
\]

 Suppose first that $|D_i|$ is even. In this case we define $E_i \defeq \emptyset$, and $\alpha_{i+1} \defeq \beta_i+1$. Note that
\[
 \dist(\{x_{\alpha_{i+1}},\ldots,x_M\}) =
 \dist(\{x_{\alpha_i},\ldots,x_M\}) - |D_i| \dist_{\alpha_i},
\]
 and so the invariant is maintained.
 
 Suppose next that $|D_i|$ is odd. In this case $\dist(\{x_{\beta_i+1},\ldots,x_M\}) \geq x_{\alpha_i}$, since $\dist(\{x_{\beta_i+1},\ldots,x_M\})$ is an odd multiple of $\dist_{\alpha_i}$. Therefore we can use Lemma~\ref{lem:neat-sum} to find an index $\gamma_i$ such that $\dist(\{x_{\beta_i+1},\ldots,x_{\gamma_i}\}) = \dist_{\alpha_i}$. We take
\[
 E_i \defeq \{x_{\beta_i+1},\ldots,x_{\gamma_i}\}
\]
 and $\alpha_{i+1} \defeq \gamma_i+1$. Note that
\[
 \dist(\{x_{\alpha_{i+1}},\ldots,x_M\}) =
 \dist(\{x_{\alpha_i},\ldots,x_M\}) - (|D_i|+1) \dist_{\alpha_i},
\]
 and so the invariant is maintained.
 
 The construction eventually terminates, say after step $\gamma$. The construction ensures that $\dist_{\alpha_2} > \dist_{\alpha_3} > \cdots > \dist_{\alpha_\gamma}$. Since there are at most $\log n$ distinct probabilities among the elements $\{x_{\alpha_2},\ldots,x_M\}$, $\gamma \leq \log n + 1$, completing the proof.
\end{proof}

Theorem \ref{thm:minimum-redundancy-ub} follows immediately from the second inequality in Theorem \ref{thm:hittersizeisdensity}
together with the following lemma:
\begin{lemma}\label{lem:densityLB}
 Fix an integer $n$ and let \dyad be a dyadic set in \range.
 Then
 \[\maxdensof{\dyad}\geq 1.25^{-n-o(n)},\]
 and thus
 \[\smallestdens{n}\geq 1.25^{-n-o(n)}.\]
 
\end{lemma}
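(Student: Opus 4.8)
The statement is essentially a one-line consequence of Lemma~\ref{lem:min-redundancy-ubhelper} together with Claim~\ref{claim:maxOfFunc}, so the plan is just to assemble these pieces and track the $o(n)$ terms. First I would recall that, by Definition~\ref{dfn:relativedensity}, a dyadic set in \range is by definition of the form $\dyadof{\dist}$ for some non-constant dyadic distribution \dist. Fixing such a \dyad, I apply Lemma~\ref{lem:min-redundancy-ubhelper} to the corresponding \dist: it yields some parameter $\epsparam = \epsparam(\dist) \in (0,1)$ with
\[
 \maxdensof{\dyad} \;=\; \maxdensof{\dyadof{\dist}} \;\geq\; 2^{(2\epsparam - h(\epsparam))n - o(n)} .
\]
Note that the $o(n)$ here is uniform: it hides the factor $O(\sqrt{n})^{O(\log n)} = 2^{O(\log^2 n)}$, which is $2^{o(n)}$ regardless of \dist and $\epsparam$.

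Next I would feed in Claim~\ref{claim:maxOfFunc}, which gives $2^{h(\epsparam) - 2\epsparam} \leq 1.25$ for \emph{every} $\epsparam \in [0,1]$ (with equality only at $\epsparam = 1/5$); equivalently $2\epsparam - h(\epsparam) \geq -\log 1.25$, uniformly in $\epsparam$. The key point is that the lower bound on $\maxdensof{\dyad}$ above is monotone increasing in the quantity $2\epsparam - h(\epsparam)$, so no matter which $\epsparam$ Lemma~\ref{lem:min-redundancy-ubhelper} happened to produce, we may substitute the worst case and conclude
\[
 \maxdensof{\dyad} \;\geq\; 2^{-(\log 1.25)n - o(n)} \;=\; 1.25^{-n - o(n)} .
\]
Since this holds for an arbitrary dyadic set \dyad, and \smallestdens{n} is by definition the minimum of \maxdensof{\dyad} over all dyadic sets, the same bound $\smallestdens{n} \geq 1.25^{-n-o(n)}$ follows, which is exactly the claim.

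Finally I would note, as the excerpt already indicates, that Theorem~\ref{thm:minimum-redundancy-ub} then drops out by plugging this into the second inequality of Theorem~\ref{thm:hittersizeisdensity}: $\uhuf(n,0) \leq n^2 \log n \cdot (1/\smallestdens{n}) \leq n^2 \log n \cdot 1.25^{n+o(n)} = 1.25^{n+o(n)}$. As for difficulty: there is essentially no obstacle in \emph{this} lemma—all the real work lives in Lemma~\ref{lem:min-redundancy-ubhelper} (the partition of $X_n$ into the $D_i \cup E_i$ blocks and the Stirling estimates). The only things to be careful about here are (i) getting the direction of the inequality right when substituting Claim~\ref{claim:maxOfFunc}—the lemma gives a \emph{lower} bound on $\maxdensof{\dyad}$ and we want it to stay large, which is why we use the \emph{minimum} of $2\epsparam - h(\epsparam)$—and (ii) confirming that the $2^{O(\log^2 n)}$ slack is absorbed into a single $2^{o(n)}$ independent of the dyadic set chosen.
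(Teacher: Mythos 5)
Your proof is correct and follows essentially the same route as the paper: apply Lemma~\ref{lem:min-redundancy-ubhelper} to obtain the bound $2^{(2\epsparam-h(\epsparam))n-o(n)}$ for some $\epsparam$, then use Claim~\ref{claim:maxOfFunc} to bound $2\epsparam-h(\epsparam)\geq-\log 1.25$ uniformly. Your extra remarks on the uniformity of the $o(n)$ slack and the direction of the substitution are sensible but not points where the paper's argument differs.
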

\begin{proof}
Fix a dyadic set \dyad in \range.
Lemma~\ref{lem:min-redundancy-ubhelper} implies that there exists $0<\epsparam<1$ such that
$\maxdensof{\dyad}\geq 2^{(2\epsparam - h(\epsparam)) n-o(n)}$.
Using Claim \ref{claim:maxOfFunc} we have $2^{2\epsparam - h(\epsparam)}\geq \frac{4}{5}$,
and so 
\[\maxdensof{\dyad}\geq (4/5)^n\cdot 2^{-o(n)}= 1.25^{-n-o(n)}. \qedhere \]
 \end{proof}

\section{Combinatorial benchmark with prolixity} \label{sec:prolixity}

In the previous section we studied the minimum size of a set $\cQ$ of questions with the property that
for every distribution, there is an optimal decision tree using only questions from $\cQ$.
In this section we relax this requirement by allowing the cost to be
slightly worse than the optimal cost.

More formally, recall that $\uhuf(n,r)$ is the minimum size of a set of questions $\cQ$
such that for every distribution $\pi$ there exists a decision tree that uses 
only questions from $\cQ$ with cost at most $\opt{\pi} + r$.

In a sense, $\uhuf(n,r)$ is an extension of $\uent(n,r)$ for $r\in(0,1)$:
indeed, $\uent(n,r)$ is not defined for $r<1$ since for some distributions $\pi$ there is no decision tree with cost less than $H(\pi)+1$ (see Section~\ref{sec:comparison-equality}).
Moreover, $\opt{\pi}$, which is the benchmark used by $\uhuf(n,r)$, is precisely the optimal cost, 
whereas $H(\pi)$, the benchmark used by $\uent(n,r)$ is a convex surrogate
of $\opt{\pi}$.

We focus here on the range $r\in(0,1)$.
We prove the following bounds on $\uhuf(n,r)$, 
establishing that $\uhuf(n,r) \approx (r\cdot n)^{\Theta(1/r)}$.

\begin{main-theorem}\label{thm:min-redundancy-approx}
For all $r\in(0,1)$, and for all $n>1/r$:
\[\frac{1}{n}(r\cdot n)^{\frac{1}{4r}}\leq \uhuf(n,r) \leq n^2(3r\cdot n)^{\frac{16}{r}}. \]
\end{main-theorem}

As a corollary, we get that the threshold of exponentiality is $1 / n$:

\begin{main-corollary} \label{cor:minimum-redundancy-approx-exp}
 If $r = \omega(1/n)$ then $\uhuf(n,r) = 2^{o(n)}$.
 
 Conversely, if $r = O(1/n)$ then $\uhuf(n,r) = 2^{\Omega(n)}$.
\end{main-corollary}

For larger $r$, the following theorem is a simple corollary of Theorem~\ref{thm:redundancy-r-bounds} and the bound $\uhuf(n,r) \leq \uent(n,r) \leq \uhuf(n,r-1)$:

\begin{main-theorem} \label{thm:prolixity-r-bounds}
For every $r \geq 1$ and $n\in\mathbb{N}$,
\[
 \frac{1}{e}\lfloor r+1 \rfloor n^{1/\lfloor r+1 \rfloor} \leq \uhuf(n,r) \leq 2 \lfloor r \rfloor n^{1/\lfloor r \rfloor}.
\]
\end{main-theorem}

\medskip

Theorem~\ref{thm:min-redundancy-approx} is implied by the following lower and upper bounds,
which provide better bounds when $r\in (0,1)$ is a negative power of~$2$.

\begin{main-theorem}[Lower bound] \label{thm:minimum-redundancy-approx-lb}

For every $r$ of the form $1/2^k$, where $k\geq 1$ is an integer, and $n>2^k$:
\[
\uhuf(n,r) \geq (r\cdot n)^{\frac{1}{2r}-1}.
\]
\end{main-theorem}

\begin{main-theorem}[Upper bound] \label{thm:minimum-redundancy-approx-ub}
For every $r$ of the form $4/2^k$, where $k\geq 3$ is an integer, and $n>2^k$:
\[
 \uhuf(n,r+ r^2) \leq n^2 \bigl(\frac{3e}{4}r\cdot n\bigr)^{\frac{4}{r}}.
\]
\end{main-theorem}

These results imply Theorem~\ref{thm:min-redundancy-approx}, due to the monotonicity of $\uhuf(n,r)$, as follows.

Let $r\in(0,1)$.
For the \emph{lower bound}, pick the smallest $t\geq r$ of the form $1/2^k$. 
Note that $t\leq 2r$, and thus:
\[\uhuf(n,r) \geq \uhuf(n,t)  \geq (t\cdot n)^{\frac{1}{2t}-1} \geq (r\cdot n)^{\frac{1}{4r}-1}\geq \frac{1}{n}(r\cdot n)^{\frac{1}{4r}}.\]

For the \emph{upper bound}, pick the largest $t$ of the form $4/2^k$, $k\geq 3$ such that $t + t^2\leq r$.
Note that $t\geq r/4$ (since $s=r/2$ satisfies $s+s^2 \leq 2s \leq r$), and thus
\[\uhuf(n,r) \leq \uhuf(n, t+t^2) \leq  n^2 \bigl(\frac{3e}{4}t\cdot n\bigr)^{\frac{4}{t}}\leq n^2(3r\cdot n)^{\frac{16}{r}}.\]

\subsection{Lower bound} \label{sec:minimum-redundancy-approx-lb}

Pick a sufficiently small $\delta>0$ (as we will soon see, $\delta < r^2$ suffices), and consider a distribution $\mu$ with $2^k-1$ ``heavy'' elements (this many elements exist since $n>1/r$), each of probability $\frac{1-\delta}{2^k-1}$, and $n-(2^k-1)$ ``light'' elements with total probability of $\delta$. Recall that a decision tree is \emph{$r$-optimal} if its cost is at most $\opt{\mu}+r$.
The proof proceeds by showing that if $T$ is an $r$-optimal tree, then the first question in $T$ has the following properties:
\begin{enumerate}[label=(\roman*)]
\item it separates the heavy elements to two sets of almost equal sizes ($2^{k-1}$ and $2^{k-1}-1$), and
\item it does not distinguish between the light elements. 
\end{enumerate}
The result then follows since there are $\binom{n}{2^k-1}$ such distributions $\sigma$ (the number of  ways to choose the light elements), and each question can serve as a first question to at most $\binom{n-(2^{k-1}-1)}{2^{k-1}}$ of them.

To establish these properties, we first prove a more general result (cf.\ Lemma~\ref{lem:small-elements}):

\begin{lemma}\label{l324}
Let $\mu$ be a distribution over a finite set $X$, and let $A\subseteq X$ be
such that for every $x\notin A$, $\mu\bigl(\{x\}\bigr) > \mu(A) + \eps$.
Then every decision tree $T$ which is $\eps$-optimal with respect to $\mu$ has a subtree $T'$
whose set of leaves is $A$.
\end{lemma}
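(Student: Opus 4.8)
The plan is to argue by contradiction: suppose $T$ is $\eps$-optimal but no subtree of $T$ has leaf set exactly $A$. I will show that the elements of $A$ must then be ``spread out'' enough in $T$ that the cost exceeds $\opt{\mu}+\eps$. Recall (from the dyadic-distribution viewpoint in Section~\ref{sec:definitions}) that $T$ corresponds to a dyadic distribution $\tau$ with $\tau_i = 2^{-T(x_i)}$, and that $T(\mu) = H(\mu) + D(\mu\|\tau) \ge \opt{\mu}$. So it suffices to build, from $T$, a competing decision tree $T^\star$ whose cost is smaller than $T(\mu)$ by more than $\eps$ — or directly to exhibit a dyadic distribution beating $\tau$ by more than $\eps$ in the relevant sense. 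The cleanest route is the former: I will surgically modify $T$ into a tree $T^\star$ in which all elements of $A$ hang off a single common node (so $T^\star$ \emph{does} have a subtree with leaf set $A$), and bound the cost change.

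The key steps, in order. First, reduce to the case that $A$ is an ``interval'' of sibling-like structure: consider the Huffman/optimal tree and note that the elements not in $A$ are individually heavier than $\mu(A)$, hence heavier than every single element of $A$; by a standard exchange argument on $T$ (swapping a deepest leaf not in $A$ with a shallower leaf in $A$ whenever they are out of order), one may assume without increasing cost that every element of $A$ is at least as deep as every element outside $A$ — call this normalized tree $T_0$, with $T_0(\mu) \le T(\mu)$. Second, look at the set $L$ of deepest leaves of $T_0$: if $A$ is not already the leaf set of a subtree, then among the deepest leaves there is a sibling pair not both in $A$, or an element of $A$ that has a non-$A$ leaf at the same or greater depth; in either configuration, a local move (merge two $A$-leaves under a new internal node, promoting a non-$A$ leaf one level up) strictly decreases the contribution of the non-$A$ element by $\mu(\{x\})$ while increasing the contributions of the two $A$-elements by at most their combined probability, which is at most $\mu(A)$. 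Net change $\le \mu(A) - \mu(\{x\}) < -\eps$ by hypothesis. Third, iterate until $A$ forms a subtree; since each step strictly drops the cost by more than $\eps \cdot(\text{something})$... — actually a cleaner accounting: perform all the restructuring at once, producing $T^\star$ with a subtree on $A$, and show in aggregate that $T^\star(\mu) \le \opt{\mu}$ would force $T(\mu) > \opt{\mu} + \eps$.

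I expect the main obstacle to be making the ``local move'' bookkeeping rigorous and global rather than step-by-step: a single greedy swap is easy to analyze, but one must ensure the moves compose without the cost savings being cancelled by later moves, and that the process terminates. The right way around this is probably not to iterate local moves at all, but to compare $T$ directly against the canonical optimal tree for $\mu$: in any optimal tree $T_{\mathrm{opt}}$, the elements of $A$ — being the lightest — sit at the bottom, and one shows $T_{\mathrm{opt}}$ can be chosen so that $A$ is (contained in) a subtree; then $\opt{\mu} = T_{\mathrm{opt}}(\mu)$, and if $T$ has no $A$-subtree, at least one element $x\notin A$ is ``misplaced'' relative to $T_{\mathrm{opt}}$ in a way that costs, via a potential/exchange argument, strictly more than $\eps$. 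Quantifying exactly ``which misplacement costs $> \eps$'' — tying the depth discrepancy of $x$ to the gap $\mu(\{x\}) - \mu(A)$ — is the technical heart, and I would isolate it as the one calculation to carry out in full.
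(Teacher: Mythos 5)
Your proposal is a plan with an acknowledged hole, not a proof, and the hole is exactly where the lemma's content lives. The paper's argument is an induction on $\lvert A\rvert$: take two siblings $x,y$ of \emph{maximal depth} in $T$, show both must lie in $A$, merge them into a single element $z$ of probability $\mu(\{x\})+\mu(\{y\})$, and apply the induction hypothesis to the merged tree and the set $A\cup\{z\}\setminus\{x,y\}$ (which still satisfies the hypothesis of the lemma). This inductive merging is precisely what disposes of the ``do the local moves compose / does the process terminate'' worry you raise and never resolve. The step you defer as ``the technical heart'' is the claim that both deepest siblings lie in $A$, and it needs a case split you don't make: if some $a'\in A$ is strictly shallower than the deepest non-$A$ leaf $x$, a plain swap saves $(d-d')(\mu(\{x\})-\mu(\{a'\}))>\eps$; but if \emph{every} element of $A$ sits at the same maximal depth $d$ as $x$ (your ``normalization'' is automatic for an $\eps$-optimal tree, so this is the case that actually occurs), no swap helps, and one needs the three-way rearrangement of Figure~\ref{fig:transformation}: turn the parent of $x,y$ into the leaf $x$, move $y$ into $a'$'s slot, and push $a',a''$ down one level under $a''$'s old slot, for a net saving of $\mu(\{x\})-\mu(\{a',a''\})>\eps$.

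Your proposed local move (``merge two $A$-leaves under a new internal node, promoting a non-$A$ leaf one level up'') is not well-defined in all configurations and its cost accounting can fail: the subtree that gets promoted when you detach an $A$-leaf need not contain any non-$A$ element (it may consist entirely of $A$-elements), in which case the promoted mass is at most $\mu(A)$ and the move does not save more than $\eps$. The fallback you sketch in the last paragraph --- compare $T$ against a canonical optimal tree in which $A$ is a subtree --- is also unexecuted, and it quietly assumes that elements of $A$ are individually lighter than elements outside $A$ \emph{forces} them to the bottom of every near-optimal tree in a way that bounds the discrepancy by $\eps$; making that quantitative is no easier than the direct exchange argument. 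To repair the proof, adopt the paper's structure: induct on $\lvert A\rvert$, prove the two deepest siblings are in $A$ via the two exchanges above, and merge.
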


\begin{proof}
By induction on $\lvert A\rvert$. 
The case $\lvert A\rvert =1$ follows since any leaf is a subtree.
Assume $\lvert A\rvert > 1$. Let $T$ be a decision tree which is $\eps$-optimal with respect to $\mu$.
Let $x,y$ be two siblings of maximal depth.
Note that it suffices to show that $x,y\in A$,
since then, merging $x,y$ to a new element $z$
with $\mu(\{z\}) = \mu(\{x\}) + \mu(\{y\})$ and applying
the induction hypothesis yields that $A\cup\{z\}\setminus\{x,y\}$
is the set of leaves of a subtree of $T$ with $x,y$ removed. 
This finishes the proof since $x,y$ are the children of $z$.

It remains to show that $x,y\in A$. 
Let $d$ denote the depth of $x$ and $y$.
Assume toward contradiction that $x\notin A$. 
Pick $a',a''\in A$, with depths $d',d''$ (this is possible since $\lvert A\rvert > 1$).
If $d'<d$ or $d''<d$ then replacing $a'$ with $x$ or $a''$ with $x$ improves
the cost of $T$ by more than $\eps$, contradicting its optimality.
Therefore, it must be that $d'=d''=d$, and we perform the following transformation
(see Figure~\ref{fig:transformation}):
the parent of $x$ and $y$ becomes a leaf with label $x$ (decreasing the depth of $x$ by 1),
$y$ takes the place of $a'$ (the depth of $y$ does not change), and $a''$ becomes an internal
node with two children labeled by $a',a''$ (increasing the depths of $a',a''$ by 1).
Since $\mu\bigl(\{x\}\bigr) - \mu\bigl(\{a',a''\}\bigr) > \eps$, 
this transformation improves the cost of $T$ by more than $\eps$, contradicting its $\epsilon$-optimality.
\end{proof}

\begin{figure}
\begin{subfigure}[t]{.5\linewidth}
%\subfigure[Original tree]{
\centering
\begin{tikzpicture}[baseline=(root.base),
level 1/.style={sibling distance=4cm},
level 2/.style={sibling distance=2.5cm}]
\node[circle, draw] (root) {}
    child{
    node[circle, draw, fill=black] {} edge from parent[dashed]
    	child{
	node[circle, draw, solid] {$x$} edge from parent[solid]
	     }
	child{
	node[circle, draw, solid] {$y$} edge from parent[solid]
	     }
        }
    child{ 
    node[circle, draw] {} edge from parent[dashed]
    	child{
	node[circle, draw, solid] {$a'$} 
	     }
	child{
	node[circle, draw, solid] {$a''$} 
	     }
        };
\end{tikzpicture}
\caption{Original tree}
\end{subfigure}
%}\hfill
%\subfigure[Transformed tree]{
\begin{subfigure}[t]{.5\linewidth}
\centering
\begin{tikzpicture}[baseline=(root.base),
level 1/.style={sibling distance=2.5cm},
level 2/.style={sibling distance=2.5cm}
level 2/.style={sibling distance=2.5cm}]
\node[circle, draw] (root) {}
    child{node[circle, draw] {$x$} edge from parent[dashed]
        }
    child{ 
    node[circle, draw] {} edge from parent[dashed]
    	child{
	node[circle, draw, solid] {$y$} 
	     }
	child{
	node[circle, draw, fill=black, solid] {} 
		child{
		node[circle, draw, solid] {$a'$} edge from parent[solid]
	     	      }
		child{
	        node[circle, draw, solid] {$a''$} edge from parent[solid]
	              }
	     }
        };
\end{tikzpicture}
\caption{Transformed tree}
\end{subfigure}
%}
\caption{The transformation in Lemma~\ref{l324}.
%Only $x,y$ (in the original tree) and $a',a''$ (in the transformed tree) are actual siblings.
The cost decreases by $\mu\bigl(\{x\}\bigr)-\mu\bigl(\{a',a''\}\bigr)>\eps$.}\label{fig:transformation}
\end{figure}

\begin{corollary}\label{c325}
Let $\mu$ be a distribution over $X$, and let $A\subseteq X$ be
such that for every $x\notin A$, $\mu\bigl(\{x\}\bigr)  >  \mu(A)$.
Then every optimal tree $T$ with respect to $\mu$ has a subtree $T'$
whose set of leaves is $A$.
\end{corollary}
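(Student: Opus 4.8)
The plan is to deduce the corollary from Lemma~\ref{l324} by running that lemma with a slack parameter $\eps$ that is a sufficiently small positive number; the corollary is essentially the limiting case $\eps\to 0$ of the lemma.

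First I would dispose of the degenerate case $A=X$: then the hypothesis is vacuous, and $T$ itself is a subtree of $T$ whose leaf set is $X=A$, so there is nothing to prove. Assume henceforth $A\neq X$. Since $X$ is finite and $\mu(\{x\})>\mu(A)$ for every $x\notin A$, the quantity $\min_{x\notin A}\bigl(\mu(\{x\})-\mu(A)\bigr)$ is a minimum over a non-empty finite set of strictly positive numbers, hence strictly positive; I would set $\eps$ to be, say, half of it, so that $\eps>0$ and $\mu(\{x\})>\mu(A)+\eps$ for every $x\notin A$.

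Next I would observe that an optimal decision tree $T$ for $\mu$ has cost exactly $\opt{\mu}\leq\opt{\mu}+\eps$, so $T$ is in particular $\eps$-optimal with respect to $\mu$. Applying Lemma~\ref{l324} to this $A$ and this $\eps$ then produces a subtree $T'$ of $T$ whose set of leaves is exactly $A$, which is the claim.

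The argument is entirely routine; the only step needing (minor) care is extracting a strictly positive uniform gap $\eps$ from the pointwise strict inequalities $\mu(\{x\})>\mu(A)$, and this is immediate from the finiteness of $X$. I do not anticipate any real obstacle here.
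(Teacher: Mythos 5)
Your proof is correct and is exactly the intended derivation: the paper states Corollary~\ref{c325} as an immediate consequence of Lemma~\ref{l324}, obtained by extracting a uniform positive gap $\eps$ from the finitely many strict inequalities $\mu(\{x\})>\mu(A)$ and noting that an optimal tree is $\eps$-optimal for every $\eps>0$. Nothing to add.
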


Property~(ii) follows from Lemma~\ref{l324}, which implies that if $\delta$ is sufficiently small then all light elements are clustered together as the leaves of some subtree. Indeed, by Lemma~\ref{l324}, 
this happens if the probability of a single heavy element (which is $\frac{1-\delta}{2^k-1}$) exceeds the total probability of all light elements (which is $\delta$) by at least $r$. A simple calculation shows that setting $\delta$ smaller than $r^2$ suffices.

We summarize this in the following claim:
\begin{claim}[light elements]
Every $r$-optimal tree has a subtree whose set of leaves is the set of light elements.
\end{claim}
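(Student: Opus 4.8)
The plan is to deduce the claim directly from Lemma~\ref{l324}, instantiating $A$ as the set $L$ of light elements and $\eps$ as $r$. The only thing to check is the hypothesis of that lemma: that every heavy element $x$ (i.e.\ every $x\notin L$) satisfies $\mu(\{x\}) > \mu(L) + r$. By construction each heavy element has probability $\frac{1-\delta}{2^k-1}$, and $\mu(L) = \delta$ exactly, so this amounts to verifying the numerical inequality $\frac{1-\delta}{2^k-1} > \delta + r$.

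Recalling that $r = 1/2^k$ in this setting, I would rewrite
\[
 \frac{1-\delta}{2^k-1} - \frac{1}{2^k} = \frac{2^k(1-\delta) - (2^k-1)}{2^k(2^k-1)} = \frac{1 - 2^k\delta}{2^k(2^k-1)},
\]
so the desired inequality is equivalent to $1 - 2^k\delta > 2^k(2^k-1)\delta$, i.e.\ $1 > 2^{2k}\delta$, i.e.\ $\delta < 2^{-2k} = r^2$. Thus any choice $\delta < r^2$ suffices, matching the constraint announced just before the claim. With the hypothesis in hand, Lemma~\ref{l324} immediately produces a subtree $T'$ of the given $r$-optimal tree $T$ whose leaf set is exactly $L$, which is the assertion.

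I do not expect a real obstacle here: the content is the elementary computation above pinning down how small $\delta$ must be taken, together with the bookkeeping identification of $A$ with the light-element set and $\eps$ with $r$. The one point worth stating carefully is that $\mu(L) = \delta$ \emph{exactly} (rather than only approximately), so that the strict inequality $\mu(\{x\}) > \mu(L) + \eps$ required by Lemma~\ref{l324} genuinely holds with $\eps = r$.
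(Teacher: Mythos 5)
Your proof is correct and takes essentially the same route as the paper: both apply Lemma~\ref{l324} with $A$ the set of light elements and $\eps=r$, reducing the claim to the inequality $\frac{1-\delta}{2^k-1}>\delta+r$, which the paper states holds once $\delta<r^2$. Your explicit computation showing this threshold is exactly $\delta<2^{-2k}=r^2$ is the ``simple calculation'' the paper leaves to the reader.
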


The next claim concerns the other property:
\begin{claim}[heavy elements]
In every $r$-optimal decision tree, the first question partitions the heavy elements into a set of size $2^{k-1}$ and a set of size $2^{k-1}-1$.
\end{claim}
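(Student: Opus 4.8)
\emph{Plan.} Reduce to an auxiliary distribution carried by only $2^k$ ``items'', then count sums of leaf depths exactly. Fix an $r$-optimal tree $T$ for $\mu$. By the light-elements claim established above (via Lemma~\ref{l324}), $T$ has a subtree $T_0$ whose leaf set is exactly the light elements; since $T$ also has heavy leaves, the root of $T_0$ is a proper descendant of the root, so all light elements lie on one side of the first question. Collapsing $T_0$ to a single leaf $z$ gives a decision tree $T'$ over the $2^k$-element set formed by the heavy elements and $z$, with the same first question as $T$ (hence the same partition of the heavy elements). Let $\nu$ assign $p:=\tfrac{1-\delta}{2^k-1}$ to each heavy element and $\delta$ to $z$, and let $\lambda$ be the conditional distribution of $\mu$ on the light elements. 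Then $T(\mu)=T'(\nu)+\delta\,T_0(\lambda)$, while $\opt{\mu}\le\opt{\nu}+\delta\,\opt{\lambda}$ (place an optimal tree for $\lambda$ below the $z$-leaf of an optimal tree for $\nu$) and $T_0(\lambda)\ge\opt{\lambda}$; subtracting gives $T'(\nu)-\opt{\nu}\le T(\mu)-\opt{\mu}\le r$, so $T'$ is $r$-optimal for $\nu$. Contracting degree-one internal nodes only lowers the cost and leaves the first question unchanged (an $r$-optimal tree with $r<1$ cannot have a degenerate root), so we may take $T'$ full. It suffices to prove the claim for $T'$ and $\nu$.

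\emph{A lower bound on $T'(\nu)$.} First, $\opt{\nu}\le k$: the complete binary tree of depth $k$ on $2^k$ leaves has cost exactly $k$. Let $\phi(m):=m\lceil\log m\rceil+m-2^{\lceil\log m\rceil}$ be the minimum sum of leaf depths over binary trees with $m$ leaves. Suppose the first question of $T'$ sends $a$ heavy elements to the side containing $z$ and $b:=2^k-1-a$ heavy elements to the other side, with $b\ge1$ and $a\ge0$. On the $b$-side each heavy leaf has depth $1$ plus its depth inside that subtree, contributing at least $b+\phi(b)$ in total. On the $a$-side the subtree is a full binary tree on $a+1$ leaves, one being $z$; deleting $z$ and contracting its parent drops the leaves of $z$'s sibling subtree (at least one of them) by one level, so the heavy leaves there contribute at least $a+\phi(a)+1$. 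Since $a+b=2^k-1$, for $a\ge1$ this gives
\[
 T'(\nu)\ \ge\ p\bigl((a+b)+1+\phi(a)+\phi(b)\bigr)\ =\ p\bigl(2^k+\phi(a)+\phi(b)\bigr),
\]
while for $a=0$ one gets $T'(\nu)\ge p\bigl((2^k-1)+\phi(2^k-1)\bigr)=(1-\delta)(k+1)-p$, which for $k\ge2$ already exceeds $k+r$ when $\delta$ is small (and for $k=1$ the split $\{0,1\}$ is the balanced one, so there is nothing to prove).

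\emph{Optimizing over the split.} Since $\phi(m+1)-\phi(m)=\lceil\log(m+1)\rceil+1$ is non-decreasing, $\phi$ is convex, so $\phi(a)+\phi(b)$ subject to $a+b=2^k-1$ is minimized \emph{uniquely} at $\{a,b\}=\{2^{k-1}-1,\,2^{k-1}\}$, where it equals $(2^k-1)(k-1)-1$; every other split yields at least $(2^k-1)(k-1)$. Hence, if the first question does not partition the heavy elements into sizes $2^{k-1}$ and $2^{k-1}-1$, then
\[
 T'(\nu)\ \ge\ p\bigl(2^k+(2^k-1)(k-1)\bigr)\ =\ p\bigl((2^k-1)k+1\bigr)\ =\ (1-\delta)k+p,
\]
which is strictly larger than $k+r\ge\opt{\nu}+r$ as soon as $p>r+k\delta$; since $p=\tfrac{1-\delta}{2^k-1}>2^{-k}=r$, this holds whenever $\delta$ is sufficiently small (for instance $\delta<r^2/n$), contradicting the $r$-optimality of $T'$. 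The delicate point is exactly this final inequality: the balanced split beats its nearest competitor by only $p\approx2^{-k}$ in cost, barely above the allowed slack $r=2^{-k}$, so the argument must capture the exact ``$+1$'' contributed by the leaf $z$ occupies and must take $\delta$ genuinely smaller than $r^2$; the rest is routine bookkeeping with leaf-depth sums.
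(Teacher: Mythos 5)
Your proof is correct, and it takes a genuinely different route from the paper's. The paper argues by a local exchange: assuming the split is unbalanced, it locates an internal node $v$ of depth at least $k-1$ with two heavy descendants in the larger subtree and a heavy leaf of depth at most $k-2$ in the smaller subtree, and swaps them to save at least $\frac{1-\delta}{2^k-1} > r$, with a separate ad hoc argument for $k=2$. You instead collapse the light subtree to a pseudo-leaf $z$ (correctly transferring $r$-optimality to the reduced distribution $\nu$ on $2^k$ items), and then compare external path lengths globally via the closed form $\phi(m)=m\lceil\log m\rceil+m-2^{\lceil\log m\rceil}$ and its discrete convexity, showing every non-balanced split forces cost at least $\opt{\nu}+p$; I checked the bookkeeping, including the crucial ``$+1$'' contributed by $z$'s slot and the exact gap of $1$ in $\phi(a)+\phi(b)$ between the balanced split and its nearest competitor, and it is right. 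Your approach buys a uniform treatment of all $k\ge 2$ (no special case for $k=2$, and $a=0$ is dispatched by the same formula) and an exact quantification of how close the margin is to the allowed slack $r$; the cost is reliance on the closed form for the minimum external path length and a slightly more stringent requirement on $\delta$ (roughly $\delta\lesssim r^2/k$ rather than $\delta<r^2$), which is harmless since $\delta$ is a free parameter of the hard distribution and does not affect the counting step that follows.
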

\begin{proof}
When $k = 2$, it suffices to prove that an $r$-optimal decision tree cannot have a first question which separates the heavy elements from the light elements. Indeed, the heavy elements in such a tree reside at depths $2,3,3$. Exchanging one of the heavy elements at depth~$2$ with the subtree consisting of all light elements (which is at depth~$1$) decreases the cost by $\frac{1-\delta}{2^k-1} - \delta > r$, showing that the tree wasn't $r$-optimal.

Suppose that some $r$-optimal decision tree $T$ contradicts the statement of the claim, for some $k \geq 3$. The first question in $T$ leads to two subtrees $T_1,T_2$, one of which (say $T_1$) contains at least $2^{k-1}+1$ heavy elements, and the other (say $T_2$) contain at most $2^{k-1}-2$. One of the subtrees also contains a subtree $T'$ whose leaves are all the light elements. For the sake of the argument, we replace the subtree $T'$ with a new element $y$.

We claim that $T_1$ contains an internal node $v$ at depth $D(v) \geq k-1$ which has at least two heavy descendants. To see this, first remove $y$ if it is present in $T_1$, by replacing its parent by its sibling. The possibly modified tree $T'_1$ contains at least $2^{k-1}+1$ leaves, and in particular some leaf at depth at least $k$. Its parent $v$ has depth at least $k-1$ and at least two heavy descendants, in both $T'_1$ and $T_1$.

In contrast, $T_2$ contains at least two leaves (since $2^{k-1}-2 \geq 2$), and the two shallowest ones must have depth at most $k-2$. At least one of these is some heavy element $x_\ell$.

Exchanging~$v$ and $x_\ell$ results in a tree $T^*$ whose cost $c(T^*)$ is at most
\[
 c(T^*) \leq c(T) + (D(v) - D(x_\ell))(2 - 1)\frac{1-\delta}{2^k-1} \leq c(T^*) - \frac{1-\delta}{2^k-1} < c(T) - r,
\]
contradicting the assumption that $T$ is $r$-optimal. (That $\frac{1-\delta}{2^k-1} > r$ follows from the earlier assumption $\frac{1-\delta}{2^k-1} > \delta + r$.)
\end{proof}
By the above claims, there are two types of first questions for $\mu$, 
depending on which of the two subtrees of the root contains the light elements:
\begin{itemize}
\item Type 1: questions that split the elements into a part with $2^{k-1}$ elements, and a part with $n-2^{k-1}$ elements. 
\item Type 2: questions that split the elements into a part with $2^{k-1}-1$ elements, and a part with $n-(2^{k-1}-1)$ elements.
\end{itemize}
If we identify a question with its smaller part  (i.e.\ the part of size $2^{k-1}$ or the part of size $2^{k-1}-1$), we deduce that any set of questions with redundancy $r$ must contain a family $\cF$ such that (i) every set in $\cF$ has size $2^{k-1}$ or $2^{k-1}-1$, and (ii) for every set of size $n-(2^k-1)$, there exists some set in $\cF$ that is disjoint from it. It remains to show that any such family $\cF$ is large.

Indeed, there are $\binom{n}{2^k-1}$ sets of size $n-(2^k-1)$, and since every set in $\cF$ has size at least $2^{k-1}-1$,
it is disjoint from at most $\binom{n-(2^{k-1}-1)}{n-(2^k-1)}=\binom{n-(2^{k-1}-1)}{2^{k-1}}$ of them.
Thus
\[
|\cF| \geq \frac{\binom{n}{2^k-1}}{\binom{n-(2^{k-1}-1)}{2^{k-1}}}
=
\frac{n(n-1)\cdots(n-(2^{k-1}-1)+1)}{(2^k-1)(2^k-2)\cdots(2^{k-1}+1)}
\geq
\bigl(\frac{n}{2^k}\bigr)^{2^{k-1}-1}
=
(r\cdot n)^{\frac{1}{2r}-1}.
\]

\subsection{Upper bound} \label{sec:minimum-redundancy-approx-ub}
\paragraph{The set of questions.}
In order to describe the set of queries
it is convenient to assign a cyclic order on $X_n$: $x_1 \prec x_2 \prec \cdots \prec x_n \prec x_1 \prec \cdots$.
The set of questions $\cQ$ consists of all cyclic intervals, 
with up to $2^k$ elements added or removed.
Since $r = 4 \cdot 2^{-k}$, the number of questions is plainly at most
\[
 n^2 \binom{n}{2^k} 3^{2^k} \leq n^2 \bigl(\frac{3e}{4}r\cdot n\bigr)^{\frac{4}{r}},
\]
using the inequality $\binom{n}{d}\leq\bigl(\frac{en}{d}\bigr)^d$.

\paragraph{High level of the proof.}
Let $\pi$ be an arbitrary distribution on $X_n$, and let $r\in(0,1)$ be of the form $4\cdot 2^{-k}$, with $k\geq 3$. Let $\mu$ be a Huffman distribution for $\pi$; we remind the reader that $\mu$ is a dyadic distribution corresponding to some optimal decision tree for $\pi$. We construct a decision tree $T$ that uses only queries from $\cQ$, with cost 
\[ T(\pi)\leq \opt{\pi} + r + r^2 = \sum_{x \in X_n}{\pi(x)\log\frac{1}{\mu(x)}} + r + r^2. \]
The construction is randomized: we describe a randomized decision tree $T_R$ (`$R$' denotes the randomness that determines the tree) which uses queries from $\cQ$ and has the property that for every $x\in X_n$, the expected number of queries $T_R$ uses to find $x$ satisfies the inequality
\begin{equation}\label{eq4}
\E_R[T_R(x)] \leq \log \frac{1}{\mu(x)} + r + r^2,
\end{equation}
where $T_R(x)$ is the depth of $x$.
This implies the existence of a deterministic tree with cost
$\opt{\mu} + r + r^2$: indeed, when $x\sim\mu$, the expected cost of $T_R$
is 
\[ \E_{\substack{x \sim \pi; R}}[T_R(x)]\leq \sum_{x \in X_n}{\pi(x)\Bigl(\frac{1}{\mu(x)} + r + r^2\Bigr)} = \opt{\pi} + r + r^2. \]
Since the randomness of the tree is independent from the randomness of $\pi$,
it follows that there is a choice of $R$ such that the cost of the (deterministic) decision tree $T_R$
is at most $\opt{\pi} + r + r^2$.

\paragraph{The randomized decision tree.}
The randomized decision tree maintains a dyadic \emph{sub-distribution} $\mu^{(i)}$ that is being updated after each query. A \emph{dyadic sub-distribution} is a measure on $X_n$ such that (i) $\mu^{(i)}(x)$ is either 0 or a power of 2, and (ii) $\mu^{(i)}(X_n)=\sum_{x\in X_n}\mu^{(i)}(x)\leq 1$.
A natural interpretation of $\mu^{(i)}(x)$ 
is as a dyadic sub-estimate of the probability that 
$x$ is the \unknown element, conditioned on the answers to the first $i$ queries.
The analysis hinges on the following properties:
\begin{enumerate}
\item $\mu^{(0)}=\mu$, 
\item $\mu^{(i)}(x)\in\bigl\{2\mu^{(i-1)}(x),\mu^{(i-1)}(x),0\bigr\}$ for all $x\in X_n$, 
\item if $x$ is the \unknown element then almost always $\mu^{(i)}(x)$ is doubled; 
that is, $\mu^{(i)}(x)>0$ for all $i$, 
and the expected number of $i$'s for which $\mu^{(i)}(x)=\mu^{(i-1)}(x)$ is at most $r+r^2$.
\end{enumerate}
These properties imply~\eqref{eq4}, which implies Theorem~\ref{thm:minimum-redundancy-approx-ub}.

Next, we describe the randomized decision tree and establish these properties.

The algorithm distinguishes between {light} and {heavy} elements.
An element $x\in X_n$ is \emph{light} if $\mu^{(i)}(x) < 2^{-k}$.
Otherwise it is \emph{heavy}. The algorithm is based on the following win-win-win situation:

(i) If the total mass of the heavy elements is at least $1/2$
then by Lemma~\ref{lem:neat-sum}, there is a set $I$ of heavy elements whose mass is exactly $1/2$.
Since the number of heavy elements is at most $2^k$, the algorithm can ask whether $x \in I$ and recurse by doubling
the sub-probabilities of the elements that are consistent with the answer (and setting the others to zero).

(ii) Otherwise, the mass of the heavy elements is less than $1/2$.
If the mass of the light elements is also less than $1/2$ 
(this could happen since $\mu^{(i)}$ is a sub-distribution),
then we ask whether $x$ is a heavy element or a light element,
and accordingly recurse with either the heavy or the light elements, with their
sub-probabilities doubled (in this case the ``true'' probabilities conditioned on the answers
 become larger than the sub-probabilities).

(iii) The final case is when the mass of the light elements is larger than $1/2$.
In this case we query a random cyclic interval of light elements of mass $\approx 1/2$,
and recurse; there are two light elements in the recursion whose sub-probability
is not doubled (the probabilities of the rest are doubled).

Elements whose probability is not doubled occur only in case~(iii).

\paragraph{The randomized decision tree: formal description.}

The algorithm gets as input a subset $y_1,\ldots,y_m$ of $X_n$ whose order is induced by that of $X_n$, and a dyadic sub-distribution $q_1,\ldots,q_m$. Initially, the input is $x_1,\ldots,x_n$, and $q_i = \mu_i$.

We say that an element is \emph{heavy} is $q_i \geq 2^{-k}$; otherwise it is \emph{light}. There are at most $2^k$ heavy elements. The questions asked by the algorithm are cyclic intervals in $y_1,\ldots,y_m$, with some heavy elements added or removed. Since each cyclic interval in $y_1,\ldots,y_m$ corresponds to a (not necessarily unique) cyclic interval in $X_n$ (possibly including elements outside of $y_1,\ldots,y_m$), these questions belong to $\cQ$.

\begin{algorithm-description}{$T_R$}
\begin{enumerate}
\item If $m = 1$, return $y_1$. Otherwise, continue to Step~2.
\item If the total mass of heavy elements is at least $1/2$ then find (using Lemma~\ref{lem:neat-sum}) a subset $I$ whose mass is exactly $1/2$, and ask whether $x \in I$. Recurse with either $\{ 2q_i : y_i \in I \}$ or $\{ 2q_i : y_i \notin I \}$, according to the answer. Otherwise, continue to Step~3.
\item Let $S$ be the set of all light elements, and let $\sigma$ be their total mass. If $\sigma \leq 1/2$ then ask whether $x \in S$, and recurse with either $\{ 2q_i : y_i \in S \}$ or $\{ 2q_i : y_i \notin S \}$, according to the answer. Otherwise, continue to Step~4.
%(In the latter case, $\{2q_i : i\notin S\}$ is a dyadic sub-probability since there are at most $2^{-k}$ heavy elements, and their sum is less than $1/2$ by assumption.)
\item Arrange all light elements according to their cyclic order on a circle of circumference $\sigma$, by assigning each light element $x_i$ an arc $A_i$ of length $q_i$ of the circle.  Pick an arc of length $1/2$ uniformly at random (e.g.\ by picking uniformly a point on the circle and taking an arc of length $1/2$ directed clockwise from it), which we call the \emph{\halfcircle}. Let $K \subseteq S$ consist of all light elements whose midpoints are contained in the \halfcircle, and let $B$ consist of the light elements whose arcs are cut by the boundary of the \halfcircle (so $|B| \leq 2$); we call these elements \emph{boundary elements}. Ask whether $x \in K$; note that $K$ is a cyclic interval in $y_1,\ldots,y_m$ with some heavy elements removed.

If $x \in K$, recurse with $\{2q_i : y_i \in K \setminus B\} \cup \{q_i : y_i \in K \cap B\}$. The sum of these dyadic probabilities is at most~$1$ since the \halfcircle contains at least $q_i/2$ of the arc $A_i$ for each $y_i \in K \cap B$.

If $x \notin K$, recurse with $\{2q_i : y_i \in \overline{K} \setminus B \} \cup \{q_i : y_i \in \overline{K} \cap B\}$. As in the preceding case, the total mass of light elements in the recursion is at most $2(\sigma - 1/2)$ (since the complement of the \halfcircle contains at least $q_i/2$ of the arc $A_i$ for each $y_i \in \overline{K} \cap B$), and the total mass of heavy elements is $2(1-\sigma)$, for a total of at most $(2\sigma-1)+(2-2\sigma) = 1$. \eofahere
\end{enumerate}
\end{algorithm-description}

\paragraph{Analysis.}
We now finish the proof by establishing the three properties of the randomized decision tree that
are stated above. The first two properties follow immediately from the description of the algorithm,
and it thus remains to establish the third property.
Fix some $x\in X_n$, and let $d\in\mathbb{N}$ be such that $\mu(x)=2^{-d}$.
We need to show that the expected number of questions that are asked when the \unknown element is $x$
is at most $d + r + r^2$. 

Let $q=q^{(i)}$ denote the sub-probability of $x$ after the $i$'th question; note that $q\in\{2^{-j} : j\leq d\}$.
\begin{lemma}\label{l721}
If $q \geq 2^{-k}$ then $q$ doubles (that is, $q^{(i+1)} = 2q^{(i)}$).
Otherwise, the expected number of questions
until $q$ doubles is at most
$\frac{1}{1-4q}$.
\end{lemma}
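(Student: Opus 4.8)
The plan is to follow the sub-probability $q=q^{(i)}$ of the \unknown element $x$ through one application of the recursive step of $T_R$, by a case analysis on which of Steps~2--4 is executed, and then iterate. \emph{First clause ($q\ge 2^{-k}$).} In this regime $x$ is a heavy element of the current sub-distribution, and it survives every step with its probability doubled: in Step~2 the \unknown element is always consistent with the answer to ``$x\in I$?'', so it ends up in $\{2q_i:y_i\in I\}$ or in $\{2q_i:y_i\notin I\}$; in Step~3, being heavy, $x$ lies outside the set $S$ of light elements, hence in $\{2q_i:y_i\notin S\}$; and in Step~4 we have $x\notin K$ (since $K\subseteq S$) and $x\notin B$ (boundary elements are light), so $x\in\overline K\setminus B$. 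Step~1 merely returns $y_1$ and asks no question. Thus whenever a further question is asked we have $q^{(i+1)}=2q^{(i)}$, which is the first clause.

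\emph{Second clause ($q<2^{-k}$).} Now $x$ is light, so $x\in S$. In Steps~2 and~3 its probability still doubles (in Step~3 we recurse with $\{2q_i:y_i\in S\}$, which contains $x$); the only step in which $q$ can fail to double is Step~4, and this happens precisely when $x$ is a boundary element, i.e.\ $x\in B$, for otherwise $x$ lies in $K\setminus B$ or in $\overline K\setminus B$ and doubles. Moreover, when $q$ does not double it stays equal to $q$, because the \unknown element is always consistent with the answers and so is never dropped. It remains to show $\Pr[x\in B]\le 4q$. The \halfcircle is an arc of length $1/2$ on a circle of circumference $\sigma$, obtained by picking one uniformly random point and extending clockwise; its two endpoints (the chosen point and the point at clockwise distance $1/2$ from it) are each individually uniform on the circle. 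The arc $A_x$, of length $q$, is cut by the boundary of the \halfcircle exactly when one of these two endpoints falls inside $A_x$, so a union bound gives $\Pr[x\in B]\le 2q/\sigma$; since Step~4 is reached only when $\sigma>1/2$, this is at most $4q$.

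\emph{Iteration.} Putting the pieces together: conditioned on any history in which the current sub-probability of $x$ is $q<2^{-k}$, the next question doubles $q$ with probability at least $1-4q$ (certainly $1$ in Steps~2--3, at least $1-4q$ in Step~4) and otherwise leaves it at $q$; if the algorithm terminates before any doubling, the count only shrinks. Hence, writing $N$ for the number of questions until $q$ first doubles (or the algorithm halts), $\Pr[N>t\mid N>t-1]\le 4q$ for every $t$, so $\Pr[N>t]\le(4q)^t$. As $4q<4\cdot 2^{-k}\le 1/2$ when $k\ge 3$, summing the tail gives $\E[N]=\sum_{t\ge 0}\Pr[N>t]\le\frac{1}{1-4q}$, which is the second clause.

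\emph{Main obstacle.} The one genuinely delicate point is the estimate $\Pr[x\in B]\le 4q$: one must be careful that each endpoint of the \halfcircle is marginally uniform (the two endpoints are \emph{not} independent of each other), that ``$x$ is a boundary element'' is equivalent to ``one of the two endpoints lies in $A_x$'', and that indeed $\sigma>1/2$ whenever Step~4 runs, so that $2q/\sigma\le 4q$. Everything else is a routine case analysis of the algorithm together with the standard ``stochastically dominated by a geometric random variable'' argument.
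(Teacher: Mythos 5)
Your proof is correct and follows essentially the same route as the paper's: identify the boundary-element event in Step~4 as the only way $q$ fails to double, bound its probability by $2q/\sigma \leq 4q$ using the marginal uniformity of each endpoint of the window and $\sigma \geq 1/2$, and conclude by domination by a geometric distribution with failure probability $4q$. Your write-up is simply a more detailed version of the paper's argument, with the case analysis of Steps~1--4 made explicit.
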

\begin{proof}
From the description of the algorithm, it is clear that the only case in which
the sub-probability of $x$ is not doubled is when $x$ is one of the two boundary elements in Step~4.
This only happens when $x$ is a light element (i.e.\ $q < 2^{-k}$).
The probability that $x$ is one of the boundary elements 
is at most $2q/\sigma\leq 4q$, where $\sigma\geq1/2$ is the total mass of light elements:
indeed, the probability that a given endpoint of the \halfcircle lies inside the arc corresponding to $q$ is $q/\sigma$, since each endpoint is distributed uniformly on the circle of circumference $\sigma$.

It follows that the distribution of the number of questions that pass until $q$
doubles is dominated by the geometric distribution with failure probability $4q$, and so the expected number of questions until $q$ doubles is at most $\frac{1}{1-4q}$.
\end{proof}

The desired bound on the expected number of questions needed to find $x$ follows from Lemma~\ref{l721}:
as long as $q$, the sub-probability associated with $x$, is smaller than $2^{-k}$, it takes
an expected number of $\frac{1}{1-4q}$ questions until it doubles.
Once $q\geq 2^{-k}$, it doubles after every question. Thus, by linearity of expectation,
the expected total number of questions is at most:
	\begin{align*}
		k + \sum_{j=k+1}^d \frac{1}{1-4\cdot2^{-j}}  
		&< 
		k + \sum_{j=k+1}^d [{1+ 4\cdot 2^{-j} +  2(4\cdot 2^{-j})^2}] \\
		& = d + \sum_{j=k+1}^d [{4\cdot 2^{-j} +  2(4\cdot 2^{-j})^2}] \\
		&< d + 4\cdot 2^{-k} + \frac{2}{3} (4\cdot 2^{-k})^2\\
		&< \log \frac{1}{\mu(x)} + r + r^2.
	\end{align*}
	
\section{Open questions} \label{sec:open-questions}

Our work suggests many open questions, some of which are:

\begin{enumerate}
 \item The main results of Section~\ref{sec:huffman} show that when $n = 5\cdot 2^m$, $\uhuf(n,0) = 1.25^{n \pm o(n)}$. We conjecture that there exists a function $G\colon [1,2] \to \mathbb{R}$ such that for $n = \alpha 2^m$, $\uhuf(n,0) = G(\alpha)^{n \pm o(n)}$. Our results show that $1.232 \leq G(\alpha) \leq 1.25$ and that $G(1.25) = 1.25$. What is the function $G$?
 \item Theorem~\ref{thm:minimum-redundancy-ub} constructs an optimal set of questions of size $1.25^{n + o(n)}$, but this set is not explicit. In contrast, Theorem~\ref{thm:cone} constructs explicitly an optimal set of questions of size $O(\sqrt{2}^n)$, which furthermore supports efficient indexing and efficient construction of optimal strategies. Can we construct such an explicit set of optimal size $1.25^{n + o(n)}$?
 \item The results of Section~\ref{sec:comparison-equality} show that $n \leq \uent(n,1) \leq 2n-3$. We conjecture that the limit $\beta = \lim_{n\to\infty} \frac{\uent(n,1)}{n}$ exists. What is the value of $\beta$?
\end{enumerate}

An interesting suggestion for future research is to generalize the entire theory to $d$-way questions.

\bibliographystyle{plain}
\bibliography{entropy}

\begin{thebibliography}{10}

\bibitem{AhlswedeWegener}
Rudolf Ahlswede and Ingo Wegener.
\newblock {\em Search problems}.
\newblock John Wiley \& Sons, Inc., New York, 1987.

\bibitem{AD}
Javad~A. Aslam and Aditi Dhagat.
\newblock Searching in the presence of linearly bounded errors.
\newblock In {\em Proceedings of the twenty-third annual ACM symposium on
  Theory of computing (STOC '91)}, pages 486--493, 1991.

\bibitem{ACD}
Harout Aydinian, Ferdinando Cicalese, and Christian Deppe, editors.
\newblock {\em Information Theory, Combinatorics, and Search Theory}.
\newblock Springer-Verlag Berlin Heidelberg, 2013.

\bibitem{Boyd}
David~W. Boyd.
\newblock The asymptotic number of solutions of a diophantine equation from
  coding theory.
\newblock {\em Journal of Combinatorial Theory, Series A}, 18:210--215, 1975.

\bibitem{CGT}
Renato~M. Capocelli, Raffaele Giancarlo, and Indeer~Jeet Taneja.
\newblock Bounds on the redundancy of {H}uffman codes.
\newblock {\em IEEE Transactions on Information Theory}, IT-32(6):854--857,
  1986.

\bibitem{CRZ08}
G\'erard Cohen, Hugues Randriam, and Gilles Z\'emor.
\newblock Witness sets.
\newblock In {\em Coding Theory and Applications (ICMCTA 2008)}, volume 5228 of
  {\em LNCS}. Springer, 2008.

\bibitem{DBLP:books/daglib/0016881}
Thomas~M. Cover and Joy~A. Thomas.
\newblock {\em Elements of information theory {(2.} ed.)}.
\newblock Wiley, 2006.

\bibitem{DaganThesis}
Yuval Dagan.
\newblock Twenty questions game using restricted sets of questions.
\newblock Master's thesis, 2017.

\bibitem{DGW}
Aditi Dhagat, Peter G\'acs, and Peter Winkler.
\newblock On playing ``twenty questions'' with a liar.
\newblock In {\em Proceedings of 3rd Symposium on Discrete Algorithms
  (SODA'92)}, pages 16--22, 1992.

\bibitem{Dorfman}
Robert Dorfman.
\newblock The detection of defective members of large populations.
\newblock {\em The Annals of Mathematical Statistics}, 14(4):436--440, 1943.

\bibitem{DuHwang}
Ding-Zhu Du and Frank~K. Hwang.
\newblock {\em Combinatorial Group Testing and Its Applications}, volume~12 of
  {\em Series on Applied Mathematics}.
\newblock World Scientific, 2nd edition, 1999.

\bibitem{F76}
Michael~L. Fredman.
\newblock How good is the information theory bound in sorting?
\newblock {\em Theoretical Computer Science}, 1(4):355--361, 1976.

\bibitem{Gallager}
Robert~G. Gallager.
\newblock Variations on a theme by {H}uffman.
\newblock {\em IEEE Transactions on Information Theory}, IT-24(6):668--674,
  1978.

\bibitem{GilbertMoore}
E.~N. Gilbert and E.~F. Moore.
\newblock Variable-length binary encodings.
\newblock {\em Bell System Technical Journal}, 38:933--967, 1959.

\bibitem{Horibe}
Yasuichi Horibe.
\newblock An improved bound for weight-balanced tree.
\newblock {\em Information and Control}, 34(2):148--151, 1977.

\bibitem{Huffman}
David~A. Huffman.
\newblock A method for the construction of minimum-redundancy codes.
\newblock In {\em Proceedings of the I.R.E.}, pages 1098--1103, 1952.

\bibitem{Johnsen}
Ottar Johnsen.
\newblock On the redundancy of binary {H}uffman codes.
\newblock {\em IEEE Transactions on Information Theory}, IT-26(2):220--222,
  1980.

\bibitem{Katona}
Gyula O.~H. Katona.
\newblock Combinatorial search problems.
\newblock In J.~N.~Srivastava et~al., editor, {\em A Survery of Combinatorial
  Theory}. North-Holland Publishing Company, 1973.

\bibitem{KrennWagner}
Daniel Krenn and Stephan Wagner.
\newblock Compositions into powers of {$b$}: asymptotic enumeration and
  parameters.
\newblock {\em Algorithmica}, 75(4):606--631, August 2016.

\bibitem{LoncRival}
Zbigniew Lonc and Ivan Rival.
\newblock Chains, antichains, and fibres.
\newblock {\em Journal of Combinatorial Theory, Series A}, 44:207--228, 1987.

\bibitem{Manstetten}
Dietrich Manstetten.
\newblock Tight bounds on the redundancy of {H}uffman codes.
\newblock {\em IEEE Transactions on Information Theory}, IT-38(1):144--151,
  1992.

\bibitem{Meshulam}
Roy Meshulam.
\newblock On families of faces in discrete cubes.
\newblock {\em Graphs and Combinatorics}, 8:287--289, 1992.

\bibitem{MPK}
Soheil Mohajer, Payam Pakzad, and Ali Kakhbod.
\newblock Tight bounds on the redundancy of {H}uffman codes.
\newblock In {\em Information Theory Workshop (ITW '06)}, pages 131--135, 2006.

\bibitem{MontgomeryAbrahams}
Bruce~L. Montgomery and Julia Abrahams.
\newblock On the redundancy of optimal binary prefix-condition codes for finite
  and infinite sources.
\newblock {\em IEEE Transactions on Information Theory}, IT-33(1):156--160,
  1987.

\bibitem{MY13}
Shay Moran and Amir Yehudayoff.
\newblock A note on average-case sorting.
\newblock {\em Order}, 33(1):23--28, 2016.

\bibitem{Nakatsu}
Narao Nakatsu.
\newblock Bounds on the redundancy of binary alphabetical codes.
\newblock {\em IEEE Transactions on Information Theory}, IT-37(4):1225--1229,
  1991.

\bibitem{Rissanen}
Jorma Rissanen.
\newblock Bounds for weight balanced trees.
\newblock {\em IBM Journal of Research and Development}, 17:101--105, 1973.

\bibitem{RMKWS}
Ronald~L. Rivest, Albert~R. Meyer, Daniel~J. Kleitman, Karl Winklmann, and Joel
  Spencer.
\newblock Coping with errors in binary search procedures.
\newblock {\em Journal of Computer and System Sciences}, 20:396--404, 1980.

\bibitem{SW}
Joel Spencer and Peter Winkler.
\newblock Three thresholds for a liar.
\newblock {\em Combinatorics, Probability and Computing}, 1(1):81--93, 1992.

\bibitem{Spuler}
David Spuler.
\newblock Optimal search trees using two-way key comparisons.
\newblock {\em Acta Informatica}, 31:729--740, 1994.

\bibitem{TCS14476}
Neal Young.
\newblock Reverse {C}hernoff bound.
\newblock Theoretical Computer Science Stack Exchange, 2012.
\newblock URL:http://cstheory.stackexchange.com/q/14476 (version: 2012-11-26).

\end{thebibliography}

\end{document}